\tikzstyle arrowstyle=[scale=1]
\def\@settitle{\begin{center}%
		\bfseries\Large
		\@title
	\end{center}%
}
\patchcmd{\@setauthors}{\MakeUppercase}{\normalsize}{}{}
\theoremstyle{plain}
\newtheorem{theorem}{Theorem}[section]
\newtheorem{lemma}[theorem]{Lemma}
\newtheorem{claim}[theorem]{Claim}
\newtheorem{proposition}[theorem]{Proposition}
\newtheorem{definition}[theorem]{Definition}
\newtheorem{hypothesis}[theorem]{Hypothesis}
\newcommand{\F}{\mathbb{F}}
\newcommand{\NN}{\mathbb{N}}
\newcommand{\pr}{\mathbb{P}}
\newcommand{\E}[0]{\mathbb{E}}
\newcommand{\beq}[1]{\begin{equation}\label{#1}}
\newcommand{\enq}[0]{\end{equation}}
\newcommand{\D}[0]{{\mathcal D}}
\renewcommand{\U}[0]{{\mathcal U}}
\newcommand{\W}[0]{{\mathcal W}}
\renewcommand{\C}[2]{{\binom{#1}{#2}}}
\renewcommand{\C}{\mathcal{C}}
\renewcommand{\U}{\mathcal{U}}
\newcommand{\maxcover}{\mbox{\sf MaxCover}\xspace}
\newcommand{\clique}{$k$-$\mathsf{Clique}$\xspace}
\newcommand{\mincover}{\texorpdfstring{$k$-$\mathsf{MinCoverage}$}{k-MinCoverage}\xspace}
\newcommand{\intersection}{\texorpdfstring{$k$\nobreakdash-$\mathsf{SetIntersection}$}{k-SetIntersection}\xspace}
\newcommand{\setcover}{$\mathsf{SetCover}$\xspace}
\newcommand{\CNF}{$\mathsf{CNF}$\xspace}
\newcommand{\SAT}{$\mathsf{SAT}$\xspace}
\newcommand{\Wone}{$\mathsf{W[1]}\neq\mathsf{FPT}$\xspace}
\newcommand\ETH{\texorpdfstring{$\mathsf{ETH}$}{ETH}\xspace}
\newcommand\SETH{\texorpdfstring{$\mathsf{SETH}$}{SETH}\xspace}
\newcommand\polylog{\text{polylog}}
\newcommand\poly{\text{poly}}
\newcommand{\NP}{$\mathsf{NP}$\xspace}
\newcommand{\TGC}{$\mathsf{TGC}$\xspace}
\newcommand{\PGC}{$\mathsf{PGC}$\xspace}
\renewcommand{\W}{\textsf{W}}
\newcommand{\FPT}{$\mathsf{FPT}$\xspace}
\DeclarePairedDelimiter\abs{\lvert}{\rvert}                      
\DeclareMathAlphabet\mathbfcal{OMS}{cmsy}{b}{n}
\renewcommand*{\k}{\mathbb{F}_q}
\newcommand*{\kbar}{\overline{\mathbb{F}}_q}
\newcommand*{\A}{\mathbb{A}}
\newcommand*{\vy}{\mathbf{y}}
\newcommand*{\vf}{\mathbf{f}}
\newcommand*{\vr}{\vec{r}}
\newcommand{\oFF}{\overline{\F}}
\renewcommand{\P}{$\mathsf{P}$\xspace}
\title{Applications of Random Algebraic Constructions to\\
Hardness of Approximation}
\author{Boris Bukh}
\address{Department of Mathematical Sciences,	Carnegie Mellon University, Pittsburgh, PA 15213, USA}
\email{bbukh@math.cmu.edu}
\author{Karthik C.\ S.}
\address{Department of Computer Science, Rutgers University, Piscataway, NJ 08854, USA}
\email{karthik.cs@rutgers.edu}
\author{Bhargav Narayanan}
\address{Department of Mathematics, Rutgers University, Piscataway, NJ 08854, USA}
\email{narayanan@math.rutgers.edu}
\begin{document}

\maketitle\thispagestyle{empty}

\begin{abstract}
    In this paper, we show how one may (efficiently) construct two types of extremal combinatorial objects whose existence was previously conjectural.
    \begin{itemize}
        \item Panchromatic Graphs: For fixed $k\in\mathbb N$, a $k$-panchromatic graph is, roughly speaking, a balanced bipartite graph with one partition class equipartitioned into $k$ colour classes in which the common neighbourhoods of panchromatic $k$-sets of vertices are much larger than those of $k$-sets that repeat a colour. The question of their existence was raised by Karthik and Manurangsi~[\emph{Combinatorica}~2020]. 
        \item Threshold Graphs: For fixed $k\in\mathbb N$, a $k$-threshold graph is, roughly speaking, a balanced bipartite graph in which the common neighbourhoods of $k$-sets of vertices on one side are much larger than those of $(k+1)$-sets. The question of their existence was raised by Lin~[\emph{JACM}~2018].
    \end{itemize}

Concretely, we provide probability distributions over graphs from which we can efficiently sample these objects in near linear time. These probability distributions are defined via varieties cut out by (carefully chosen) random polynomials, and the analysis of these constructions relies on machinery from algebraic geometry (such as the Lang--Weil estimate, for example). The technical tools developed to accomplish this might be of independent interest.

As applications of our constructions, we show the following conditional time lower bounds on the parameterized set intersection problem where, given a collection of $n$ sets over universe $[n]$ and a parameter $k$, the goal is to find $k$ sets with the largest intersection.
\begin{itemize}
    \item Assuming \ETH, for any computable function $F\colon\mathbb{N}\to\mathbb{N}$, no $n^{o(k)}$-time algorithm can approximate the parameterized set intersection problem up to factor $F(k)$. This improves considerably on the previously best-known result under \ETH due to Lin~[\emph{JACM}~2018], who ruled out any $n^{o(\sqrt{k})}$ time approximation algorithm for this problem.
    \item Assuming \SETH, for every $\varepsilon>0$ and any computable function $F\colon\mathbb{N}\to\mathbb{N}$, no $n^{k-\varepsilon}$-time algorithm can approximate the parameterized set intersection problem up to factor $F(k)$. No result of comparable strength was previously known under \SETH, even for solving this problem exactly. 
\end{itemize}
Both these time lower bounds are obtained by composing panchromatic graphs with instances of the coloured variant of the parameterized set intersection problem (for which tight lower bounds were previously known). 

\end{abstract}

\clearpage 

\section{Introduction}
Over the last five decades, a symbiotic relationship has developed between the areas of  extremal combinatorics and complexity theory (broadly construed); see the wonderful book of Jukna~\cite{J11} or one of the surveys of Alon~\cite{A03,A08,A16,A20} for various applications of extremal combinatorial objects to proving lower bounds in theoretical computer science. In particular, this synergistic exchange with extremal combinatorics can be explicitly seen in subareas such as  circuit/formula lower bounds~\cite{BGKRSW96,JuknaS13}, communication complexity~\cite{CFL83,KN97,GKR16}, error correcting codes~\cite{S96,ABV01,GUV09}, and derandomization~\cite{AGHP92,NSS95,C16,CZ19}. 

In this paper, our first goal is to prove the existence of certain extremal bipartite graphs, namely threshold graphs and panchromatic graphs. The question of their existence was motivated by applications in hardness of approximation, and our second goal is to prove, using these graphs, conditional time lower bounds on the parameterized set intersection problem. Our constructions will rely crucially on random polynomials, and our third goal here is to prove various results, likely of independent interest, about the common zeroes of random polynomials over finite fields. Before we can state our results, it will help to have some background, to which we now turn.


Over the last few years, a new area in theoretical computer science, namely \emph{hardness of approximation in \P},  has benefited significantly from some of the deep results in extremal combinatorics. Hardness of approximation in \P, roughly speaking, maybe treated as the union of two subareas, namely, hardness of approximation in \emph{parameterized complexity}\footnote{We only consider the computational problems contained in the complexity class \textsf{XP} while making this statement and also think of the parameter as fixed/constant.}  and hardness of approximation  in \emph{fine-grained complexity}. 

In parameterized complexity, one studies the computational complexity of problems  with respect to multiple parameters of the input or output. For example, in the \intersection problem, we are given a collection of $n$ sets over the universe $[n]$ and a parameter $k$ as input, and the goal is to find $k$ sets in the collection which maximize the intersection size. A problem (with inputs of size $n$, along with a parameter $k$) is said to be fixed parameter tractable if it can be solved by an algorithm  running in time $T(k)\cdot \poly(n)$ for some computable function $T$. In many interesting cases, including for the \intersection\ problem, assuming the \W[1]$\neq$\FPT\ hypothesis, it is possible to show that no such algorithm exists i.e., that the problem is not  fixed parameter tractable. In light of this, one could then ask for approximation algorithms. In the case of \intersection, the task would then be to design an approximation algorithm running in time  $T(k)\cdot \poly(n)$ that can find $k$ sets in the collection whose intersection size is at least $1/F(k)$ of the intersection size of the optimal solution for some pair of computable functions $T$ and $F$. Inapproximability results in parameterized complexity aim to typically rule out such algorithms (under the \W[1]$\neq$\FPT\ hypothesis) for various classes of functions $F$; a notion particularly relevant to this paper is that of \emph{total \FPT inapproximability}, in which we rule out $F(k)$-approximation algorithms running in $T(k)\cdot \poly(n)$ time for all computable functions $T$ and $F$. We refer the reader to the textbooks~\cite{DowneyF13,CyganFKLMPPS15} for an excellent introduction to the area.

In fine-grained complexity, one aims to refine the Cobham--Edmonds thesis~\cite{edmonds_1965,cook_1970} by trying to understand the  exact time required to solve problems in \P, by basing their conditional time lower bounds  on several plausible (and popular) conjectures such as \SETH and \ETH (see Section~\ref{sec:prelim} for definitions). For example, \intersection can be na\"ively solved by exhaustive search, i.e., by computing the intersection sizes of all $k$-tuples of sets from the given collection of $n$ sets; can we do any better? For instance, is there an algorithm running in time $n^{o(k)}$ that can solve \intersection? Or even less ambitiously, is there an algorithm running in time $n^{k-0.1}$ that can solve \intersection? The theory of fine-grained complexity aims to rule out such algorithms, and inapproximability results in this area aim to prove the same conditional time lower bounds, but now against approximation algorithms. We should emphasise that the area of fine-grained complexity is not simply about proving tighter running time lower bounds for problems considered in parameterized complexity; fine-grained complexity has been successful in explaining the complexity of problems such as closest pair in a point-set~\cite{AlmanW15,R18,DKL19,KM20}, edit distance between strings~\cite{BI18,AbboudHWW16}, and  all pairs shortest paths~\cite{WilliamsW18}, amongst others, all examples of problems usually considered without any fixed parameters. We direct the interested reader to two recent surveys~\cite{RW19, FKLM20} on hardness of approximation in \P for a detailed overview of the area.

A major difficulty addressed by results in hardness of approximation in \P is that of generating a gap\footnote{There are many results in parameterized and fine-grained inapproximability under gap assumptions such as the Gap Exponential Time Hypothesis~\cite{MR16,D16} and Parameterized Inapproximability Hypothesis~\cite{LRSZ20}. In these results the gap is inherent in the assumption, and the challenge is to construct gap-preserving reductions. These results are not the focus of this paper and we shall not elaborate further on them, and the interested reader may see the recent survey~\cite{FKLM20} for more details.}, i.e.,  one must start with a hard problem with no gap (for which the time lower bound is only against exact algorithms) and reduce it to a problem of interest while generating a non-trivial gap in the process. One of the main approaches to generate the aforementioned gap, and the motivation behind our construction of threshold graphs, 
is the \emph{Threshold Graph Composition} (\TGC) framework introduced in the breakthrough work of Lin~\cite{Lin18} to show the total \FPT inapproximability of the \intersection problem. This technique was later used to prove the first non-trivial inapproximability result for the $k$-\setcover problem~\cite{CL19}, and in the proof of the current state-of-the-art inapproximability result for the same~\cite{Lin19}. Moreover, the result on the \intersection problem in~\cite{Lin18} was used by Bhattacharyya et al.~\cite{BBEGKLMM21} as the starting point to prove inapproximability results for problems in coding theory such as  the $k$-\textsf{Minimum Distance} problem (a.k.a.\ $k$-\textsf{Even Set} problem) and  the $k$-\textsf{Nearest Codeword}  problem, and for lattice problems such as the $k$-\textsf{Shortest Vector} problem and the  $k$-\textsf{Nearest Vector} problem. 

At a very high level, in \TGC, we compose an instance of the input problem that has no gap, with an extremal combinatorial object called a \emph{threshold graph} (see Section~\ref{sec:intrographs} for definitions), to produce a gap instance of the desired problem. The two main challenges in using this framework are to construct the requisite threshold graph, and to find the right way to compose the input and the threshold graph. Our construction of threshold graphs will address the first of these challenges.

Another key issue that often arises in proving conditional time lower bounds for problems in \P\  is the following. When trying to prove time lower bounds for a particular problem, it is often  natural (and sometimes seemingly necessary) to first prove the lower bound for a coloured version of the same problem, and then reduce it to the uncoloured version of the problem. For instance, if we would like to prove lower bounds based on \SETH for a  problem $\Psi$, then it is almost always the case that we first divide the variable set of size $n$ (of the \SAT formula arising from the \SETH assumption) into $k$ equal parts and reduce the problem of deciding \SAT to a problem in \P\ where, given as input $k$ collections each containing $2^{n/k}$ partial assignments to the subset of $n/k$ variables in that part, we would like to find one partial assignment from each collection that, when stitched together, forms a full \emph{satisfying} assignment to the original \SAT instance. From this problem (in \P), if we would like to reduce to $\Psi$, it is often convenient (and sometimes imperative) to first reduce to a $k$-coloured version of $\Psi$, and then reduce this coloured version to $\Psi$ itself. This final task is sometimes easy, such as for problems like $k$-\setcover or $k$-\textsf{OrthogonalVectors}, but often non-trivial, such as for \intersection or closest pair in a point-set. It is worth reiterating here that in the other direction, reducing the uncoloured problem to its coloured version is almost always easy; typically, one can reduce the uncoloured variant to its coloured counterpart via the celebrated colour coding technique of Alon, Yuster and Zwick~\cite{alon1995colour}. 

In~\cite{DKL19,KM20}, the authors proposed the \emph{Panchromatic Graph Composition} (\PGC) framework to  address this issue, and this serves as the motivation behind our construction of panchromatic graphs (see Section~\ref{sec:intrographs} for definitions). In particular, they outlined how these panchromatic graphs, assuming that they exist, can be composed with the coloured version of a problem to reduce it to the uncoloured version of the same problem. Also, it is worth noting that the same issue arises in proving time lower bounds against approximation algorithms as well, i.e., it is often easier to prove hardness of approximation results for coloured versions of problems than for their uncoloured counterparts. With this in mind, it is desirable to have panchromatic graphs with certain additional gap properties so that we can design gap preserving reductions between problems.  Our construction of panchromatic graphs will address all of these challenges.

In summary, the role of extremal combinatorial objects in the existing literature on hardness of approximation in \P\ is twofold: threshold graphs are used in the \TGC framework to generate gaps in hard problem instances, and panchromatic graphs are used in the \PGC framework to reduce hard instances of coloured variants of various computational problems to their uncoloured (computationally easier) counterparts.

\subsection{Our Contributions}
Our contributions are primarily twofold. First, in Section~\ref{sec:intrographs}, we show how to efficiently construct threshold graphs and panchromatic graphs; even the existence of such graphs was previously conjectural. Second, in Section~\ref{sec:introintersection}, we demonstrate some applications of these graphs (with panchromatic graphs featuring more prominently) to prove \emph{tight} conditional time lower bounds under \ETH and \SETH for approximating \intersection. Finally, in Section~\ref{sec:introbig} we briefly detail how our results fit into the bigger picture of hardness of approximation in \P. 

\subsubsection{Constructions of Panchromatic and Threshold Graphs}\label{sec:intrographs}
Here, we describe our main combinatorial results that demonstrate the existence of the aforementioned extremal bipartite graphs. 

We start with panchromatic graphs. 

\begin{definition}[Panchromatic Graphs; Informal version of Definition~\ref{def:pan}]
An \emph{$(n,k,t,s)$-panchromatic graph} is a bipartite graph $G(A,B)$ where $A$ is partitioned into $k$ parts, say $A_1,\ldots,A_k$, with $|A_1| = \dots =|A_k|=|B|=n$ satisfying the following pair of conditions.
\begin{description}	
\item[Completeness]
Every $k$-set $\{a_1,\dots,a_k\}$ with $a_i \in A_i$ for $i \in [k]$ has at most $t$ common neighbours in $B$, and a positive fraction (depending only on $k$)  of such $k$-sets have exactly $t$ common neighbours in $B$.
\item[Soundness] For every set $X \subset A$ of size $k$ for which $A_i \cap X$ is empty for some $i\in [k]$, the number of common neighbours of $X$ in $B$ is at most $s$.
\end{description}
\end{definition}

In~\cite{KM20}, the authors studied panchromatic graphs\footnote{\label{km20}The term `panchromatic graph' was not introduced in~\cite{KM20}. There, the authors constructed dense balanced bipartite graphs with low \emph{contact dimension}, but that construction can be reinterpreted as construction of panchromatic graphs when $k=2$; see Section 8 in~\cite{KM20}.} when $k=2$. Using (non-trivial) density properties of Reed--Solomon codes and Algebraic-Geometric codes, they were able to show that $(n,2,t,t^{o(1)})$-panchromatic graphs exist
for $t = 2^{(\log n)^{1-o(1)}}$, and that they can be constructed efficiently. They then raised the natural question of existence for general $k$, indicating that if such graphs exist, they could then potentially be used to improve hardness and inapproximability results for \intersection. We resolve this open problem from~\cite{KM20} and prove the following result.
\begin{theorem}[Informal restatement of Theorem~\ref{thm:pan_cons}]\label{thm:intropan}
For each $k \in \NN$ and any integer $\lambda > 1$, there exist $(n, k, t, t/\lambda)$-panchromatic graphs 
for infinitely many $n \in \NN$, where $t = t(k,\lambda)>0$  depends only on $k$ and $\lambda$.
\end{theorem}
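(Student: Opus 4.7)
The plan is to exhibit a probability distribution on bipartite graphs, defined via the zero loci of random polynomials over a large finite field $\F_q$, and show that a sample from this distribution is $(n,k,t,t/\lambda)$-panchromatic with probability bounded below by a constant depending only on $k$ and $\lambda$. Since this will work for $n = q^d$ for every sufficiently large prime power $q$, we obtain infinitely many admissible $n$. The key quantitative tool throughout will be the Lang--Weil estimate for $\F_q$-point counts on varieties of a prescribed dimension, applied to the varieties cut out by our random polynomial systems.

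Concretely, fix a dimension $d = d(k,\lambda)$ and a degree $r = r(k,\lambda)$ to be chosen later, and take $A_1,\ldots,A_k = \F_q^d$ and $B = \F_q^d$, so $n = q^d$. For each colour $i \in [k]$ independently sample a random vector of polynomials $P_i = (P_{i,1},\ldots,P_{i,c_i})$ of degree at most $r$ on $A_i \times B$, with the component counts chosen so that $\sum_i c_i = d$, and declare $a \in A_i$ adjacent to $b \in B$ iff $P_i(a,b) = 0$. For the completeness property, the common neighbourhood in $B$ of a panchromatic $k$-tuple $(a_1,\ldots,a_k)$ is the variety cut out by $d$ polynomial equations; for a generic realisation of the $P_i$, this is a zero-dimensional variety of bounded degree, and Lang--Weil forces the $\F_q$-point count to equal some constant $t = t(k,\lambda)$ on a positive fraction of panchromatic tuples. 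For the soundness property, any non-panchromatic $k$-set $X \subset A$ has some colour class $A_i$ contributing two distinct equations $P_i(a,\cdot) = 0$ and $P_i(a',\cdot) = 0$ while some other class $A_j$ contributes none; the randomness of $P_i$, combined with $a \neq a'$, should force the combined system to carve out a variety with strictly smaller top-dimensional component than generic, and Lang--Weil then gives a factor-of-$q$ (or better) saving in the $\F_q$-point count --- choosing $q$ large enough compared to $\lambda$ then yields $s \le t/\lambda$ uniformly.

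The main obstacle is quantifying this ``dimension drop'' for non-panchromatic sets in a manner that is uniform over all $\Theta(n^k)$ such subsets of $A$. A naive union bound over non-panchromatic $k$-sets is far too lossy, so the plan is instead to partition non-panchromatic $k$-sets by their colour-multiplicity type (the unordered list of sizes $|X \cap A_i|$) and, for each type, bound the expected number of bad $k$-sets by a second-moment estimate on the $\F_q$-point count of the associated random variety. This in turn needs a refinement of the Lang--Weil estimate into a distributional statement about random varieties cut out by random polynomial systems --- essentially that the variance of $|V(\F_q)|$ for such varieties is small compared to the square of its mean --- which is one of the advertised technical contributions of the paper and which we would invoke here as a black box.

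Putting these ingredients together, a first-moment argument will show that with positive constant probability (depending only on $k$ and $\lambda$), a sample from our distribution simultaneously has the correct completeness profile and satisfies $s \le t/\lambda$ on every non-panchromatic $k$-set; this yields the claimed existence (and, since the distribution is efficiently samplable, the efficient-construction claim asserted in the abstract).
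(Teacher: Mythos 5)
Your overall framework --- adjacency defined by zero loci of random polynomials over $\F_q$, completeness via a Lang--Weil-type statement that a positive fraction of generic zero-dimensional systems attain the Bézout count over $\F_q$, and soundness via a first-moment union bound over bad $k$-sets --- is exactly the paper's strategy. But the specific construction you propose has a genuine gap at the soundness step, and I don't see how to repair it without changing the construction. In your setup a non-panchromatic $k$-set $X$ with multiplicities $m_i=|X\cap A_i|$ cuts out its common neighbourhood in $B=\F_q^d$ by $\sum_i m_i c_i$ equations. Since $\sum_i m_i=k$ and $\sum_i c_i=d$, you cannot have $\sum_i m_i c_i>d$ for \emph{every} non-panchromatic type (concentrating $X$ on the classes with smallest $c_i$ gives $\sum_i m_i c_i\le k\min_i c_i\le d$); if it is strictly less than $d$ for some type, that type generically has a positive-dimensional common neighbourhood with $\approx q^{d-\sum m_ic_i}$ points, which destroys soundness, and the only way to avoid this is $c_1=\cdots=c_k$, in which case every non-panchromatic $k$-set is cut out by \emph{exactly as many equations of exactly the same degrees} as a panchromatic one. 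Then both systems are generically zero-dimensional with the same Bézout bound, a positive fraction of non-panchromatic sets will attain (or come close to) that bound just as the panchromatic ones do, and there is no factor-of-$\lambda$ separation. Your claimed ``dimension drop'' for non-panchromatic sets does not exist: the generic intersection is already zero-dimensional, and the correlation between $P_i(a,\cdot)$ and $P_i(a',\cdot)$ for $a\neq a'$ gives no reason for the $\F_q$-point count to shrink.

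The paper's construction is built precisely to manufacture this missing asymmetry, and it does so through \emph{degrees} rather than dimensions. The vertices of $A_i$ are polynomials of the form $w_i+p$ with $\deg w_i\le D=\lambda d$ and $\deg p\le d$, and $B=\F_q^k$. A panchromatic $k$-set yields $k$ independent uniform degree-$D$ polynomials, so Bézout bounds the common neighbourhood by $D^k$ (with the positive-dimensional event having probability $O(q^{-D})$, small enough to union-bound over all $n^k=q^{k^2}$ tuples since $D>k^2$), and Theorem~\ref{thm:exactly} gives a positive fraction attaining exactly $D^k$. For a non-panchromatic set, subtracting one representative of a repeated colour class from the others cancels the high-degree shift $w_i$ and leaves at least one polynomial of degree only $d=D/\lambda$, so Bézout gives $dD^{k-1}=D^k/\lambda$; this is where the factor $\lambda$ comes from. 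Two smaller remarks: (i) the completeness clause of Definition~\ref{def:pan} requires \emph{every} panchromatic $k$-set to have at most $t$ common neighbours, which your sketch does not address but which follows from Bézout plus the union bound over positive-dimensional exceptions; (ii) no second-moment or variance estimate on $|V(\F_q)|$ is needed or used --- the union bound closes at first moment because the per-set failure probability $O(q^{-d})$ beats $q^{k^2}$ by the choice $d=k^2+2$.
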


In~\cite{KM20}, the authors note that their technique to construct panchromatic graphs is limited to the case of $k=2$, and remark that one needs to construct objects with more structure than just \emph{maximum distance separable} codes in a certain sense\footnote{To quote~\cite{KM20}, \emph{``The issue in constructing this graph is that we are now concerned about agreements of more than two vectors, which does not correspond to error-correcting codes anymore and some additional tools are needed to argue for this more general case.''}}. Our construction, detailed in Section~\ref{sec:introtechgraph}, does just this, introducing new ideas that go beyond standard coding-theoretic properties. 

On a different note, it is natural to ask if the requirement in the completeness condition that a positive fraction (depending on $k$) of $k$-sets have exactly $t$-sized common neighbourhoods can be strengthened to demand the same of \emph{every} such $k$-set. It turns out that our result is in fact best-possible in the following sense: as $n \to \infty$ and for any $t = t(k)$, there do not exist $(n, k, t, t-1)$\nobreakdash-panchromatic graphs in which a $\left(1-1/t\right)$-fraction of the panchromatic $k$-sets have exactly $t$-sized common neighbourhoods; this may be shown using the K\"ov\'ari–-S\'os–-Tur\'an theorem and H\"older's inequality, but we omit the details here.

Next, we turn our attention to threshold graphs. 
\begin{definition}[Threshold Graphs; Informal version of Definition~\ref{def:thres}]
An \emph{$(n,k,t,s)$-threshold graph} is a bipartite graph $G(A,B)$ with $|A|=|B|= n$ satisfying the following pair of conditions.
\begin{description}	
\item[Completeness] For every $k$-set of vertices $X \subset A$, the number of common neighbours of $X$ in $B$ is at least $t$.
\item[Soundness] For every $(k+1)$-set of vertices $X \subset A$, the number of common neighbours of $X$ in $B$ is at most $s$.
\end{description}
\end{definition}

These graphs are closely related to constructions for Tur\'an-type problems in extremal graph theory. Indeed, if the completeness condition above is weakened to only require that a positive fraction (depending on $k$) of $k$-sets $X \subset A$ have at least $t$ common neighbors in $B$, then the celebrated norm-graphs of~\cite{KollarRS96,BGKRSW96} achieve these weakened requirements. 

Lin~\cite{Lin18} raised the question\footnote{To quote~\cite{Lin18}, \emph{``However, at the moment of writing, I do not know how to do that,
even probabilistically.''}} of the existence of threshold graphs, and noted that if threshold graphs exist, then there is a very short proof\footnote{Starting with an instance $G_0(V_0,E_0)$ of the canonical \W[1]-hard $k$-clique problem on $n$ vertices, we combine it with a $(n,k,t,s)$-threshold graph $G(V_0,B)$ to yield an instance of $\binom{k}{2}$-\textsf{SetIntersection} with $|E_0|$ sets on the universe $B$, where for every edge $e=(u,v)\in E_0$, we include the element $b\in B$ in the set associated with this edge if and only if $b$ is a common neighbor of $u$ and $v$ in $G$. It then follows that if there is a $k$-clique in $G_0$, then there are $\binom{k}{2}$ sets whose intersection size is at least $t$, and if there is no $k$-clique in $G_0$, then every $\binom{k}{2}$ sets have intersection size at most $s$.\label{short}} of the total \FPT inapproximability of \intersection. However, since the existence of threshold graphs was previously unknown, the argument showing total \FPT inapproximability of \intersection in~\cite{Lin18} is rather delicate. We resolve this open problem from~\cite{Lin18} and show that threshold graphs exist, obtaining a very short proof of the total \FPT inapproximability of \intersection as a byproduct. 

\begin{theorem}[Informal restatement of Theorem~\ref{thm:thresh_cons}]\label{thm:introthre}
For each $k \in \NN$ and 
for infinitely many $n \in \NN$, there exist $(n, k, n^{\Omega(1/k)}, k^{O(k)})$-threshold graphs. 
\end{theorem}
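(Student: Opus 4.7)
Mirroring the panchromatic construction of Theorem~\ref{thm:intropan}, I would realize the threshold graph as the bipartite incidence graph of a random polynomial system over a finite field, calibrating parameters so that the common-neighbourhood variety undergoes a sharp dimension drop between $k$-sets and $(k+1)$-sets. Concretely, fix a prime power $q$, a constant $r = r(k)$, and a constant degree $D = D(k)$; set $d = (k+1)r$ and $n = q^d$; take $A = B = \F_q^d$; and sample polynomials $f_1, \ldots, f_r \in \F_q[\vx,\vy]$ independently and uniformly from the space of polynomials of degree at most $D$ in each block of variables. Put an edge between $a \in A$ and $b \in B$ precisely when $f_i(a,b)=0$ for every $i \in [r]$.

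The soundness analysis proceeds by noting that any $(k+1)$-set $S \subseteq A$ gives rise to $(k+1)r = d$ polynomial equations in $d$ unknowns: one shows this specialized system cuts out a zero-dimensional scheme for every such $S$, whereupon Bezout's theorem caps the common neighbourhood of $S$ at $D^d$ points; with $d = O(k)$ and $D$ a constant (or polynomial in $k$), this is $k^{O(k)}$. For the completeness side, any $k$-set in $A$ produces $kr$ equations in $d = (k+1)r$ unknowns, so the expected dimension of the common-neighbourhood variety is $d - kr = r$. If this variety is absolutely irreducible of the expected dimension, the Lang--Weil estimate yields $q^r(1+o(1)) = n^{1/(k+1)}(1+o(1))$ rational points, comfortably meeting the required $n^{\Omega(1/k)}$ lower bound.

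The principal obstacle is uniformity: we need absolute irreducibility of the expected dimension for \emph{every} $k$-tuple and zero-dimensionality for \emph{every} $(k+1)$-tuple in $A^k$ and $A^{k+1}$, not just for a generic such tuple. A naive union bound over the $q^{dk}$ possible $k$-tuples is hopeless, because the per-tuple failure probability cannot be made that small while keeping $D$ a constant in $k$; instead, the plan is to argue algebro-geometrically and bound the codimension of the discriminant locus directly within the parameter space of polynomial systems $(f_1,\dots,f_r)$. This amounts to a parametric Bertini-type statement adapted to the specialized families $\{f_i(a_j,\cdot)\}_{i,j}$ arising from $k$-tuples in $A$, and it is precisely here that the random-algebraic-geometry machinery of Lang--Weil flavour developed earlier in the paper for the panchromatic construction should be brought to bear; a careful stratification of $A^k$ by the type of degeneracy exhibited, combined with codimension estimates transferred from the generic fibre, should yield the uniform statement. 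Once uniformity is secured, both the Bezout soundness bound and the parameter calibration giving $n^{\Omega(1/k)}$ completeness follow routinely.
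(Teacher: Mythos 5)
Your plan is the ``single shared polynomial system'' construction in the spirit of \cite{bukh, bc, conlon}: one family $f_1,\dots,f_r$ of polynomials in the joint variables $(\vx,\vy)$ determines all adjacencies between $A=B=\F_q^d$. The paper explicitly considers this route and sets it aside (see the remark at the end of Section~\ref{sec:cons}): it can be made to work, but the uniformity step you defer to ``a parametric Bertini-type statement'' is genuinely the hard part, the Lang--Weil-for-bounded-complexity constants it needs are ineffective, and the resulting parameters are quantitatively weaker. What the paper actually does is simpler and avoids Lang--Weil entirely for threshold graphs: each vertex of $A$ is its \emph{own} independently sampled random polynomial of degree $d=(k+1)^2+1$ in $k+1$ variables, and $B=\F_q^{k+1}$. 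Then any $(k+1)$-set of vertices of $A$ is a system of $k+1$ \emph{independent} random polynomials, and Theorem~\ref{thm:toomany} (an elementary induction via Lemma~\ref{lem:vanish} and B\'ezout) bounds the per-tuple probability of a dimension drop failure by $O(q^{-d})=O(q^{-(k+1)^2-1})$; a union bound over the $\binom{q^{k+1}}{k+1}\le q^{(k+1)^2}$ tuples then succeeds. Completeness is a direct moment computation: for a $k$-set, $\E[Z]=q$ and $\E[(Z-\E Z)^d]=O(q)$, so $\pr(Z<q/2)=O(q^{1-d})$, again small enough to union bound. Your premise that ``the per-tuple failure probability cannot be made that small while keeping $D$ a constant in $k$'' is precisely the misconception steering you onto the harder road: the degree need not be constant in $k$ --- taking it $\Theta(k^2)$ still yields the $d^{k+1}=k^{O(k)}$ soundness bound, and it makes the na\"ive union bound work.

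Two concrete gaps remain in your proposal as written. First, for soundness you assert that the specialized system is zero-dimensional ``for every such $S$'' but give no mechanism for proving it; in the shared-system setting the standard fix is not Bertini but the Lang--Weil dichotomy (the $\F_q$-point count of a bounded-complexity variety is either $O_k(1)$ or at least $q-O_k(\sqrt q)$) combined with a moment bound ruling out the large branch for all tuples simultaneously, and this is where the degree of the $f_i$ must again grow with $k$ for the vanishing lemma to apply to enough points. Second, for completeness you need more than ``absolutely irreducible of the expected dimension for generic tuples'': you need it (or at least a positive-proportion-of-points conclusion) for \emph{every} $k$-tuple of distinct vertices, and transferring codimension estimates from the generic fibre to all fibres is exactly the step that produces ineffective constants. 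Neither gap is fatal, but filling them honestly reproduces the machinery of \cite{bc, conlon} rather than the paper's short argument.
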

The parameters in this result match the parameters obtainable via norm-graphs, but crucially, our construction also achieves the stronger completeness property discussed earlier.
It is possible to improve the $k^{O(k)}$ to $2^{O(k)}$ using the arguments in \cite{bukh_exponential_turan}, but
we avoid the extra complexity of that approach.

\subsubsection{Applications to Parameterized Set Intersection Problem}\label{sec:introintersection}
Here, we describe our  conditional time lower bounds for the \intersection problem. In order to set the context for the complexity of this problem, we briefly recall its complexity in the world of \NP. 

In the world of complexity, \textsf{SetIntersection} is well-known as a notorious problem to prove any kind of hardness of approximation result for; that said, there is a general belief that it is a hard problem as no non-trivial polynomial time approximation algorithms for this problem are known. However, to this date, even ruling out a \textsf{PTAS} under the standard \P$\neq$\NP\ hypothesis remains open!\footnote{In contrast, it is fairly straightforward to show that the exact version of the problem is \NP-hard~\cite{X12}.} The best inapproximability result for this problem is based on assuming that \SAT problems of size $n$ cannot be solved by randomized algorithms in time $2^{n^\varepsilon}$, under which Xavier~\cite{X12} shows that there is no  polynomial time algorithm which can approximate \textsf{SetIntersection} up to polynomial factor. It is worth noting that to prove this inapproximability result, the author indirectly relies on the highly non-trivial and celebrated quasi-random \textsf{PCP} construction of Khot~\cite{Khot06}.

Given this context, it was truly a breakthrough when Lin~\cite{Lin18}, introducing some novel techniques, proved the total \FPT inapproximability of \intersection (under \Wone hypothesis). Of course, using our construction of threshold graphs (Theorem~\ref{thm:introthre}), we now have a very short proof of this powerful result (see \cref{short}). Lin~\cite{Lin18} further refined his inapproximability result and showed, assuming \ETH, that for sufficiently large $k\in\NN$, no randomized $n^{o(\sqrt{k})}$-time algorithm can approximate \intersection to a factor $n^{1/\Omega(\sqrt{k})}$. Clearly, this result is stronger than ruling out $F(k)$ approximation algorithms (for some function $F$), but the running time lower bound is far from tight. The following result, the first application of our constructions, shows that we can improve on Lin's result and obtain tight running time lower bounds under \ETH (albeit for weaker approximation factors). 

\begin{theorem}[Informal restatement of Theorem~\ref{thm:ETH}]\label{thm:introETH}
Let $F\colon \NN\to\NN$ be any computable function.   Assuming (randomized) \ETH, for sufficiently large $k\in\NN$, no randomized $n^{o(k)}$-time algorithm can approximate \intersection to a factor $F(k)$. 
\end{theorem}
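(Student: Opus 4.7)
The plan is to use our panchromatic graphs from \cref{thm:intropan} to implement the \PGC framework, reducing a tightly \ETH-hard instance of the \emph{coloured} variant of \intersection (in which the input's $kn$ sets come partitioned into $k$ colour classes of $n$ sets each and one must pick one set per class to maximise intersection) to the (uncoloured) \intersection problem, while preserving any prescribed gap $F(k)$. The required starting hardness --- that, under randomised \ETH, no $n^{o(k)}$-time algorithm can $g(k)$-approximate coloured \intersection for every computable $g\colon\NN\to\NN$ and sufficiently large $k$ --- follows from the classical $n^{\Omega(k)}$-time hardness of colourful \clique under \ETH combined with standard gap-producing reductions (e.g.\ clique-to-intersection gadgets of the type used in \cref{short}, which give tight bounds once the panchromatic colour structure is already in place).

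Given such a coloured instance $(\mathcal{C}_1,\dots,\mathcal{C}_k)$ over universe $[M]$ with $M = \poly(n)$ and target factor $F(k)$, I would set $\lambda := 2MF(k)$ and invoke \cref{thm:intropan} to obtain an $(n,k,t,t/\lambda)$-panchromatic graph $H$ on parts $A = A_1 \sqcup \cdots \sqcup A_k$ and $B$ (padding $n$ up to the next admissible value if necessary). I would then randomly biject each $A_i$ with $\mathcal{C}_i$, so that each $a \in A_i$ carries an associated set $S_a \subseteq [M]$, and form the uncoloured \intersection instance consisting of the $kn$ sets $T_a := S_a \times N_H(a)$ over the universe $[M] \times B$. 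For any $k$-set $X \subseteq A$,
\[
\bigcap_{a \in X} T_a \;=\; \Bigl(\bigcap_{a \in X} S_a\Bigr) \times \Bigl(\bigcap_{a \in X} N_H(a)\Bigr),
\]
so the second factor has size at most $t$ if $X$ is panchromatic and at most $t/\lambda$ otherwise.

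In the \emph{yes} case (coloured optimum $C^* \geq 1$), the optimal panchromatic tuple maps under the uniformly random bijection to a uniformly random panchromatic tuple in $H$; by the completeness clause of \cref{thm:intropan} this tuple attains exactly $t$ common neighbours with some probability $\alpha_k > 0$ depending only on $k$, so with probability $\geq \alpha_k$ the composed instance contains $k$ sets with intersection at least $tC^*$. In the \emph{no} case (coloured optimum $\leq C^*/F(k)$), panchromatic tuples in $A$ yield composed intersection at most $tC^*/F(k)$, while non-panchromatic tuples yield at most $M \cdot t/\lambda = t/(2F(k)) \leq tC^*/F(k)$ by our choice of $\lambda$, producing the desired $F(k)$-gap.

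Amplifying the randomised reduction's success probability to a constant via $O(1/\alpha_k)$ independent trials (permissible since $k$ is fixed, so $\alpha_k = \Omega(1)$) and observing that the composed instance has $kn = \Theta(n)$ sets over a universe of size $\poly(n)$, any $n^{o(k)}$-time randomised $F(k)$-approximation algorithm for \intersection would yield an $n^{o(k)}$-time randomised $F(k)$-approximation for coloured \intersection, contradicting randomised \ETH. The main obstacle is the weakened completeness clause of \cref{thm:intropan} --- only a positive fraction of panchromatic $k$-tuples attain the maximal $t$-sized common neighbourhood, not all of them --- which is precisely what forces the random-labelling trick above and the reliance on randomised (rather than deterministic) \ETH; the infinite-family constraint on admissible $n$ is a routine padding issue.
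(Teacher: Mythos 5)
Your composition step is essentially the paper's Panchromatic Graph Composition (Theorem~\ref{thm:PGC} applied as in the proof of Theorem~\ref{thm:ETH}), but the argument has a genuine gap at its starting point. You assert that the tight $n^{o(k)}$-time, $F(k)$-gap hardness of \emph{coloured} \intersection under randomized \ETH ``follows from the classical hardness of colourful \clique combined with clique-to-intersection gadgets of the type used in \cref{short}.'' It does not. That gadget maps a $k$-clique instance to an intersection instance whose solution consists of $\binom{k}{2}$ sets (one per edge of the putative clique), so the parameter of the target problem is $K=\binom{k}{2}$ and the $n^{\Omega(k)}$ clique lower bound translates only into an $n^{\Omega(\sqrt{K})}$ lower bound --- precisely the $n^{o(\sqrt{k})}$ barrier of Lin's result that Theorem~\ref{thm:introETH} is meant to break. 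There is no known gadget that certifies all $\binom{k}{2}$ adjacencies using only $k$ sets; achieving this with a gap is the content of the distributed-PCP machinery of~\cite{KLM19}. The paper's proof therefore imports Theorem~\ref{thm:ETHmax} (tight \ETH-hardness of gap {\sf Unique} $k$-\maxcover, proved via SMP protocols for $\mathsf{MultEq}$) and converts it to panchromatic \intersection via Proposition~\ref{prop:mctointer}; this ingredient cannot be replaced by the elementary reduction you sketch.

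Two secondary problems. First, you set $\lambda := 2MF(k)$ with $M=\poly(n)$, but Theorem~\ref{thm:intropan} only supplies $(n,k,t,t/\lambda)$-panchromatic graphs for $\lambda$ independent of $n$: in the construction $D=\lambda(k^2+2)$ is the degree of the random polynomials, and both the Lang--Weil error terms and the sampling time blow up if $D$ grows with $q$. The paper avoids this by using that the \emph{monochromatic number} of the~\cite{KLM19} instances is $O_k(c)$ (not merely $O(|\U|)$), so $\lambda = c_k F(k)$ suffices; with your trivial bound $M\cdot t/\lambda$ on non-panchromatic tuples the reduction would not go through. Second, your amplification by $O(1/\alpha_k)$ independent trials repairs completeness but not soundness: a sampled graph fails to be a panchromatic graph with probability close to $1$, and on such a trial a NO instance may be mapped to an instance with a large intersection, destroying the one-sided error needed for a fine-grained RUR reduction. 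The paper handles this by running a search-to-decision reduction and verifying any returned witness against the original \maxcover instance, discarding trials whose witnesses do not check out; some such verification step is necessary in your argument as well.
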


In the world of fine-grained complexity, it is also of interest to prove, under stronger assumptions than \ETH, even tighter running time lower bounds than the $n^{o(k)}$ bound above. In particular, one would like to rule out $n^{k-0.1}$-time algorithms for \intersection under \SETH, essentially showing that the na\"ive exhaustive search algorithm for \intersection is optimal. To the best of our knowledge, it was not known earlier if one could even rule out \emph{exact} algorithms for \intersection running in $n^{k-0.1}$-time under \SETH. We remedy this situation; the following strong inapproximability result under \SETH is the second application of our constructions. 

\begin{theorem}[Informal restatement of Theorem~\ref{thm:SETH}]\label{thm:introSETH}
Let $F\colon \NN\to\NN$ be any computable function.   Assuming (randomized) \SETH, for every $\varepsilon > 0$ and integer $k > 1$, no randomized $n^{k(1-\varepsilon )}$-time algorithm can approximate \intersection to a factor $F(k)$. 
\end{theorem}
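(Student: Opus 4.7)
The strategy is to reduce from a coloured variant of \intersection, which is \SETH-hard for the exact version, via Panchromatic Graph Composition using Theorem~\ref{thm:intropan}. Specifically, I would rely on the following coloured hardness: under (randomized) \SETH, for every $\varepsilon>0$ no randomized $n^{k(1-\varepsilon)}$-time algorithm distinguishes instances of coloured \intersection (with $n$ sets per colour class over a universe $U$ of size $d=n^{o(1)}$) in which some colour-transversal has intersection of size at least some threshold $\tau\geq 1$ from those in which every colour-transversal has empty intersection. This is a standard consequence of a direct reduction from $k$-$\mathsf{SAT}$: partition the $n$ variables of the formula into $k$ blocks of size $n/k$, identify each colour class with the family of partial assignments to its block, and encode clauses as universe elements so that the intersection of a colour-transversal records the satisfied-clause pattern of the corresponding combined assignment. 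A routine preprocessing step making every clause group-local yields exactly this form with an appropriate $\tau$.

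Given such a coloured instance $\mathcal{F}_1,\ldots,\mathcal{F}_k$ over universe $U$ with $|U|=d$, I invoke Theorem~\ref{thm:intropan} with $\lambda := F(k)\cdot d + 1$ to produce an $(n,k,t,t/\lambda)$-panchromatic graph $G$ on parts $A'_1,\ldots,A'_k$ of size $n$ each and side $B$, where $t=t(k,\lambda)$. I then sample uniformly random bijections $\pi_i:\mathcal{F}_i\to A'_i$ and form the composed uncoloured instance whose sets are indexed by $\bigsqcup_i A'_i$: for $a'\in A'_i$, take
\[
S'_{a'} := \pi_i^{-1}(a') \times N_G(a') \ \subseteq\ U\times B.
\]
For any chosen $k$-tuple $a'_1,\ldots,a'_k$, the intersection $\bigcap_{j} S'_{a'_j}$ factors as a product of a coloured-intersection factor in $U$ and a common-neighbourhood factor in $B$, so the analysis splits on whether the chosen $k$-tuple is panchromatic (one vertex per $A'_i$) or not.

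In the NO case, a panchromatic $k$-tuple has vanishing coloured-intersection factor (by coloured soundness), while a non-panchromatic $k$-tuple has common-neighbourhood factor at most $t/\lambda$ (by panchromatic soundness); either way the composed-instance intersection has size at most $d\cdot(t/\lambda)<t/F(k)$. In the YES case, fix a colour-transversal $(f_1,\ldots,f_k)$ with coloured intersection of size at least $\tau$; its image $(\pi_1(f_1),\ldots,\pi_k(f_k))$ is a uniformly random panchromatic $k$-tuple of $G$, and by the completeness property of $G$ falls into the $c(k)$-fraction attaining common neighbourhood of exactly $t$ with probability at least $c(k)>0$, yielding a $k$-tuple in the composed instance of intersection size at least $\tau\cdot t\geq t$. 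Standard trial-repetition over the random bijections amplifies this to any constant success probability.

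The composed instance has $N:=kn$ sets over a universe of size $dn=n^{1+o(1)}$, which after routine padding conforms to the ``$N$-sets-over-$[N]$'' framing of \intersection. An $N^{k(1-\varepsilon)}$-time randomized $F(k)$-approximation algorithm for the uncoloured problem would therefore decide the coloured instance in randomized time $(kn)^{k(1-\varepsilon)}=O_k(n^{k(1-\varepsilon)})$ (for fixed $k$, after $O_k(1)$ trials), contradicting the coloured \SETH lower bound. The main technical point, and the precise role played by Theorem~\ref{thm:intropan}, is securing the completeness side of the gap: it is only because a positive fraction of panchromatic tuples of $G$ attain the maximum common neighbourhood $t$ that a random lift of a coloured-satisfying transversal remains a large-intersection witness in the composed instance, an assurance that earlier (e.g.\ norm-graph based) constructions do not provide.
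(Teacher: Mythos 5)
The central gap is your claimed starting point. You assert that a ``routine'' direct reduction from $k$-\SAT yields \SETH-hardness of distinguishing coloured instances in which some colour-transversal has intersection at least $\tau$ from those in which every colour-transversal has \emph{empty} intersection. No such reduction exists: the direct split-the-variables reduction produces a gap-free problem (a satisfying assignment corresponds to a transversal whose intersection contains all $m$ clause-elements, an unsatisfying one to a transversal missing at least one — a gap of $m$ versus $m-1$), and ``empty intersection in the NO case'' would require every combined assignment to violate every clause, which is absurd. Manufacturing any genuine multiplicative gap for the coloured problem is precisely the non-trivial step here; the paper imports it wholesale from the Distributed PCP machinery of \cite{KLM19} (Theorem~\ref{thm:SETHmax}, a gap version of {\sf Unique} $k$-\maxcover, converted to panchromatic \intersection via Proposition~\ref{prop:mctointer}). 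Panchromatic graph composition does not create a gap — it only transfers one, since in the NO case the panchromatic selections are controlled solely by the coloured soundness — so without a hard \emph{gap} coloured instance your reduction cannot rule out $F(k)$-approximation.

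Two further points would need repair even with the correct starting point. First, you set $\lambda = F(k)\cdot d+1$ with $d=|U|$; if $d$ grows with $n$ (you allow $d=n^{o(1)}$) then so does $\lambda$, hence $t=D^k$ grows and the success probability $(4(D^k)!)^{-1}$ in Theorem~\ref{thm:pan_cons} becomes super-polynomially small, destroying the construction. The paper avoids this because the \cite{KLM19} instances have monochromatic number at most $O_{k,\varepsilon}(1)\cdot\ell$ where $\ell$ is the completeness value, so $\lambda$ can be chosen to depend only on $k$, $\varepsilon$ and $F$. Second, a single direct composition gives you no way, within the time budget, to certify that the sampled graph actually is panchromatic — which is needed both for completeness and for the (one-sided) soundness required to contradict randomized \SETH as formulated — and repeating the random bijections does not fix a bad graph. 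The paper handles this by subdividing each colour class into $\sqrt{n}$ blocks, producing $n^{k/2}$ sub-instances so that a panchromatic graph on parts of size roughly $\sqrt{n}$ can be found and verified in time $O_k(n^{k/2+1})$, with the total running time still summing to $O(n^{k(1-\varepsilon/2)})$.
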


Both of these results are crucially reliant on our construction of panchromatic graphs; a broad outline is given in Section~\ref{sec:introtechseth}. It is worth noting that for the coloured variant of \intersection, one can easily show tight running time lower bounds under \ETH and \SETH against exact algorithms, and by using non-trivial gap creating techniques, these tight running time lower bounds were extended against near polynomial factor approximation algorithms for the coloured variant in~\cite{KLM19}. The situation (for the coloured variant) is similar in the world of \NP as well; see~\cite{CP11}. Finally, we remark that by using the hardness of approximation results in~\cite{KLM19} under the $k$-\textsf{SUM} hypothesis, we can use the \PGC framework to  rule out randomized $n^{k(1/2-\varepsilon )}$-time $F(k)$-factor approximation algorithms for \intersection under the $k$-\textsf{SUM} hypothesis.

\subsubsection{Bigger Picture: Reverse Colour Coding}\label{sec:introbig} 
We conclude this discussion of our results by briefly highlighting a broader implication. For many computational problems, it is often natural to define and study a coloured variant. For some problems, the coloured variant turns out to be even more natural; for example, any $k$-\textsf{CSP} (i.e., constraint satisfaction problems of arity $k$) on $k$ variables can be seen as a coloured version of the maximum edge biclique problem. Establishing computational equivalences between coloured and non-coloured variants of problems is thus a basic question worthy of exploration. As noted earlier, for some problems, there is a straightforward equivalence between the two versions. However, there are many important problems for which this equivalence is nontrivial (and potentially not true). The celebrated colour coding technique of Alon, Yuster and Zwick~\cite{alon1995colour} provides an efficient way for a problem to be reduced to its coloured variant. Our construction of panchromatic graphs (when combined with \PGC, as will be described in Section~\ref{sec:introtechseth}) now gives us a rather general method to reverse the colour coding technique.

\subsection{Our Techniques}
Our main technical contribution is the constructions of panchromatic graphs and threshold graphs which we describe in Section~\ref{sec:introtechgraph}. We also provide an overview of how these are used to prove Theorems~\ref{thm:introETH}~and~\ref{thm:introSETH} in Section~\ref{sec:introtechseth} 

\subsubsection{Constructions of Panchromatic and Threshold Graphs}\label{sec:introtechgraph}

To motivate our approach, we start by explaining, briefly, why a natural first attempt at constructing threshold graphs fails. It is natural to consider a random bipartite graph where each edge is included independently with an appropriately chosen probability $p$. Indeed, it is easy to see that such a construction can ensure that \emph{most} $k$-sets of vertices on one side have fewer common neighbours than \emph{most} $(k+1)$-sets. However, it is essentially impossible to avoid some \emph{exceptional} $k$-sets and $(k+1)$-sets at the relevant edge density $p$. Without getting into the details, the reason for this is simple: the size of the common neighbourhoods in this probability space have long, smoothly-decaying tails, and since there are many sets to consider, it is overwhelmingly likely that exceptional sets exist. For more on this issue, we refer the reader to~\cite{bukh}. 

When it comes to panchromatic graphs, while there is no immediate natural candidate construction, it seems clear that assuming one wishes to construct such objects randomly, one needs to introduce some level of correlation between different edges, while simultaneously preserving enough independence to allow us to analyse the resulting random graph, a delicate task from a purely probabilistic perspective.

It turns out that there is a natural way to circumvent all the obstacles outlined above, namely, by considering random graphs in which adjacency is determined by a randomly chosen algebraic variety. Concretely, our approach, which works over the finite field $\k$ for any prime power $q \in \NN$, is as follows.

\begin{enumerate}
    \item We construct threshold graphs as follows. We build $A$ by independently sampling $q^{k+1}$ random polynomials of degree $d$ from $\F_q[X_1, \dotsc, X_{k+1}]$ for a suitable $d = d(k)$. Then, with $B = \F_q^{k+1}$, we define a bipartite graph $G$ between $A$ and $B$ by joining $f \in A$ to $x \in B$ if $f(x) = 0$.
    \item To construct panchromatic graphs, we proceed as follows. First, we independently choose random polynomials $w_1, \dots, w_k$ of degree $D$ from $\F_q [X_1, \dotsc, X_k]$ for a suitable $D = D(k)$. Next, for $i \in [k]$, we take $A_i$ to be a set of $q^k$ random polynomials of the form $w_i + p$, where each such $p$ is an independently sampled random polynomial of degree $d$ from $\F_q [X_1,\dotsc, X_k]$ for a suitable $d = d(k)$. Finally, with $B = \F_q^k$, we define a bipartite graph $G$ between $A$ and $B$ by joining $f \in A$ to $x \in B$ if $f(x) = 0$.
\end{enumerate}

While the random algebraic graphs above are quite easy to describe, their analysis is far from simple; in particular, to prove our main results, we shall rely on Lang--Weil estimate~\citep{LW54}, which is a consequence of the Riemann hypothesis for function fields (but see~\citep{schmidt_book} for a relatively elementary proof).
Along the way, we shall prove a several results about the zero sets of random polynomials over finite fields that may be of independent interest. An illustrative example is the following probabilistic analogue of B\'ezout's theorem over finite fields.

\begin{theorem}
  For $k,d \in \NN$ and a prime power $q \in \NN$, let $Z$ be the (random) number of common roots over $\k^k$ of $k$ independently chosen $k$-variate random $\k$-polynomials of degree $d$. Then, as $q \to \infty$, we have
  \[ \pr[Z = d^k] \ge \frac{1-o(1)}{(d^k)!},\]
  as well as
  \[ \pr[Z > d^k] =O( q^{-d}).\]
\end{theorem}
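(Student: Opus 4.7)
The plan is to prove the two bounds separately, using Bézout's inequality for the upper bound and Chebotarev-type equidistribution (via Lang--Weil) for the lower bound. For the upper bound on $\pr[Z>d^k]$, I would first apply Bézout's inequality over $\kbar$: if $V=V(f_1,\dots,f_k)\subset\A^k_{\kbar}$ is zero-dimensional then $|V(\k)|\le|V(\kbar)|\le d^k$, so the event $\{Z>d^k\}$ is contained in the event that $V$ has a positive-dimensional component defined over $\k$. The dominant contribution to the latter comes from $V$ containing an affine line: for a fixed line $L\subset\A^k$ the restriction map $\k[x_1,\dots,x_k]_{\le d}\to\k[t]_{\le d}$ is surjective onto a space of dimension $d+1$, so $\pr[L\subseteq V]=q^{-k(d+1)}$, and a union bound over the $O(q^{2(k-1)})$ affine lines in $\A^k(\k)$ yields $O(q^{2k-2-k(d+1)})=O(q^{-(kd-k+2)})=O(q^{-d})$ for all $k,d\ge 1$. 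Higher-degree curves, higher-dimensional subvarieties, and components defined over non-trivial extensions $\F_{q^m}$ contribute strictly smaller terms; I would check this via parameter-space counts on the Chow/Hilbert schemes combined with the universal bound $h_C(d)\ge d+1$ on the Hilbert function of an irreducible positive-dimensional variety $C$.

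For the lower bound on $\pr[Z=d^k]$, I would deploy a Chebotarev-type equidistribution on the universal family over the parameter space $U=(\k[x_1,\dots,x_k]_{\le d})^k$. Let $U^\circ\subseteq U$ denote the Zariski-open locus where $V(f_1,\dots,f_k)$ is zero-dimensional and reduced of degree exactly $d^k$ over $\kbar$; a randomized Bertini-type argument shows $|U^\circ(\k)|=(1-o(1))|U(\k)|$. Over $U^\circ$, the incidence variety $\{(u,x)\in U^\circ\times\A^k:x\in V(u)\}\to U^\circ$ is finite étale of degree $d^k$, and the associated ``ordered-roots'' cover $\tilde U\to U^\circ$ is Galois with group $G\le S_{d^k}$. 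Assuming $G=S_{d^k}$, Lang--Weil applied to each of the $|G|$ étale covers corresponding to the conjugacy classes of $G$ implies that the Frobenius acting on the fiber over a uniformly random $u\in U^\circ(\k)$ is equidistributed in $S_{d^k}$, with error $O(q^{-1/2})$. Specializing to the identity element (which corresponds to all $d^k$ roots being $\k$-rational) yields
\[\pr[Z=d^k\mid u\in U^\circ]\ge\frac{1}{(d^k)!}-O(q^{-1/2}),\]
and since $(d^k)!$ is independent of $q$, I would conclude the claimed bound on $\pr[Z=d^k]$.

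The hard part will be verifying the monodromy claim $G=S_{d^k}$. I would establish this via a degeneration argument along a one-parameter subfamily of $U^\circ$: degenerating $(f_1,\dots,f_k)$ towards the axis-aligned tuple cutting out the transverse grid $A_1\times\dots\times A_k$ (with $A_i\subset\k$ of size $d$) realises $\prod_{i=1}^k S_d$ as a subgroup of $G$ via independent permutations along each axis, and a further small perturbation towards a pencil whose discriminant acquires exactly one simple double root contributes a transposition to $G$. Combined with the transitivity of $G$ (which follows from the irreducibility of the total incidence variety $\mathcal{V}\to U^\circ$, itself a vector bundle over $\A^k$), the presence of such a transposition forces $G=S_{d^k}$. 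An alternative route that avoids the monodromy computation is a direct second-moment bound on the falling factorial $(Z)_{d^k}$, but controlling the contribution of overlapping exceptional configurations in the second moment appears no easier than the Galois-theoretic argument just sketched.
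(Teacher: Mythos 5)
Your overall strategy for both bounds is sensible, but each half as sketched has a genuine gap, and in each case the paper's argument is structured precisely to avoid the difficulty you run into.

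For the upper bound, you reduce correctly to bounding the probability that $V(f_1,\dots,f_k)$ has a positive-dimensional component, but your union bound over candidate components breaks down beyond lines. By B\'ezout such a component can have degree as large as $d^k$, and the parameter space of irreducible degree-$e$ curves (let alone higher-dimensional components, or components only defined over extensions) has dimension growing with $e$ --- already for plane curves it is roughly $e^2/2$ --- whereas the only vanishing estimate you invoke is the universal bound $h_C(d)\ge d+1$, which gives exactly the same probability $q^{-k(d+1)}$ as for a line. So the higher-degree terms are not ``strictly smaller''; with the stated tools they can dominate and the union bound fails. Repairing this would require lower bounds on $h_C(d)$ that grow with $\deg C$ fast enough to beat the Chow-variety dimension, which is a substantial undertaking. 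The paper sidesteps all of this in Theorem~\ref{thm:toomany}: it conditions on $f_1,\dots,f_{k-1}$ and observes via Lemma~\ref{bezout} that the \emph{specific} variety $V(f_1,\dots,f_{k-1})$ has at most $d^{k-1}$ irreducible components; the dimension fails to drop only if $f_k$ vanishes identically on one of these finitely many explicit components, and for each such component this has probability at most $q^{-d}$ by choosing $d$ points on it and applying Lemma~\ref{lem:vanish}(2) over $\kbar$. The union bound is then over $d^{k-1}$ objects rather than over a Chow variety.

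For the lower bound, your Chebotarev route is a legitimate alternative in principle, but it stands or falls on the monodromy claim $G=S_{d^k}$, and your proposed justification is incorrect as stated: a transitive subgroup of $S_n$ containing a transposition need not be all of $S_n$ (e.g.\ $S_2\wr S_2\le S_4$), and the subgroup $S_d^{\times k}$ you exhibit via the axis-aligned degeneration acts \emph{imprimitively} on $[d]^k$ for $k\ge 2$, so transitivity plus one transposition does not force the full symmetric group; you would need primitivity, or a finer analysis of where the transposition sits relative to the block systems. Identifying the Galois group of this enumerative problem is a theorem with real content. The paper's Theorem~\ref{thm:exactly} avoids computing $G$ altogether: it only needs one geometrically irreducible component $W_1$ of the ordered-roots incidence variety that is defined over $\k$, which it produces by exhibiting an explicit smooth point (the split configuration $f_i^*=\prod_{a\in A_i}(X_i-a)$ together with its $d^k$ ordered roots) and a Frobenius-orbit argument; Lang--Weil applied to $W_1$ alone, minus a Schwartz--Zippel correction for the diagonal, already yields the one-sided bound $\ge(1-o(1))/(d^k)!$ with no equidistribution statement. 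If you wish to keep your route, you must either prove $G=S_{d^k}$ properly or observe that full monodromy is unnecessary for a lower bound.
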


To place these techniques in context, it is worth mentioning that the first traces of this random algebraic method go back some way, to work of Matou\v sek~\cite{Mat} in discrepancy theory, but it is the variant originating in~\cite{bukh} and developed further in~\citep{bc, conlon} that we shall build upon in this paper.

\subsubsection{Hardness of Approximating \intersection}\label{sec:introtechseth}
The common starting point for both Theorems~\ref{thm:introETH}~and~\ref{thm:introSETH} is the {\sf Unique} $k$-\maxcover problem defined in~\cite{KLM19}. We refrain from defining it here, but it is immediate from its definition (see Section~\ref{sec:prelim}) that it can be easily reformulated as the coloured version of \intersection (see Proposition~\ref{prop:mctointer}), hereafter panchromatic \intersection.  In panchromatic \intersection, we are given $k$ collections, each consisting of $n$ subsets of the universe $[n]$, and the goal is to choose one set from each collection such that their intersection size is maximized. From~\cite{KLM19}, it follows that assuming \SETH (respectively \ETH), there is no $n^{k-\varepsilon}$-time (respectively $n^{o(k)}$-time) algorithm  that can approximate panchromatic \intersection to an $F(k)$ factor for any computable function $F$. 

It is easier to describe the \PGC technique in terms of graphs, so we reformulate the panchromatic \intersection problem as follows: given a bipartite graph $H(X, Y)$ where $X=X_1\dot\cup\cdots \dot\cup X_k$ corresponds to the $k$ collections of sets and $Y$ corresponds to the universe (so $|X_1|=\cdots =|X_k|=|Y|=n$), the goal is to find $(x_1,\ldots ,x_k)\in X_1\times \cdots \times X_k$ which has the largest sized common neighbourhood in $Y$. We also consider a $(n,k,t,t/\lambda)$-panchromatic graph $G(X,B)$ as guaranteed by our Theorem~\ref{thm:intropan}. Now, given $G$ and $H$ as above, the \PGC technique, roughly speaking, boils down to analyzing the graph $H^*(X,Y\times B)$ where if $(x,b)\in X_i\times B$ is an edge in $G$ and $(x,y)\in X_i\times Y$ is an edge in $H$, then we have the edge $(x,(y,b))\in X_i\times Y\times B$ in $H^*$.

In the completeness case, if the maximum panchromatic common neighbourhood size in $H$ was $c$, then the same set of vertices would have a common neighbourhood of size $t\cdot c$ in $H^*$, whereas in the soundness case, if the maximum panchromatic common neighbourhood size in $H$ was $s$, then the maximum common neighbourhood size is at most $t\cdot s$ in $H^*$. From the soundness of the panchromatic graph, we know that if we pick $k$ vertices in $X$ not all from different colour classes, then their common neighbourhood is of size at most $(t/\lambda)\cdot |Y|$. The results we desire then follow by setting $\lambda$ appropriately, and importantly noting that $|Y|=O(c)$ in the hard instances given by~\cite{KLM19}; recall that the common neighbourhood problem on $H^*$ where we ignore the colour classes is the \intersection problem.

Our composition technique using panchromatic graphs strictly improves on the techniques introduced in~\cite{DKL19,KM20}. The \PGC technique described above also improves the inapproximability results of~\cite{KM20}, albeit only in the lower order terms, and also simplifies their hardness of approximation proof for the \emph{Monochromatic Maximum Inner Product} problem.  

\subsection{Organization of Paper}
In Section~\ref{sec:prelim}, we formally define the problems and hypotheses of interest in this paper. In  Section~\ref{sec:graphs},
we carefully define panchromatic and threshold graphs and state our main results about them. 
In Section~\ref{sec:tools}, we prove some important intermediate results that will be used to analyze our constructions of panchromatic and threshold graphs. 
In Section~\ref{sec:cons}, we give the constructions of panchromatic graphs and threshold graphs. In
Section~\ref{sec:inter}, we prove our fine-grained inapproximability results for \intersection.
Finally, in 
Section~\ref{sec:open}
we highlight a few important open problems and research directions.

\section{Preliminaries}\label{sec:prelim}

\subsection{Notations} For any set $X$ we denote by $2^X$, the power set of $X$.  We use the notation $O_k(\cdot)$ (resp.\ $\Omega_k(\cdot)$) to mean $F(k)\cdot O(\cdot)$ (resp.\ $F(k)\cdot \Omega(\cdot)$) for some function $F$.

\subsection{Problems and Hypotheses}

In this subsection, we formally define all the problems and hypotheses used in the paper. 

First, we define the $\ell$-\SAT problem and then define the two popular fine-grained hypotheses concerning this problem. 

\paragraph{\bf $\ell$-\SAT}
In the $\ell$-\SAT problem, we are given a \CNF formula $\varphi$ over $n$ variables $x_1,\ldots x_n$, such that each clause contains at most $\ell$ literals. Our goal is to decide if there exist an assignment to $x_1,\ldots x_n$ which satisfies $\varphi$.

In this paper, we require a fine-grained notion (of algorithms) in the complexity class \textsf{RP} and a fine-grained notion of  \emph{Reverse Unfaithful Random (RUR) reductions}~\cite{J90,MG02}. An FPT notion of such algorithms and reductions was introduced in \cite{BBEGKLMM21} and the notion of randomized fine-grained reduction was introduced in \cite{CarmosinoGIMPS16}. A promise problem $\Pi$ is a pair of languages  $(\Pi_{\text{YES}},\Pi_{\text{NO}})$ such that $\Pi_{\text{YES}}\cap\Pi_{\text{NO}}=\emptyset$.
A Monte Carlo algorithm $\mathcal{A}$ is said to be a (one-sided) randomized algorithm for a (promise) problem $\Pi$ if
the following holds:
\begin{itemize}
\item (YES) For all $x\in \Pi_{\text{YES}}$, $\Pr[\mathcal{A}(x) = 1] \ge 1/2$.
\item (NO) For all $x\in \Pi_{\text{NO}}$, $\Pr[\mathcal{A}(x) = 0] =1$.
\end{itemize}
Moreover, we say that $\mathcal{A}$ runs in time $T$ if the running time of  $\mathcal{A}$ on every randomness is upper bounded by $T$. 
\begin{hypothesis}[(Randomized) Exponential Time Hypothesis (\ETH)~\cite{IP01,IPZ01,Tovey84}] \label{hyp:eth}
There exists an $\epsilon > 0$ such that no Monte Carlo (one-sided) randomized algorithm can solve 3-\SAT on $n$ variables in time $O(2^{\epsilon n})$. Moreover, this holds even when restricted to formulae in which each variable appears in at most three clauses.
\end{hypothesis}

We will also recall a stronger hypothesis called the Strong Exponential Time Hypothesis (\SETH):

\begin{hypothesis}[(Randomized) Strong Exponential Time Hypothesis (\SETH)~\cite{IP01,IPZ01}]
For every $\varepsilon > 0$, there exists $\ell = \ell(\varepsilon) \in \NN$ such that no Monte Carlo (one-sided) randomized algorithm can solve $\ell$-\SAT  in $O(2^{(1 - \varepsilon)m})$ time where $m$ is the number of variables. Moreover, this holds even when the number of clauses is at most $c(\varepsilon) m$ where $c(\varepsilon)$ denotes a constant that depends only on $\varepsilon$.
\end{hypothesis}

In this paper,  we prove tight running time lower bounds for \intersection (to be formally defined later in this section) assuming \ETH (resp.\ \SETH)  by providing a \emph{fine-grained RUR reduction} from 3-\SAT (resp.\ $\ell$-\SAT) to \intersection, such that YES instances of 3-\SAT (resp.\ $\ell$-\SAT) map to YES instances of \intersection with high probability and NO instances of 3-\SAT (resp.\ $\ell$-\SAT) always map to NO instances of \intersection. We remark that  using standard techniques, fine-grained RUR reductions can be used to transform Monte Carlo one-sided randomized  algorithms for \intersection  to  Monte Carlo one-sided randomized 
algorithms for \SAT  (for example, see Lemma~3.7 in \cite{BBEGKLMM21}). 

Next, we recall the \maxcover problem introduced in \cite{CCKLMNT20} which turned out to be  the centerpiece of many results in parameterized inapproximability. 

\paragraph{$k$-\textsf{MaxCover} problem}
The $k$-\maxcover instance $\Gamma$ consists of a bipartite graph $G=(V\dot\cup W, E)$ such that $V$ is partitioned into $V=V_1\dot\cup \cdots \dot\cup V_k$ and $W$ is partitioned into $W=W_1\dot\cup \cdots \dot\cup W_\ell$. We sometimes refer to $V_{i}$'s and $W_j$'s as {\em left super-nodes} and {\em right super-nodes} of $\Gamma$, respectively. 

A solution to $k$-\maxcover is called a {\em labeling},
which is a subset of vertices $v_1\in V_1,\dots v_k\in V_k$.
We say that a labeling $v_1,\dots v_k$ {\em covers} a right super-node $W_{i}$, if there exists a vertex $w_{i} \in W_{i}$ which is a joint neighbor of all $v_1,\dots v_k$, i.e., $(v_j,w_i)\in E$ for every $j\in[k]$.
We denote by $\maxcover(\Gamma)$ the maximal fraction of right super-nodes that can be simultaneously covered, i.e.,
\begin{align*}
\maxcover(\Gamma) = \frac{1}{\ell} \left(\max_{\text{labeling } v_1,\dots v_k} \bigl\lvert\bigl\{i \in [\ell] \mid W_i \text{ is covered by } v_1,\dots v_k\bigr\}\bigr\rvert\right). 
\end{align*}

Given an instance $\Gamma(G,c,s)$ of the  $k$-\maxcover problem as input,  our goal is to distinguish between the two cases:
\begin{description}
	\item[Completeness] $\maxcover(\Gamma)\ge c$.   
	\item[Soundness] $\maxcover(\Gamma)\le s$.  
\end{description}

We define {\sf Unique} \maxcover to be the \maxcover problem with the following additional structure: for every labeling $S\subseteq V$  and any right super-node $W_i$, there is at most one node in $W_i$ which is a neighbor to all the nodes in $S$.

%

Next, we define the two central computational problems of attention in this paper, \intersection and its coloured variant, panchromatic \intersection.

\paragraph{\intersection problem}
The \intersection instance $\Gamma$ consists of a collection $\C$ of $n$ subsets of a universe $\U$ (typically synonymous with $[n]$) and integer parameters $c,s$ ($c>s$). 
In the \intersection problem, given input $\Gamma (\C,c,s)$, the goal is to distinguish between the two cases:
\begin{description}	
\item[Completeness] There exists $k$ sets $S_{i_1},\ldots ,S_{i_k}$ in $\C$ such that $\left|\underset{{r\in[k]}}{\cap}S_{i_r}\right|\ge c$.
	\item[Soundness] For every $k$ sets $S_{i_1},\ldots ,S_{i_k}$ in $\C$ we have $\left|\underset{{r\in[k]}}{\cap}S_{i_r}\right|\le s$.
\end{description}

\paragraph{Panchromatic \intersection problem}
The panchromatic \intersection instance $\Gamma$ consists of $k$ collections $\C_1,\ldots \C_k$ each containing $n$ subsets of a universe $\U$ and integer parameters $c,s$ ($c>s$). 
In the panchromatic \intersection problem, given input $\Gamma (\C_1,\ldots \C_k,c,s)$, the goal is to distinguish between the two cases:
\begin{description}	
\item[Completeness] There exists $k$ sets $S_{i_1},\ldots ,S_{i_k}$ in $\C_1\times\cdots \times \C_k$ such that $\left|\underset{{r\in[k]}}{\cap}S_{i_r}\right|\ge c$.
	\item[Soundness] For every $k$ sets $S_{i_1},\ldots ,S_{i_k}$ in $\C_1\times\cdots \times\C_k$ we have $\left|\underset{{r\in[k]}}{\cap}S_{i_r}\right|\le s$.
\end{description}
We define an important quantity for instances of panchromatic \intersection, which we call the \emph{monochromatic number} of $\Gamma$ and is defined to be the following quantity:
$$
\max_{\substack{X\subseteq \C_1\cup \cdots \cup \C_k\\
|X|=k}}\left|\bigcap_{S\in X}S\right|
$$

Additionally, we make the following connection between {\sf Unique} $k$-\maxcover
and panchromatic \intersection. 
\begin{proposition}
Every {\sf Unique} \maxcover instance \[\Gamma(V:=V_1\dot\cup \cdots \dot\cup V_k ,W:=W_1\dot\cup \cdots \dot\cup W_\ell,E,c,s)\] is also a panchromatic \intersection instance $\Gamma'(\C_1,\ldots ,\C_k,c',s')$ over universe $\U$ with monochromatic number $z$ where we have 
(i) $|\U|=|W|$,
 (ii) $\forall i\in[k]$, $|\C_i|=|V_i|$,
(iii) $c'=c\cdot \ell$,
(iv) $s'=s\cdot \ell$, and (v)
 $z\le |W|$.\label{prop:mctointer}
\end{proposition}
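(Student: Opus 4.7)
The plan is to give an explicit reinterpretation of $\Gamma$ as a panchromatic \intersection instance $\Gamma'$, and then to verify the five listed conditions. I would take $\U := W$, which yields (i) on the nose. For each $i\in[k]$, I would identify each $v\in V_i$ with its neighbourhood $N_W(v):=\{w\in W : (v,w)\in E\}$, and let $\C_i$ be the (labelled) collection of these $|V_i|$ sets; this gives (ii). I would also set $c':=c\cdot\ell$ and $s':=s\cdot\ell$.

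The crux of the argument is a single observation relating the two instances: for any labelling $(v_1,\dots,v_k)\in V_1\times\cdots\times V_k$, the identity
\[
\left|\bigcap_{i=1}^{k} N_W(v_i)\right| \;=\; \#\bigl\{j\in[\ell] : W_j \text{ is covered by } (v_1,\dots,v_k)\bigr\}
\]
should hold. To see this, I would decompose the common intersection along $W = W_1\dot\cup\cdots\dot\cup W_\ell$; the defining uniqueness property of {\sf Unique} \maxcover forces each piece $W_j\cap\bigcap_{i=1}^k N_W(v_i)$ to have size at most $1$, with size equal to $1$ precisely when $W_j$ is covered by the labelling. This is the only place where the {\sf Unique} hypothesis enters, and without it one would only obtain an inequality in one direction.

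Given this observation, items (iii) and (iv) are immediate. For (iii), a labelling realising $\maxcover(\Gamma)\ge c$ covers at least $c\ell$ super-nodes, so the corresponding panchromatic $k$-tuple in $\C_1\times\cdots\times\C_k$ has common intersection of size at least $c\ell = c'$. For (iv), any panchromatic $k$-tuple with common intersection exceeding $s\ell$ would, via the same identity, correspond to a labelling covering more than $s\ell$ super-nodes, contradicting $\maxcover(\Gamma)\le s$. Finally, (v) is trivial: every set in $\C_1\cup\cdots\cup\C_k$ is by construction a subset of $W = \U$, so any $k$-fold intersection has size at most $|W|$, and since the monochromatic number is realised as such an intersection, $z\le|W|$. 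There is no substantive obstacle here; the whole statement amounts to definitional unwinding, with the only non-routine step being the boxed identity, which follows directly from the {\sf Unique} structure.
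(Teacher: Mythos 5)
Your proposal is correct and follows essentially the same route as the paper: identify the universe with $W$, replace each vertex $v\in V_i$ by its neighbourhood in $W$, and use the uniqueness property to equate the number of covered right super-nodes with the size of the common intersection of a panchromatic $k$-tuple. Your write-up is in fact slightly more explicit than the paper's about where the {\sf Unique} hypothesis is used, but there is no substantive difference.
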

\begin{proof}
For every $w\in W$ we create a universe element $u_w\in\U$. 
For every $v\in V_i$ we create a set $S_v\in\C_i$ and we include $u_w$ in $S_v$ if there is an edge between $w$ and $v$ in $\Gamma$. 
Note that $w$ is a common neighbor of $(v_1,\ldots ,v_k)\in V_1\times \cdots V_k$ if and only if $u_w$ is in $\cap_{i\in[k]}S_{v_i}$. Furthermore note that since $\Gamma$ is an instance of {\sf Unique} $k$-\maxcover, we have that the quantity $\ell\cdot\left( \text{\maxcover}(\Gamma)\right)$ is simply the number of common neighbors of any $k$ vertices in $V$ when we pick one vertex from each $V_i$. The theorem statement then follows. 
\end{proof}

Finally, we define a contrapositive version of \intersection problem as this variant comes in handy to describe a gap creation approach in Appendix~\ref{sec:mincover}. 

\paragraph{\mincover problem}
The \mincover instance $\Gamma$ consists of a collection $\C$ of $n$ subsets of $[n]$ and integer parameters $c,s$ ($c<s$). 
In the \mincover problem, given input $\Gamma (\C,c,s)$, the goal is to distinguish between the two cases:
\begin{description}	
\item[Completeness] There exists $k$ sets $S_{i_1},\ldots ,S_{i_k}$ in $\C$ such that $\left|\underset{{r\in[k]}}{\cup}S_{i_r}\right|\le c$.
	\item[Soundness] For every $k$ sets $S_{i_1},\ldots ,S_{i_k}$ in $\C$ we have $\left|\underset{{r\in[k]}}{\cup}S_{i_r}\right|\ge s$.
\end{description}

\paragraph{Panchromatic \mincover problem}
The panchromatic \mincover instance $\Gamma$ consists of $k$ collections $\C_1,\ldots \C_k$ each containing $n$ subsets of $[n]$ and integer parameters $c,s$ ($c<s$). 
In the panchromatic \mincover problem, given input $\Gamma (\C_1,\ldots \C_k,c,s)$, the goal is to distinguish between the two cases:
\begin{description}	
\item[Completeness:] There exists $k$ sets $S_{i_1},\ldots ,S_{i_k}$ in $\C_1\times\cdots \times \C_k$ such that $\left|\underset{{r\in[k]}}{\cup}S_{i_r}\right|\le c$.
	\item[Soundness:] For every $k$ sets $S_{i_1},\ldots ,S_{i_k}$ in $\C_1\times\cdots \times\C_k$ we have $\left|\underset{{r\in[k]}}{\cup}S_{i_r}\right|\ge s$.
\end{description}

\section{Panchromatic and Threshold Graphs: Definitions and Results}
\label{sec:graphs}
Here, we define panchromatic and threshold graphs a little more carefully, and also state precisely what our constructions accomplish.

We start with \emph{panchromatic} graphs.

\begin{definition}[$(n,m,k,t,s,p)$-panchromatic graph]\label{def:pan}
 A bipartite graph $G(A,B)$ where $A$ is partitioned into $k$ parts $A_1,\ldots,A_k$ with $|A_1| = \dots = |A_k|=n$ and $|B|\le m$ satisfying the following pair of conditions.
\begin{description}	
\item[Completeness] 
For a $p$-fraction of the $k$-sets $\{a_1,a_2,....,a_k\}$ with $a_i \in A_i$ for $i \in [k]$, the number of common neighbours of $\{a_1,a_2,....,a_k\}$ in $B$ is exactly $t$, and every $k$-set $\{a_1,a_2,....,a_k\}$ with $a_i \in A_i$ for $i \in [k]$ has at most $t$ common neighbours in $B$.
\item[Soundness] For every set $X \subset A$ of size $k$ for which $A_i \cap X$ is empty for some $i\in [k]$, the number of common neighbours of $X$ in $B$ is at most $s$.
\end{description}
\end{definition}

Next, we turn to \emph{threshold} graphs.
\begin{definition}[$(n,m,k,t,s,p)$-threshold graph]\label{def:thres}
A bipartite graph $G(A,B)$ with $|A|=n$ and $|B|\le m$ satisfying the following pair of conditions.
\begin{description}	
\item[Completeness] For a $p$-fraction of $k$-sets of vertices $\{a_1,a_2,....,a_k\} \subset A$, the number of common neighbours of $\{a_1,a_2,....,a_k\}$ in $B$ is at least $t$.
\item[Soundness] For every $(k+1)$-set of vertices $\{a_1,a_2,....,a_{k+1}\}$ in $A$, the number of common neighbours of $\{a_1,a_2,....,a_{k+1}\}$ in $B$ is at most $s$.
\end{description}
\end{definition}

We show that both types of graphs may be constructed with reasonable dependencies between the various parameters involved. Both constructions are easy to describe, with the edge sets of the graphs in question coming from the varieties cut out by (carefully chosen) random polynomials; the analysis of these constructions is far from trivial however, and relies on some amount of machinery from algebraic geometry.

For panchromatic graphs, we have the following result which, in particular, ensures that such graphs exist.


\begin{theorem}\label{thm:pan_cons}
For each $k \in \NN$ and any integer $\lambda > 1$, there is a strictly increasing sequence $\{n_i \in \NN\}_{i\in\NN}$ such that for every $i\in\mathbb{N}$, there exists a distribution $\D_{k,\lambda,n_i}$ over bipartite graphs on $(k+1)n_i$ vertices with the following properties.
\begin{enumerate}
    \item A graph can be sampled from $\D_{k,\lambda,n_i}$ in ${O}_k(n_i^2)$ time using $O_k(n_i\log n_i)$ random coins.
    \item For $G\sim \D_{k,\lambda,n_i}$, writing $D =  \lambda(k^2 + 2) $, we have
    \[{\pr}\left(G\text{ is a }(n_i, n_i, k, D^k, D^{k}/\lambda, (4(D^k)!)^{-1})\text{-panchromatic graph} \right)\ge (4(D^k)!)^{-1}.\] 
\end{enumerate}
Moreover, for every $n\in\NN$, there exists $i\in\NN$ such that $n\le n_i\le 2^k\cdot n$.
\end{theorem}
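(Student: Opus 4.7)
The plan is to realize $\D_{k,\lambda,n_i}$ via the random algebraic construction hinted at in the introduction. Fix parameters $D = \lambda(k^2+2)$ and $d = k^2+2$ (so $D = \lambda d$); as $q$ ranges over prime powers, set $n_i = q^k$, and note that Bertrand's postulate applied to $n^{1/k}$ delivers the density condition on $\{n_i\}$. To sample $G \sim \D_{k,\lambda,n_i}$, I would independently sample $w_1,\dots,w_k \in \F_q[X_1,\dots,X_k]$ uniformly from degree-$D$ polynomials; for each $i \in [k]$ and $j \in [q^k]$, sample an independent random degree-$d$ polynomial $p_{i,j}$, and set $A_i = \{w_i + p_{i,j}\}_j$; then take $B = \F_q^k$ and join $f \in A$ to $x \in B$ iff $f(x) = 0$. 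Since each polynomial has $O_k(1)$ coefficients in $\F_q$, the total randomness is $O_k(n_i \log n_i)$ bits; assembling the graph by brute-force evaluation takes $O_k(n_i^2)$ time, giving the first conclusion.

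For the second conclusion, I would show that the following three events hold simultaneously with probability $\ge (4(D^k)!)^{-1}$: (E1) every panchromatic $k$-set has at most $D^k$ common neighbours in $B$; (E2) at least a $(2(D^k)!)^{-1}$-fraction of panchromatic $k$-sets have \emph{exactly} $D^k$ common neighbours; (E3) every non-panchromatic $k$-set has at most $D^k/\lambda$ common neighbours. Events (E1) and (E2) follow from the probabilistic B\'ezout theorem stated in the introduction: for any panchromatic $k$-tuple the associated $k$ degree-$D$ polynomials have more than $D^k$ common zeros with probability $O(q^{-D})$, and exactly $D^k$ common zeros with probability $\ge (1-o(1))/(D^k)!$. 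A union bound over the $n_i^k = q^{k^2}$ panchromatic sets gives $\pr[\text{(E1)}] \ge 1-o(1)$ (since $D \ge k^2 + 2$), and a reverse Markov inequality applied to the fraction in (E2), whose expectation is $\ge (1-o(1))/(D^k)!$, gives $\pr[\text{(E2)}] \ge (2(D^k)!)^{-1}$.

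For (E3), a non-panchromatic $k$-set with $\ell < k$ colour classes represented yields $\ell$ degree-$D$ polynomials together with $k - \ell \ge 1$ degree-$d$ pairwise \emph{differences} inside the over-represented classes; the B\'ezout bound on this mixed system gives at most $D^{\ell} \cdot d^{k-\ell} \le D^k (d/D)^{k-\ell} \le D^k/\lambda$ common zeros generically. Combined with a mixed-degree probabilistic B\'ezout estimate, with tail exponent large enough to survive a union bound over all $O_k(q^{k^2})$ non-panchromatic sets, this yields $\pr[\neg \text{(E3)}] = o(1)$; then $\pr[\text{(E1)} \cap \text{(E2)} \cap \text{(E3)}] \ge (2(D^k)!)^{-1} - o(1) \ge (4(D^k)!)^{-1}$ for $q$ large, as required. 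The main obstacle is carrying out these B\'ezout-type estimates for the specific polynomial distributions appearing here: conditional on $(w_1,\dots,w_k)$ the polynomials $w_i + p_{i,j}$ are \emph{not} uniformly random of degree $\le D$ but have their top-degree part pinned, and one must verify that the randomness in the $p_{i,j}$'s is nonetheless enough for a Lang--Weil-style count of common roots to go through with sharp enough tails. I expect the machinery developed in Section~\ref{sec:tools} to be engineered precisely to deliver both the $O(q^{-D})$ upper tail on ``too many roots'' and the $\gtrsim 1/(D^k)!$ lower bound on ``exactly $D^k$ roots'' under this twisted distribution.
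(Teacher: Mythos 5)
Your proposal is essentially the paper's own proof: same construction, same parameters, same three events, same union bounds, and the same difference trick for the non-panchromatic sets. The one ``main obstacle'' you flag at the end is in fact a non-issue: you should not condition on $(w_1,\dots,w_k)$. Since the $w_i$ are themselves uniform on $\F_q[X_1,\dots,X_k]_{\le D}$ and independent of the $p_{i,j}$, any panchromatic $k$-tuple $(w_1+p_{1,j_1},\dots,w_k+p_{k,j_k})$ is \emph{unconditionally} distributed as $k$ independent uniform polynomials of degree at most $D$, so Theorems~\ref{thm:toomany} and~\ref{thm:exactly} apply verbatim with no ``twisted distribution'' analysis needed; the only place the pinned top-degree part matters is the soundness case, where your passage to pairwise differences (yielding at least one independent uniform degree-$\le d$ polynomial) already disposes of it.
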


For threshold graphs, we have the following analogous result, which again, in particular, ensures that such graphs exist.

\begin{theorem}\label{thm:thresh_cons}
For each $k \in \NN$, there is a strictly increasing sequence $\{n_i \in \NN\}_{i\in\NN}$ such that for every $i\in\mathbb{N}$, there exists a distribution $\D_{k,n_i}$ over bipartite graphs on $2n_i$ vertices with the following properties.
\begin{enumerate}
    \item A graph can be sampled from $\D_{k,n_i}$ in ${O}_k(n_i^2)$ time using $O_k(n_i\log n_i)$ random coins.
    \item For $G\sim \D_{k,n_i}$, writing $d = (k+1)^2 + 1$, we have
    \[\pr \left(G\text{ is a }(n_i, n_i, k, n_i^{1/(k+1)}/2, d^{k+1}, 1)\text{-threshold graph}\right) \ge 1-o(1).\]
\end{enumerate}
Moreover, for every $n\in\NN$, there exists $i\in\NN$ such that $n\le n_i\le 2^k\cdot n$.
\end{theorem}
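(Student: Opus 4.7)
Our plan is to realize the construction sketched in Section~\ref{sec:introtechgraph}. Fix a prime power $q$ and set $n := q^{k+1}$; we take $B := \F_q^{k+1}$ and build $A$ by drawing $n$ polynomials $f_1,\dots,f_n$ independently and uniformly from the set of $\F_q$-polynomials in $X_1,\dots,X_{k+1}$ of total degree at most $d := (k+1)^2 + 1$, placing an edge from $f_i$ to $x \in B$ whenever $f_i(x) = 0$. Since each $f_i$ has $\binom{d+k+1}{k+1} = O_k(1)$ coefficients, the whole sample consumes $O_k(n \log n)$ random bits, and checking the $n^2$ candidate edges takes $O_k(n^2)$ time. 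The admissible $n_i$ are the $(k+1)$-th powers of prime powers, which, by a standard refinement of Bertrand's postulate, are dense enough to guarantee $n \le n_i \le 2^k n$ for every $n$.

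For the soundness clause, the plan is to combine a probabilistic B\'ezout-type bound with a union bound. The auxiliary input, a variant of the theorem advertised in Section~\ref{sec:introtechgraph} to be established in Section~\ref{sec:tools}, asserts that $k+1$ independent uniformly random polynomials of degree at most $d$ in $k+1$ variables have strictly more than $d^{k+1}$ common roots in $\F_q^{k+1}$ with probability at most $O(q^{-d})$. There are $\binom{n}{k+1} = O(q^{(k+1)^2})$ candidate $(k+1)$-subsets of $A$, and $d - (k+1)^2 = 1$ leaves a polynomial margin, so a direct union bound drives the total failure probability down to $O(q^{-1}) = o(1)$.

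For the completeness clause, the target is to show that \emph{every} $k$-subset of $A$ has common neighbourhood of size at least $q/2 = n^{1/(k+1)}/2$. The heuristic is clean: $k$ generic polynomials of degree $d$ in $k+1$ variables cut out a curve of degree at most $d^k$, and by the Lang--Weil estimate any geometrically irreducible component of this curve defined over $\F_q$ supplies $q + O_{d,k}(q^{1/2})$ rational points, comfortably exceeding $q/2$ once $q$ is large. The technical task is then to prove that a random $k$-tuple of degree $d$ polynomials \emph{fails} to produce a geometrically irreducible $\F_q$-component of dimension $\ge 1$ only with probability $o(q^{-k(k+1)})$, which is exactly the regime needed for a union bound over the $\binom{n}{k} = O(q^{k(k+1)})$ many $k$-subsets of $A$ to close.

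The main obstacle will be this last probabilistic estimate; everything else is essentially bookkeeping. The difficulty is that we must rule out, with quantitatively strong polynomial decay in $q$, all the ways the intersection of $k$ random polynomials can degenerate: it could acquire excess dimension (harmless, since this can only increase the point count), collapse to a zero-dimensional scheme with too few $\F_q$-points, or decompose into low-degree components each defined only over a proper extension of $\F_q$. The choice $d = (k+1)^2 + 1$ is made precisely so that, upon parametrising polynomial $k$-tuples by their coefficients, the locus of bad tuples has large enough codimension in the parameter space for the desired polynomial savings to follow from a dimension count combined with Lang--Weil. Developing this estimate is the main work of Section~\ref{sec:tools}, building on the random algebraic method pioneered in~\cite{bukh} and developed further in~\cite{bc, conlon}.
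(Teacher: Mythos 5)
Your construction, the randomness/time accounting, and the soundness analysis (Theorem~\ref{thm:toomany} for a fixed $(k+1)$-set plus a union bound over $\binom{n}{k+1}=O(q^{(k+1)^2})$ sets) all coincide with the paper's argument. The problem is the completeness clause. You correctly identify that the whole difficulty is showing that, for a \emph{fixed} $k$-set $U$, the event $|N(U)|<q/2$ has probability $o(q^{-k(k+1)})$, but the route you propose --- bounding the probability that $V(f_1,\dots,f_k)$ has no geometrically irreducible one-dimensional component defined over $\F_q$, and then invoking Lang--Weil on such a component --- is left entirely unproved, and it is not what Section~\ref{sec:tools} supplies. None of Lemma~\ref{lem:vanish}, Theorem~\ref{thm:toomany} or Theorem~\ref{thm:exactly} controls irreducibility of the random curve cut out by $k$ polynomials in $k+1$ variables, and obtaining the required failure probability $o(q^{-k(k+1)})$ for that event by a ``codimension of the bad locus'' count is a substantial open-ended task, not bookkeeping: knowing $\E[|N(U)|]=q$ does not preclude the Frobenius orbit structure of the components from making the $\F_q$-point count $O(\sqrt q)$ with non-negligible probability, and no estimate in your proposal rules this out.

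The paper sidesteps irreducibility entirely via a moment computation. For a fixed $k$-set $U$ and $v\in B$, let $I(v)$ be the indicator that all $f\in U$ vanish at $v$; Lemma~\ref{lem:vanish} gives $\pr(I(v_1)\cdots I(v_m)=1)=q^{-mk}$ for distinct $v_1,\dots,v_m$ and $m\le d$. Expanding $\E[Z^m]$ for $Z=|N(U)|$ yields $\sum_{r=1}^m\binom{q^{k+1}}{r}M_{r,m}q^{-rk}$, whence $\E[Z]=q$ and $\E[(Z-\E[Z])^d]=O(q)$, so $\pr(Z<q/2)\le \E[(Z-\E[Z])^d]/(q/2)^d=O(q^{1-d})$. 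Since $d=(k+1)^2+1$, the union bound over $\binom{n}{k}=O(q^{k(k+1)})$ sets gives total failure probability $O(q^{-1-k})=o(1)$. If you wish to keep your geometric route, you must actually prove the quantitative irreducibility estimate; otherwise you should replace that step with the moment argument above, which uses only Lemma~\ref{lem:vanish}.
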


\section{Zero sets of Random Polynomials}\label{sec:tools}

The aim of this section is to collect together the requisite tools from algebraic geometry that we require to prove Theorems~\ref{thm:pan_cons} and~\ref{thm:thresh_cons}. While we have attempted to keep the presentation self-contained for the most part, some of the arguments (unavoidably) assume some familiarity with algebraic geometry; for more background, we refer the reader to~\citep{sha, Ful84}.

A variety over an algebraically closed field $\oFF$ is a set of the form
\[V = \{x \in \oFF^k : f_1(x) = \dots = f_t(x) = 0\}\]
for some collection of polynomials $f_1, \dots, f_t\colon \oFF^k \rightarrow \oFF$; when we wish to make these polynomials explicit, we write $V(f_1, \dots, f_t)$ for $V$.  A variety is said to be \emph{irreducible} if it cannot be written as the union of two proper subvarieties. The \emph{dimension} $\dim V$ of a variety $V$ is then the maximum integer $d$ such that there exists a chain of irreducible subvarieties of $V$ of the form
\[\emptyset \subsetneq V_0 \subsetneq V_1 \subsetneq V_2 \subsetneq \dots \subsetneq V_d \subset V,\]
where $V_0$ consists of a single point. The \emph{degree} of an irreducible variety of dimension $d$ is the number of intersection points of the variety with $d$ hyperplanes in general position, and for an arbitrary variety $V$, we define its degree $\deg V$ to be the sum of the degrees of its irreducible components.

We need B\'ezout's theorem in the following form; for a proof, see~\citep[p.~223, Example 12.3.1]{Ful84}, for example.
\begin{lemma}\label{bezout}
	For a collection of polynomials $f_1, \dotsc, f_k \colon \oFF^k \rightarrow \oFF$, if the variety
	\[V = \{x \in \oFF^k : f_1(x) = \dots = f_k(x) = 0\}\]
	has $\dim V = 0$, then 
	\[|V| \leq \prod_{i=1}^k \deg(f_i).\]
	Moreover, for a collection of polynomials $f_1, \dotsc, f_t\colon \oFF^k \rightarrow \oFF$, the variety
	\[V = \{x \in \oFF^k : f_1(x) = \dots = f_t(x) = 0\}\]
	has at most $\prod_{i=1}^t \deg(f_i)$ irreducible components.
\end{lemma}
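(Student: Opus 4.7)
My plan is to derive both statements from the single degree bound
\[
\deg V(f_1,\ldots,f_t) \leq \prod_{i=1}^t \deg(f_i),
\]
where the degree of a (possibly reducible) variety is defined as the sum of the degrees of its irreducible components. I will obtain this bound by iterating the B\'ezout inequality $\deg(X \cap V(f)) \leq \deg(X)\cdot\deg(f)$, valid for an arbitrary variety $X \subseteq \oFF^n$ and polynomial $f$, starting from the base case $\deg(\oFF^n) = 1$ and intersecting successively with $V(f_1), V(f_2), \ldots$

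Both parts of the lemma then follow immediately. For the second part, the number of irreducible components of any variety is at most its degree, since each component contributes a positive integer to the defining sum; thus $V(f_1,\ldots,f_t)$ has at most $\prod \deg(f_i)$ irreducible components. For the first part, a zero-dimensional irreducible variety over the algebraically closed field $\oFF$ is a single reduced point and therefore has degree one, so when $\dim V = 0$ one has $|V| = \deg V \leq \prod_{i=1}^k \deg(f_i)$.

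The main obstacle is proving the B\'ezout inequality itself, and I would argue for it by passing to the projective closure and decomposing $X = \bigcup_\alpha X_\alpha$ into its irreducible components. Each component $X_\alpha$ on which $f$ vanishes identically satisfies $X_\alpha \cap V(f) = X_\alpha$ and contributes exactly $\deg X_\alpha$ to both sides of the inequality. For the remaining components, the key step is the classical intersection-theoretic fact that intersecting an irreducible projective variety of degree $D$ with a hypersurface of degree $d$ not containing it produces a variety of total degree at most $dD$; one reduces this to the curve case by taking generic linear sections, at which point it follows from the one-variable B\'ezout bound that a polynomial of degree $d$ has at most $d$ roots. Summing over $\alpha$ and restricting back to the affine chart yields the inequality. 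The delicate bookkeeping about components at infinity, and the fact that the affine degree agrees with the projective degree on components meeting the affine chart, is the only genuinely technical piece; it is carried out in Example 12.3.1 of Fulton, which is the cleanest reference to invoke here.
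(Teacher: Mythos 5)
Your proposal is correct and matches the paper's approach: the paper gives no proof of this lemma at all, simply citing Fulton's Example 12.3.1, which is exactly the refined B\'ezout inequality you reduce everything to. Your two routine deductions from it (components have degree at least one; a zero-dimensional irreducible variety over $\oFF$ is a single point of degree one) are sound, and your sketch of the iterated hypersurface-intersection bound, including the split into components contained in versus not contained in $V(f)$, is the standard argument.
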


In what follows, we let $q$ be a prime power and work with polynomials over $\F_q$, where $\mathbb{F}_q$ is the finite field of order $q$. All varieties below are over $\A$,  where $ \A = \oFF_q$ is the algebraic closure of $\F_q$, unless explicitly specified otherwise.  We let $\F_q[X_1, \dots, X_k]_{\leq d}$ be the subset of $\F_q[X_1, \dots, X_k]$ of polynomials in $k$ variables of degree at most $d$, i.e., the set of linear combinations over $\mathbb{F}_q$ of monomials of the form $X_1^{a_1} \dots X_k^{a_k}$ with $\sum_{i=1}^k a_i \leq d$. Let us note that one may sample a uniformly random element of $\F_q[X_1, \dots, X_k]_{\leq d}$ by taking the coefficients of the monomials above to be independent random elements of $\mathbb{F}_q$.

The first lemma we state estimates the probability of a randomly chosen polynomial passing through each of $m$ distinct points; see~\citep{bukh, conlon} for similar statements.

\begin{lemma}\label{lem:vanish}
  Suppose that $q > \binom{m}{2}$ and $d \geq m - 1$. Let $f$ be a uniformly random $k$-variate polynomial chosen from $\F_q[X_1, \dots, X_k]_{\leq d}$.
  \begin{enumerate}
    \item If $x_1, \dots, x_m$ are $m$ distinct points in $\F_q^k$, then
	\[ \pr \left(f(x_i) = 0 \mbox{ for all } i = 1, \dots, m\right) = q^{-m}. \]
    \item If $x_1, \dots, x_m$ are $m$ distinct points in $\oFF_q^k$, then
      \[ \pr \left(f(x_i) = 0 \mbox{ for all } i = 1, \dots, m\right) \leq q^{-m}. \]
  \end{enumerate}
\end{lemma}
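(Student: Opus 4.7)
My plan is to reinterpret both parts uniformly in terms of the $\F_q$-linear evaluation map. Let $V = \F_q[X_1,\dots,X_k]_{\le d}$, viewed as a finite-dimensional $\F_q$-vector space, and let $\phi\colon V \to \oFF_q^m$ be the $\F_q$-linear map $f \mapsto (f(x_1),\dots,f(x_m))$. Since a uniform $f \in V$ has uniformly distributed image in $\phi(V)$, the probability that $\phi(f) = 0$ equals $|\ker\phi|/|V| = q^{-r}$, where $r \eqdef \dim_{\F_q} \phi(V)$. The two parts of the lemma therefore reduce to computing, respectively bounding, the quantity $r$.

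For part (1), the image lies in $\F_q^m$, giving the upper bound $r \le m$ for free. For the matching lower bound I would exhibit, for each $i \in [m]$, a polynomial $g_i \in V$ with $g_i(x_j) = \delta_{ij}$. For each $j \ne i$ pick a coordinate $c=c(i,j)$ where $(x_i)_c \ne (x_j)_c$, set $\ell_{ij}(X) = X_c - (x_j)_c \in \F_q[X_1,\dots,X_k]_{\le 1}$, take $h_i = \prod_{j \ne i} \ell_{ij}$ (of degree $m-1 \le d$), and normalize $g_i = h_i/h_i(x_i)$; the normalizing factor lies in $\F_q^\times$ since $x_i \in \F_q^k$. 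The $\phi(g_i)$ are the standard basis vectors of $\F_q^m$, so $r \ge m$, and the probability equals $q^{-m}$ exactly.

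For part (2) the construction above breaks down because $(x_j)_c$ need not lie in $\F_q$, so the literal analogue of $\ell_{ij}$ is not an $\F_q$-polynomial; this is the main obstacle. I would handle it in two steps. \textbf{Separation via a linear functional.} I claim there is a $\lambda \in \F_q^k$ such that $L_\lambda(X) \eqdef \sum_c \lambda_c X_c$ takes pairwise distinct values at $x_1,\dots,x_m$. For a fixed pair $i \ne j$, the set $\{\lambda \in \F_q^k : L_\lambda(x_i - x_j) = 0\}$ is the kernel of the (nonzero) $\F_q$-linear functional $\lambda \mapsto \sum_c \lambda_c ((x_i)_c - (x_j)_c)$ on $\F_q^k$, so it has size at most $q^{k-1}$. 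The hypothesis $q > \binom{m}{2}$ and a union bound over pairs then produces a good $\lambda$. \textbf{Vandermonde.} With such an $L = L_\lambda$ in hand, the $m$ polynomials $1, L, L^2, \dots, L^{m-1}$ all lie in $V$ (since $m-1 \le d$), and their images under $\phi$ form the rows of the Vandermonde matrix $\bigl(L(x_j)^i\bigr)_{0 \le i \le m-1,\, 1 \le j \le m}$, whose determinant $\prod_{i<j}(L(x_j)-L(x_i))$ is nonzero because the $L(x_i)$ are distinct. Hence $r \ge m$, and the probability is at most $q^{-m}$, as required.

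Apart from the linear-separator trick, the only thing to verify is the routine fact that a uniform element of a finite $\F_q$-vector space, pushed forward by an $\F_q$-linear map, lands in each element of its image with probability $1/|\phi(V)|$; this is immediate since the fibers of $\phi$ are cosets of $\ker\phi$. I do not anticipate any further technical difficulty.
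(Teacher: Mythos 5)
Your proof is correct, and at its core it follows the same route as the paper: both arguments use the hypothesis $q > \binom{m}{2}$ to produce an $\F_q$-linear form taking distinct values at the $m$ points, and then reduce to a univariate interpolation fact. The difference is one of packaging. The paper changes variables so that the first coordinates of the points are distinct, splits $f = g+h$ with $h$ supported on $1, Z_1, \dots, Z_1^{m-1}$, and observes that for each $g$ exactly one choice of $h$ (at most one, over $\oFF_q$) works out of $q^m$; you instead compute the $\F_q$-rank $r$ of the evaluation map $\phi\colon f \mapsto (f(x_1),\dots,f(x_m))$ and use $\pr(\phi(f)=0) = q^{-r}$, certifying $r \ge m$ by explicit indicator polynomials in part (1) and by a Vandermonde determinant in part (2). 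Your framing buys two small things: part (1) needs no separating linear form at all, since the products of coordinate-difference linear forms already give a basis of $\F_q^m$ in the image; and the inequality in part (2) becomes transparent, because linear independence of $\phi(1),\phi(L),\dots,\phi(L^{m-1})$ over $\oFF_q$ implies independence over $\F_q$, while $\dim_{\F_q}\phi(V)$ exceeding $m$ only strengthens the bound. Your justification that $\{\lambda : L_\lambda(x_i - x_j) = 0\}$ has size at most $q^{k-1}$ (as the kernel of a nonzero $\F_q$-linear map into $\oFF_q$) is also correct, and is the same counting the paper performs when choosing its coefficients $a_2,\dots,a_k$.
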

\begin{proof}
    We prove the first statement below, and later outline the proof of the second statement.
  
	Let $x_i = (x_{i,1}, \dots, x_{i,k})$ for each $i = 1, \dots, m$. We choose elements $a_2, \dots, a_k  \in \mathbb{F}_q$ such that $x_{i,1} + \sum_{j=2}^k a_j x_{i, j}$ is distinct for all $i = 1, \dots, m$. To see that this is possible, note that there are exactly $\binom{m}{2}$ equations 
	\[x_{i,1} + \sum_{j=2}^k a_j x_{i, j} = x_{i',1} + \sum_{j=2}^k a_j x_{i', j},\] 
	each with at most $q^{k-2}$ solutions $(a_2, \dots, a_k)$. Therefore, since the total number of choices for $(a_2, \dots, a_k)$ is $q^{k-1}$ and $q^{k-1} > q^{k-2} \binom{m}{2}$, we can make an appropriate choice.
	
	We now consider $\F_q[Z_1, \dots, Z_k]_{\leq d}$, the set of polynomials of degree at most $d$ in the variables $Z_1, \dots, Z_k$, where $Z_1 = X_1 + \sum_{j=2}^k a_j X_j$ and $Z_j = X_j$ for all $2 \leq j \leq k$. Since this change of variables is an invertible linear map, $\F_q[Z_1, \dots, Z_k]_{\leq d}$ is identical to $\F_q[X_1, \dots, X_k]_{\leq d}$. It will therefore suffice to show that a randomly chosen polynomial from $\F_q[Z_1, \dots, Z_k]_{\leq d}$ passes through all of the points $z_1, \dots, z_m$ corresponding to $x_1, \dots, x_m$ with probability exactly $q^{-m}$. For this, we will use the fact that, by our choice above, $z_{i,1} \neq z_{i',1}$ for any $1 \leq i < i' \leq m$.
	
	For any $f$ in $\F_q[Z_1, \dots, Z_k]_{\leq d}$, we may write $f = g + h$, where $h$ contains all monomials of the form $Z_1^j$ for $j = 0, 1, \dots, m-1$ and $g$ contains all other monomials. For any fixed choice of $g$, there is, by Lagrange interpolation, exactly one choice of $h$ with coefficients in $\F_{q}$ such that $f(z_i) = 0$ for all $i = 1, \dots, m$, namely, the unique polynomial of degree at most $m-1$ which takes the value $-g(z_i)$ at $z_{i,1}$ for all $i = 1, 2, \dots, m$, where uniqueness follows from the fact that the $z_{i,1}$ are distinct.
        Since this is out of a total of $q^m$ possibilities, we see that the probability of $f$ passing through all of the $z_i$ is exactly $q^{-m}$, as required.
        
    For the second statement, we may argue identically, now working over $\kbar$ and noting that the unique polynomial of degree at most $m-1$ which takes the value $-g(z_i)$ at $z_{i,1}$ for all $i = 1, 2, \dots, m$ may now have coefficients in $\kbar$ as opposed to $\k$, whence we get an inequality as opposed to the equality in the first statement.
\end{proof}

The next result we prove allows us to upper bound the size of the $\F_q$-variety cut out by multiple random polynomials.

\begin{theorem}\label{thm:toomany}
Fix $t,k \in \NN$ with $t \le k$, and fix positive integers $d_1,\dots,d_t\in\NN$. Independently for each $i\in [t]$, sample $f_i$ from $\k[X_1, \dots, X_k]_{\leq d_i}$ uniformly at random.
Then
	\begin{equation} \label{eq:vardim}
		\pr\left(\dim V(f_1,\dots,f_t)>k-t\right)\leq C_tq^{-\min(d_1,\dots,d_t)}
	\end{equation}
for some constant $C_t = C_t(d_1,\dots,d_k) > 0$. In particular, if $t=k$, then 
\[ 		\pr\left(\left|V(f_1,\dots,f_k) \cap \F_q^k\right| > \prod_{i=1}^k d_i \right)\leq C q^{-\min(d_1,\dots,d_k)} \]
for some constant $C = C(d_1,\dots,d_k) > 0$.
\end{theorem}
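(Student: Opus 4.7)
The plan is to induct on $t$. The base case $t=1$ is immediate: the event $\dim V(f_1) > k-1$ occurs only when $f_1 \equiv 0$, which has probability $q^{-\binom{k+d_1}{k}}$, comfortably within the desired bound. For the inductive step, condition on $f_1,\dots,f_{t-1}$ and invoke the inductive hypothesis: except on a bad event of probability at most $C_{t-1} q^{-\min(d_1,\dots,d_{t-1})}$, the variety $V_{t-1} \eqdef V(f_1,\dots,f_{t-1})$ has dimension at most $k-t+1$, and by the second part of \cref{bezout} it has at most $\prod_{i=1}^{t-1} d_i$ irreducible components. Intersecting with the hypersurface cut out by $f_t$, the dimension $\dim V(f_1,\dots,f_t)$ can exceed $k-t$ only if $f_t$ vanishes identically on some irreducible component $W$ of $V_{t-1}$ with $\dim W = k-t+1$; on every other component, intersecting with $V(f_t)$ leaves a subvariety of dimension at most $k-t$.

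The key technical step is to bound, for each fixed $V_{t-1}$ satisfying the above, the probability (over the independent $f_t$) that $f_t$ vanishes identically on some such component $W$. Since $t \leq k$, every such $W$ has $\dim W \geq 1$ and thus contains infinitely many $\A$-points, so we may fix any $d_t+1$ distinct points $y_1,\dots,y_{d_t+1}\in W$. Applying \cref{lem:vanish}(2) to $f_t$ yields
\[\pr(f_t(y_1)=\dots=f_t(y_{d_t+1})=0) \leq q^{-(d_t+1)}\]
for $q$ sufficiently large in terms of $d_t$; small-$q$ regimes may be absorbed into $C_t$. A union bound over the at most $\prod_{i=1}^{t-1}d_i$ components yields a contribution of $\prod_{i=1}^{t-1} d_i \cdot q^{-(d_t+1)}$, and combining with the inductive hypothesis closes the induction with a suitable $C_t$.

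Taking $t=k$ in the established bound gives $\dim V(f_1,\dots,f_k) \leq 0$ except with probability $O(q^{-\min_i d_i})$; in this event, $V(f_1,\dots,f_k)$ is a finite set, and the first part of \cref{bezout} bounds its cardinality by $\prod_i d_i$, from which the desired bound on $|V(f_1,\dots,f_k)\cap \F_q^k|$ follows since this is a subset of $V(f_1,\dots,f_k)$. The main obstacle is the inductive step outlined above, specifically the need to exploit the independence of $f_t$ from $f_1,\dots,f_{t-1}$ in order to treat the (random) components of $V_{t-1}$ as fixed when applying \cref{lem:vanish}, and the observation that merely requiring $f_t$ to vanish at $d_t+1$ specific points already suffices to preclude its identical vanishing on a positive-dimensional component.
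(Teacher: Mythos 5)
Your proposal is correct and follows essentially the same route as the paper's proof: induction on $t$, bounding the number of irreducible components of $V(f_1,\dots,f_{t-1})$ via B\'ezout, fixing finitely many points on each positive-dimensional component and applying Lemma~\ref{lem:vanish} to the independent $f_t$, then a union bound, with the $t=k$ case finished by B\'ezout again. The only (harmless) deviations are cosmetic: you make the base case explicit, and you pick $d_t+1$ points per component rather than $d_t$, which still satisfies the hypotheses of Lemma~\ref{lem:vanish} and in fact gives a marginally stronger bound.
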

\begin{proof}
For terminology not defined here, and standard facts about dimension that we call upon without proof, see the first and the sixth chapter of~\citep{sha}.

	To establish \eqref{eq:vardim} it suffices show that
	\begin{equation}\label{eq:condvardim}
		\pr\left(\dim V(f_1,\dots,f_{t-1},f_t)>k-t \mid \dim V(f_1,\dots,f_{t-1})=k-t+1\right)\leq q^{-d_t}\prod_{i=1}^{t-1} d_i
	\end{equation}
	since~\eqref{eq:vardim} follows from \eqref{eq:condvardim} by induction on $t$.
	
	Now, sample polynomials $f_1,\dots,f_{t-1}$, and assume that the variety $U = V(f_1,\dots,f_{t-1})$ is of dimension $d-t+1$. By Lemma~\ref{bezout}, $U$ has at most $d_1\cdots d_{t-1}$ components, which we name $U_1,\dots,U_m$. Note that since $\dim U_i\leq \dim U=d-t+1$, and $U_i$ is intersection of $t-1$ hypersurfaces, each $U_i$ is of dimension exactly $d-t+1$. For each $U_i$, pick $d_t$ distinct points $x_{i,1},\dots,x_{i,d_t}$ on $U_i$.

	Since $f_t$ is a random polynomial of degree $d_t$, from Lemma~\ref{lem:vanish} we infer that
	\[
	\pr\left(U_i\subset V(f_t)\right)\leq \pr\left( f_t(x_i,j)=0 \text{ for all } j = 1, \dots, d_t \right) \leq q^{-d_t}
	\]
	for each $1 \le i \le m$.
	Hence, by the union bound
	\[
	\pr\left(\dim V(f_1,\dots,f_{t-1},f_t)>k-t\right)\leq \sum_{i=1}^m \pr\left(U_i\subset V(f_t)\right)\leq q^{-d_t}\prod_{i=1}^{t-1} d_i.
	\]
	proving \eqref{eq:condvardim}, and hence \eqref{eq:vardim}.  \smallskip
	
	If $t=k$, then
	\begin{align*}
		 		\pr\left(\left|V(f_1,\dots,f_k) \cap \F_q^k\right| > \prod_{i=1}^k d_i \right) &\leq 
	\pr \left( \abs{V(f_1,\dots,f_k)} > \prod_{i=1}^k d_i \right) \\
	&\le \pr(\dim V(f_1,\dots,f_k)> 0)\\
	&\le C_kq^{-\min(d_1,\dots,d_k)},
	\end{align*}
	 where the first inequality is trivial, the second is a consequence of Lemma~\ref{bezout}, i.e., B\'ezout's theorem, and the third is just~\eqref{eq:vardim} for $t = k$.
\end{proof}

Finally, we need a way to lower bound the size of the $\F_q$-variety cut out by multiple random polynomials, and the following result gives us what we need.
While the arguments thus far have been mostly elementary, this result is more involved.

\begin{theorem}\label{thm:exactly}
Fix positive integers $k, d_1, \dots,d_k \in \NN$. Independently for each $i\in [k]$, sample $f_i$ from $\k[X_1, \dotsc, X_k]_{\leq d_i}$ uniformly at random.
Then
\[ 		\pr\left(\left|V(f_1,\dots,f_k) \cap \F_q^k\right| = \prod_{i=1}^k d_i \right) \ge \frac{1 - cq^{-1/2}}{\left(\prod_{i=1}^k d_i\right)!} \]
for some constant $c = c(d_1, \dots,d_k) > 0$.
\end{theorem}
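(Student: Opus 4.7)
The plan is to apply the Lang--Weil estimate to a suitably defined incidence variety. Let $N = \sum_{i=1}^k \binom{d_i + k}{k}$ denote the dimension over $\k$ of the parameter space $\mathcal{P} := \prod_{i=1}^k \k[X_1, \dots, X_k]_{\le d_i}$, and set $D = \prod_{i=1}^k d_i$. Consider the incidence variety
\[
\mathcal{W} := \bigl\{ (f_1, \dots, f_k, P_1, \dots, P_D) \in \mathcal{P} \times (\A^k)^D : f_j(P_i) = 0 \text{ for all } i,j,\ \text{the }P_i\text{ are distinct}\bigr\}
\]
together with the subvariety $\mathcal{W}^\circ \subseteq \mathcal{W}$ defined as the Zariski closure of the open locus on which $V(f_1, \dots, f_k)$ is a reduced zero-dimensional subscheme of $\A^k$ of length exactly $D$ with $\{P_1, \dots, P_D\} = V(f_1, \dots, f_k)$.

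The main step is to establish that $\mathcal{W}^\circ$ is geometrically irreducible of dimension $N$. The projection $\pi_1 : \mathcal{W}^\circ \to \mathcal{P}$ is dominant and generically \'etale of degree $D!$: for generic $(f_1, \dots, f_k) \in \mathcal{P}$, B\'ezout combined with generic smoothness of a complete intersection (Bertini) shows that $V(f_1, \dots, f_k)$ consists of $D$ distinct geometric points, whose $D!$ orderings form the fiber. Geometric irreducibility of $\mathcal{W}^\circ$ then reduces to the classical fact that the monodromy group of this cover---a priori a transitive subgroup of $S_D$---is in fact the full symmetric group, which can be realised concretely by exhibiting one-parameter families of complete intersections in which any chosen pair of roots collides and then separates while the remaining ones stay put. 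This is the primary obstacle in the plan; the other ingredients follow from standard tools once it is in hand.

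Granted the irreducibility, $\deg \mathcal{W}^\circ$ is bounded by a quantity depending only on $k$ and the $d_i$ (via B\'ezout applied to the $kD$ defining equations $f_j(P_i) = 0$, each of bounded degree in the ambient coordinates of $\mathcal{P} \times (\A^k)^D$), so Lang--Weil yields $|\mathcal{W}^\circ(\k)| = q^N + O(q^{N - 1/2})$, with implicit constant depending only on $k$ and $d_1, \dots, d_k$. I would then count $|\mathcal{W}^\circ(\k)|$ a second way: for each $(f_1, \dots, f_k) \in \mathcal{P}(\k)$ with $\dim V(f_1, \dots, f_k) = 0$ and $Z := |V \cap \k^k| = D$, the fiber of $\pi_1$ contains precisely $D!$ rational points (the $D!$ orderings of the $\k$-rational set $V$); for $(f_1, \dots, f_k)$ with $\dim V = 0$ and $Z < D$ the fiber is empty, since one cannot form $D$ distinct $\k$-rational $P_i$. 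The only remaining $\k$-rational points of $\mathcal{W}^\circ$ lie over the closed locus $\{\dim V \ge 1\} \subseteq \mathcal{P}$, which by Theorem~\ref{thm:toomany} is a proper closed subvariety; consequently its preimage in the irreducible variety $\mathcal{W}^\circ$ is of dimension strictly less than $N$, contributing at most $O(q^{N-1})$ rational points by Lang--Weil applied to subvarieties.

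Combining the two counts and dividing by $q^N = |\mathcal{P}(\k)|$ then yields $D! \cdot \pr(Z = D) \ge 1 - cq^{-1/2}$ for a suitable $c = c(d_1, \dots, d_k) > 0$, which rearranges to the desired bound.
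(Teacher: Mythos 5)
Your overall architecture matches the paper's: both proofs introduce the incidence variety of tuples $(f_1,\dots,f_k,P_1,\dots,P_D)$ with $f_j(P_i)=0$, apply Lang--Weil to get $\approx q^N$ rational points, and divide by the generic fibre size $D!$ after discarding the contributions of the degenerate locus and of tuples with repeated points. The decisive step, however, is producing a piece of this variety that is \emph{geometrically irreducible, defined over $\F_q$, and of dimension $N$}, and here your route diverges from the paper's and is where the gap lies. You propose to prove that the entire variety $\mathcal{W}^\circ$ is geometrically irreducible by showing the monodromy group of the cover $\pi_1$ is all of $S_D$, citing this as a ``classical fact'' to be realised by colliding pairs of roots. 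This is not something you can wave at: the full-symmetric-monodromy statement for families of complete intersections is a genuine theorem (transitivity plus exhibiting a transposition), its standard proofs live in characteristic zero, and making the degeneration argument rigorous over $\overline{\F}_q$ (tameness of the local monodromy at the branch divisor, behaviour in small characteristic relative to the $d_i$) is substantial work that you have not done. You flag it yourself as ``the primary obstacle,'' and as written the proof does not get past it.

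The paper avoids this entirely with a much lighter observation: one does not need the whole incidence variety to be irreducible, only \emph{one} irreducible component of full dimension that is stable under Frobenius. It exhibits an explicit split point $f_i^*=\prod_{a\in A_i}(X_i-a)$ with its $D!$ orderings of $A_1\times\cdots\times A_k$, checks by a tangent-space computation that this is a \emph{simple} point of $W$ (so it lies on a unique geometric component $W_1$), and notes that since the Frobenius orbit of $W_1$ consists of components all containing this rational simple point, the orbit is trivial and $W_1$ is defined over $\F_q$. Lang--Weil applied to $W_1$ alone then gives the lower bound; note that even if the monodromy were a proper subgroup of $S_D$, this argument (and the theorem) would survive, which shows your irreducibility step is both harder and stronger than what is needed. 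Two smaller points: (i) since your $\mathcal{W}^\circ$ is a Zariski closure, it contains tuples with repeated $P_i$, so fibres over $\vf$ with $Z<D$ need not be empty as you claim; you must excise the diagonal locus and bound its rational points (the paper does this with the Schwartz--Zippel lemma for varieties, contributing $O(q^{N-1})$); (ii) the degree bound on $\mathcal{W}^\circ$ needed to make the Lang--Weil error term uniform should be recorded via B\'ezout, as you indicate. With the irreducibility step replaced by the simple-point argument, the rest of your counting goes through.
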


\begin{proof}
For terminology not defined here, and standard results that we quote without proof, see the first three chapters of~\citep{sha}.

We set $r_i =\binom{k+d_i}{k}$ for $1 \le i \le k$, write $\vr = (r_1,\dots,r_k)$ and $\abs{\vr}$ for $r_1+\dotsb+r_k$. For $1 \le i \le k$, we identify $\A^{r_i}$ with $\A[X]_{\leq d_i}$, i.e., the space of polynomials in $k$ variables of degree at most $d_i$ with coefficients in $\A$. 
For brevity, we write $\A^{\vr}$ in place of $\A^{r_1}\times\dots \times \A^{r_{k}}$ (and $\F_q^{\vr}$ in place of $\F_q^{r_1}\times\dots \times \F_q^{r_{k}}$), and to distinguish the space where we evaluate our polynomials from these spaces of polynomials themselves, we set $Y = \A^k$.

Also, for $\vf=(f_1,\dots,f_k)\in \A^{\vr}$, we abbreviate the variety $V(f_1, \dotsc, f_k) \subset Y$ by $V({\vf})$. Now, set $t = d_1 \cdots d_k$ and call $\vf \in \k^{\vr}$ \emph{good} if the variety $V({\vf})$ is zero-dimensional and has $t$ distinct points that are defined over $\k$. In this language, note that we are trying to show, for large $q$, that roughly $1/t!$ of all the points in $\k^{\vr}$ are good. To this end, we set
\[
W = \{(\vf,y_1,\dots,y_t) \in \A^{\vr} \times Y^t : y_j\in V({\vf}) \text{ for all } j=1,\dots,t\},\]
and deduce the result from the following claim. 

\begin{claim}\label{main-thm}
	Suppose that $(\vf^*,\vy^*)$ is a simple point of $W$ such that $\vf^*$ is good and the coordinates of $\vy^* = (y_1^*,\dots,y_t^*)$ are all distinct, and that for generic $\vf$, the variety $V({\vf})$ is zero-dimensional of degree~$t$. Then there are at least 
	\[\frac{1-cq^{-1/2}}{t!}q^{\abs{\vr}}\] good points in  $\k^{\vr}$, for some constant $c = c(d_1, \dotsc,, d_k) > 0$. 
\end{claim}
\begin{proof}
	Since $(\vf^*,\vy^*)$ is simple, the irreducible component of $W$ containing it is unique. 
	Let $W_1$ be the irreducible component of $W$ containing $(\vf^*,\vy^*)$ and note that $\dim W_1=\dim W$. 
	Since the variety $V({\vf})$ is generically zero-dimensional of degree~$t$, the fibres $W_{\vf} = \{\vy \in Y^t : (\vf,\vy)\in W\}$ of $W$ are generically finite, whence we get $\dim W_1 = \dim W=\abs{\vr}$.
	
	Let $\{W_1,\dots,W_m\}$ be the orbit of $W_1$ under the action of the Frobenius endomorphism. Since $W$ is defined over $\k$, and hence invariant under this action,
	each such $W_i$ is an irreducible component of $W$. Note that $(\vf^*,\vy^*)\in W_i$ for each $i\in [m]$, so if $m>1$, this contradicts the uniqueness of the component containing $(\vf^*,\vy^*)$. Thus, $m=1$, i.e., $W_1$ is defined over $\k$.
		
	Since $(\vf^*,\vy^*)\in W_1$, the variety $W_1$ is not contained in 
	\[U = \bigcup_{i\neq j}\{(\vf,\vy) : y_i=y_j\}.\]
	Hence, $W_1\cap H$ is a proper subvariety of $W_1$, and therefore contains $O_{\deg W_1}(q^{\abs{\vr}-1})$ points by the Schwartz--Zippel lemma for varieties~\citep[Lemma~14]{bukh_tsimerman}. Since $W_1$ is defined over $\k$ and is irreducible over $\A$, the Lang--Weil estimate~\citep{LW54} implies that $W_1$ contains at least
	\[q^{\dim W_1}\left(1-O_{\deg W_1}( q^{-1/2})\right)\]
	points defined over $\k$. Hence, $W_1\setminus H$ contains at least
	 	\[q^{\abs{\vr}}\left(1-O_{\deg W_1}( q^{-1/2}) - O_{\deg W_1}( q^{-1})\right) = q^{\abs{\vr}}\left(1-O_{\deg W_1}( q^{-1/2})\right)\] 
	points defined over $\k$ as well. Since each good point $\vf$ corresponds to exactly $t!$ points of $W_1\setminus H$ defined over $\k$, the result follows.
\end{proof}
To finish, it remains to show that the simplicity and genericity hypotheses in Claim~\ref{main-thm} are satisfied. 

For $1 \le i \le k$,  pick an arbitrary set $A_i \subset \k$ of size $d_i$. Define $\vf^* = (f^*_1, \dots,f^*_k)$ by setting $f_i^*=\prod_{a\in A_i}(X_i-a)$ for $1 \le i \le k$ and let $\vy^*$ be the vector of length $d_1 \cdots d_k$ whose coordinates are all the elements of $A_1 \times \dots \times A_k$.

To prove that $(\vf^*,\vy^*)$ is simple, consider the tangent space of $W$ at $(\vf^*,\vy^*)$, which we denote $T_{*} W$.
An element $(\delta \vf,\delta \vy)\in \A^{\vr} \times Y^{t}$ is in $T_{*} W$ if it is a solution to the system of equations
\[
\delta f_i(y_j^*)+\frac{\partial f_i}{\partial x_i}(y_j^*)(\delta y_j)_i =0 
\]
for all $i\in [k]$ and $j\in [t]$.
From these equations, it is clear that for every $\delta \vf\in \A^{\vr}$ there is a unique $\delta \vy$ such that $(\delta \vf,\delta \vy)$ is in the tangent space. Hence $\dim T_{*} W=\dim \A^{\vr}=\dim W$, so it follows that $(\vf^*,\vy^*)$ is simple.

Next, the statement that for generic $\vf$, the variety $V({\vf})$ (is zero-dimensional and) has at most $t=d_1 \cdots d_k$ points is the generalized B\'ezout's theorem. The construction of $(\vf^*,\vy^*)$ above shows that $V({\vf})$ generically has at least $t$ points as well.

We have established the hypotheses under which Claim~\ref{main-thm} applies; the result follows.
\end{proof}

\section{Constructions of Panchromatic Graphs and Threshold Graphs}\label{sec:cons}

First, we give the construction of panchromatic graphs using random polynomials.

\begin{proof}[Proof of Theorem~\ref{thm:pan_cons}]
Let $q$ be a prime power, and let $\F_q$ be the finite field of order $q$. We shall assume that $k\in \NN$ and $\lambda > 1$ are fixed, and that $q$ is sufficiently large as a function of $k$. Finally, let us fix $d = k^2+2$, $D =  \lambda d $ and $n=q^{k}$. In the rest of the proof, all asymptotic notation will be in the limit of $q \to \infty$.

We shall construct a panchromatic graph between two sets $A$ and $B$ as follows. First, choose polynomials $w_1, \dotsc, w_k \in \F_q [X_1, \dotsc, X_k]_{\le D}$  independently and uniformly at random. Next, for $i \in [k]$, let $A_i$ be a set of $n$ vertices each associated with a polynomial $w_i + p$, where $p \in  \F_q [X_1,\dotsc, X_k]_{\le d}$ is chosen uniformly at random and independently for each vertex; note here that the distribution of the resulting polynomial $w_i + p$ is also uniform on $\F_q [X_1, \dotsc, X_k]_{\le D}$. Let $A$ be the disjoint union $\dot\cup_{i=1}^k A_i$, and set $B = \F_q^k$, so that $|A| = k q^k$ and $|B| = q^k$. Finally, let $G$ be the (random) graph between $A$ and $B$ where a polynomial $f \in A$ is joined to a point $x \in B$ if $f(x) = 0$. We shall show that $G$ has the requisite properties with probability at least ${(4(D^k)!)^{-1}}$.

First, we count the number of $k$-sets $U = \{f_1,f_2,\dotsc,f_k\}$ with $f_i \in A_i$ for which the size of the common neighbourhood $N(U)$ in $G$ exceeds $D^k$. For such a set~$U$, observe that $N(U)$ is the set of $\F_q$-solutions of $k$ polynomials from $\F_q[X_1, \dotsc, X_{k}]_{\le D}$ chosen independently and uniformly at random, so by Theorem~\ref{thm:toomany}, we have
\[ \pr(|N(U)| > D^{k}) = O(q^{-D}).\]
Writing $B_1$ for the number of such $k$-sets, we get 
\begin{equation}\label{eq:pb1}
	\E[B_1] = O\left(n^kq^{-D}\right)= O\left(q^{k^2}q^{-\lambda(k^2+2)}\right) = O(q^{-2}) \le 1/q.
\end{equation}

Next, we count the number of $k$-sets $U = \{f_1,f_2,....,f_k\}$ with $f_i \in A_i$ for $i \in [k]$ for which size of the common neighbourhood $N(U)$ in $G$ is exactly $D^k$. As above, for such a set $U$, observe that $|N(U)|$ is distributed as the number of $\F_q$-solutions of $k$ polynomials from $\F_q[X_1, \dots X_{k}]_{\le D}$ chosen independently and uniformly at random, so by Theorem~\ref{thm:exactly}, we have
\[ \pr(|N(U)| = D^{k}) \ge (2(D^k)!)^{-1}.\]
Writing $B_2$ for the number of such $k$-sets, we get 
\begin{equation}\label{eq:pb2}
	\E[B_2] \ge n^k(2(D^k)!)^{-1}.
\end{equation}

Finally, we count the number of $k$-sets $U \subset A$ with $A_i \cap U$ being empty for some $i\in [k]$ for which the size of the common neighbourhood $N(U)$ in $G$ exceeds $dD^{k-1} = D^k / \lambda$. For such a set $U$, observe that $|N(U)|$ is distributed as the number of $\F_q$-solutions of a collection of $k$ random polynomials. To understand the distribution of this random collection of polynomials, for each $i \in [k]$ for which $U \cap A_i \ne \emptyset$, we pick one element $U \cap A_i$ and subtract that from every other element of $U \cap A_i$; observe that by doing so, we get a set $\{g_1, \dotsc, g_k\}$ of independent random polynomials, each uniform over either $\F_q[X_1, \dots X_{k}]_{\le d}$ or $\F_q[X_1, \dotsc, X_{k}]_{\le D}$, and at least one of which is uniform over $\F_q[X_1, \dots X_{k}]_{\le d}$. Since $|N(U)|$ is then number of $\F_q$-solutions of  $\{g_1, \dotsc, g_k\}$, we deduce from Theorem~\ref{thm:toomany} that
\[ \pr(|N(U)| > dD^{k-1} ) = O(q^{-d}).\]
Writing $B_3$ for the number of such $k$-sets, we get 
\begin{equation}\label{eq:pb3}
	\E[B_3] = O\left((kn)^kq^{-d}\right)= O\left(q^{k^2}q^{-k^2-2}\right) = O(q^{-2}) \le 1/q.
\end{equation}

We combine~\eqref{eq:pb1},~\eqref{eq:pb2} and~\eqref{eq:pb3} as follows. Clearly, $\E[B_1 + B_3] = o(1)$, so by Markov's inequality, both $B_1$ and $B_2$ are zero with probability $1-o(1)$. Finally, since $B_2$ is trivially at most $n^k$ and $\E[B_2] \ge n^k(2(D^k)!)^{-1}$, it is easily checked that
\[ \pr\left(B_2\ge n^k(4(D^k)!)^{-1} \right) \ge {(2(D^k)!)^{-1}}.\]
By the union bound, we see that $G$ is a $(n, n, k, D^k, D^{k}/\lambda, (4(D^k)!)^{-1})$-panchromatic graph with probability at least ${(4(D^k)!)^{-1}}$, completing the proof.
\end{proof}

Next, we give the construction of threshold graphs, once again using random polynomials.

\begin{proof}[Proof of Theorem~\ref{thm:thresh_cons}]
As before, let $q$ be a prime power, and let $\F_q$ be the finite field of order $q$. We shall assume that $k\in \NN$ is fixed, and that $q$ is sufficiently large as a function of $k$. Let $d=(k+1)^2 + 1$ and $n=q^{k+1}$. We shall construct a threshold graph between two sets $A$ and $B$ both of size $q^{k + 1}$. In the rest of the proof, all asymptotic notation will be in the limit of $q \to \infty$.

We construct $A$ by sampling $q^{k+1}$ random polynomials from $\F_q[X_1, \dotsc, X_{k+1}]_{ \le d}$ uniformly and independently, set $B = \F_q^{k+1}$, and define a (random) bipartite graph $G$ between $A$ and $B$ by joining $f \in A$ to $x \in B$ if $f(x) = 0$. We shall show that $G$ has the requisite properties with probability $1-o(1)$.

First, we consider the soundness properties of $G$. Fix a set $U\subset A$ of size $k+1$. The size of its common neighbourhood $N(U)$ in $G$ is distributed as the number of $\F_q$-solutions of $k+1$ polynomials from $\F_q[X_1, \dots X_{k+1}]_{\le d}$ chosen independently and uniformly at random, so by Theorem~\ref{thm:toomany}, we have
\[ \pr(|N(U)| > d^{k+1}) = O(q^{-d}).\]
Call a set of $k+1$ vertices of $G$ \emph{bad} if their common neighbourhood has more than $d^{k+1}$ vertices. The number $B_1$ of bad $(k+1)$-sets then satisfies 
\begin{equation}\label{eq:eb1}
	\E[B_1] = O\left(\binom{n}{k+1}q^{-d}\right)= O\left(\binom{q^{k+1}}{k+1}q^{-(k+1)^2 -1}\right) = O(q^{-1}) = o(1).
\end{equation}

Next, we turn to the completeness properties of $G$. Fix a set $U\subset A$ of size $k$. For $v\in B$, put $I(v)=1$ if $f(v)=0$ for all $f\in U$, and $I(v)=0$ if $f(v)\neq 0$ for some $f\in U$. For $1 \le m \le d$ and distinct $v_1,\dots, v_m \in B$, we have
\begin{align*}
\pr\left(I(v_1)\cdots I(v_m) = 1\right) = \prod_{f \in U} \pr\left(f(v_j) = 0 \text{ for all } j = 1, \dots, m\right)= q^{-mk},
\end{align*}
where the first equality is by independence, and the second is by Lemma~\ref{lem:vanish}. Small moments of the random variable $Z = \abs{N(U)}$ are now easily computed: for $1 \le m \le d$, we have
\begin{align}\label{nbrhdmoment}
  \E\left[ Z^m \right] &= \E\left[ \left(\sum_{v\in B} I(v) \right)^m \right] \nonumber \\
                               &= \E\left[ \sum_{v_1,\dots,v_m\in B} I(v_1)\cdots I(v_m)\right] \nonumber \\
                               &= \sum_{v_1,\dots,v_m\in B} \E[I(v_1)\cdots I(v_m)] \nonumber \\
                               &= \sum_{r=1}^m \binom{q^{k+1}}{r}M_{r,m} q^{-rk},
\end{align}
where $M_{r,m}$ is the number of surjective functions from an $m$-element set onto an $r$-element set. Combining~\eqref{nbrhdmoment} and some standard identities for the Stirling numbers of the second kind, we get that
\[
\E\left[ (Z - \E[Z])^d \right] = O(q) \text{ and } \E[Z] = q,
\]
whence it follows that 
 \[ \pr(Z < q/2 ) \le \pr(\abs{Z - \E[Z]}< q/2 ) \le \frac{\E\left[ (Z - \E[Z])^d \right] }{(q/2)^d} = O\left(q^{1-d}\right).
 \]
Call a set of $k$ vertices of $G$ \emph{bad} if their common neighbourhood has fewer than $q/2$ vertices. The number $B_2$ of bad $k$-sets then satisfies 
\begin{equation}\label{eq:eb2}
  \E[B_2] = O\left(\binom{n}{k} q^{1-d} \right) = O\left(\binom{q^{k+1}}{k}q^{-(k+1)^2 }\right) = O(q^{-1 - k}) = o(1).
\end{equation}

Combining~\eqref{eq:eb1} and~\eqref{eq:eb2}, we see that
\[\E[B_1 + B_2] = o(1); \]
it follows from Markov's inequality that $B_1 + B_2 = 0$ (and hence $B_1 = B_2 = 0$) with probability $1-o(1)$, so $G$ is a $(q^{k+1},q^{k+1},k,q/2, d^{k+1},1)$-threshold graph with probability $1-o(1)$, completing the proof.
\end{proof}

A quantitatively weaker version of Theorem~\ref{thm:thresh_cons} can alternately be proved utilising less randomness by building a bipartite graph between two copies of $\F_q^{k+1}$ by choosing a single random polynomial $f$ in $2k+2$ variables of degree $2k^2$ and joining pairs of points $x,y \in \F_q^{k+1}$ for which $f(x,y) = 0$; however, the analysis of this construction relies on more machinery, and furthermore, yields ineffective parameter dependencies.

\section{Conditional Time Lower Bounds for \protect\intersection}
\label{sec:inter}

In this section we prove the formal versions of Theorems~\ref{thm:introETH}~and~\ref{thm:introSETH} in Sections~\ref{sec:ETH}~and~\ref{sec:SETH} respectively. But first, we describe  in Section~\ref{sec:PGC}, the \PGC framework. 

\subsection{Panchromatic Graph Composition}\label{sec:PGC}

Given a panchromatic problem and a panchromatic graph, we would like to compose them in some way such that we obtain a \emph{monochromatic} version of the panchromatic problem having the property that every optimal solution of the monochromatic version can be traced back to  an optimal solution of the panchromatic version. When we  say the \PGC technique, we  use it as an umbrella name for this composition operation. Typically the composition would be a product operation as is the case below for the \intersection problem. 

\begin{theorem}[Panchromatic Graph Composition]\label{thm:PGC}
  There is an algorithm that given as input
  \begin{enumerate}
      \item an instance $\Gamma(\C_1,\ldots,\C_k,c,s)$ of panchromatic \intersection over universe $\U$ with monochromatic number $z$, and
      \item an (n,m,k,t,w,p)-panchromatic graph $H(A:=(A_1\dot\cup\cdots \dot\cup A_k),B)$,
  \end{enumerate}
  then outputs an instance $\Gamma'(\C',ct,\max(st,zw))$ of \intersection over universe $\U'$ such that the following hold:
  \begin{description}
  \item[Size] $|\C'|=|\C_1|+\cdots+ |\C_k|$ and $|\U'|=|\U|\cdot |B|$.
\item[Completeness] If there exists  a $k$ tuple of sets $(S_{i_1},\ldots ,S_{i_k})$ in $\C_1\times \cdots \times \C_k$ such that \[\left|\underset{{r\in[k]}}{\bigcap}S_{i_r}\right|\ge c,\] then with probability $p$ there exists $k$ sets $S_{i_1}',\ldots ,S_{i_k}'$ in $\C'$ such that \[\left|\underset{{r\in[k]}}{\bigcap}S_{i_r}'\right|\ge ct.\]
	\item[Soundness] If for every $k$ tuple of sets $(S_{i_1},\ldots ,S_{i_k})$ in $\C_1\times \cdots\times \C_k$ we have \[\left|\underset{{r\in[k]}}{\bigcap}S_{i_r}\right|\le s,\] then for every $k$ sets $S_{i_1}',\ldots ,S_{i_k}'$ in $\C'$ we have \[\left|\underset{{r\in[k]}}{\bigcap}S_{i_r}'\right|\le \max(st,zw).\]
	\item[Running Time] The reduction runs in $\tilde{O}(|\C'|\cdot |\U'|)$ time.
  \end{description}\end{theorem}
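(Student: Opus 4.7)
The plan is to implement the product construction sketched in Section~\ref{sec:introtechseth}, assuming (as will be the case in the applications) that $n \ge \max_i |\C_i|$. Set the new universe $\U' := \U \times B$. Independently for each color $i \in [k]$, pick a uniformly random injection $\pi_i \colon \C_i \hookrightarrow A_i$. For every $S \in \C_i$, define $S' := S \times N_H(\pi_i(S)) \subseteq \U'$, and let $\C' := \bigsqcup_{i \in [k]} \{S' : S \in \C_i\}$, regarded as a single collection. The size equalities $|\C'| = \sum_i |\C_i|$ and $|\U'| = |\U|\cdot |B|$, and the $\tilde{O}(|\C'| \cdot |\U'|)$ running time, are then immediate from the construction.

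The analysis rests on the factorization
\[
\bigcap_{r \in [k]} T_r \;=\; \Bigl(\bigcap_{r \in [k]} S_{i_r}\Bigr) \;\times\; \Bigl(\bigcap_{r \in [k]} N_H(\pi_{j_r}(S_{i_r}))\Bigr),
\]
valid whenever $T_r \in \C'$ is the set obtained from some $S_{i_r} \in \C_{j_r}$. For completeness, given a panchromatic tuple $(S_1, \dotsc, S_k) \in \C_1 \times \dotsb \times \C_k$ with $|\bigcap_r S_r| \ge c$, the independence of the $\pi_i$ across colors makes $(\pi_1(S_1), \dotsc, \pi_k(S_k))$ uniformly distributed on $A_1 \times \dotsb \times A_k$; so with probability at least $p$ this tuple falls in the good $p$-fraction whose common neighborhood in $B$ has size exactly $t$, and on this event the factorization yields $|\bigcap_r T_r| \ge ct$.

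For soundness, fix any $k$ sets $T_1, \dotsc, T_k \in \C'$ with $T_r$ derived from $S_{i_r} \in \C_{j_r}$, and split on the color pattern $\{j_1, \dotsc, j_k\}$. If this equals $[k]$, then the soundness of $\Gamma$ yields $|\bigcap_r S_{i_r}| \le s$ while the completeness clause of $H$ (every panchromatic $k$-set has at most $t$ common neighbors) yields $|\bigcap_r N_H(\pi_{j_r}(S_{i_r}))| \le t$, so $|\bigcap_r T_r| \le st$. Otherwise some color class is omitted; the soundness clause of $H$ then gives $|\bigcap_r N_H(\pi_{j_r}(S_{i_r}))| \le w$, and combined with the bound $|\bigcap_r S_{i_r}| \le z$ coming straight from the definition of the monochromatic number of $\Gamma$, this yields $|\bigcap_r T_r| \le zw$. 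Taking the maximum gives the claimed bound $\max(st,zw)$. The only subtle point, and the reason the completeness is probabilistic, is the uniformity of $(\pi_1(S_1), \dotsc, \pi_k(S_k))$ on $A_1 \times \dotsb \times A_k$, which is ensured precisely by choosing the $\pi_i$ independently across color classes; the use of the monochromatic number $z$ in the non-panchromatic soundness case is exactly what prevents the bound from collapsing, explaining the $\max(st,zw)$ form in the statement.
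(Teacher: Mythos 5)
Your proposal is correct and follows essentially the same route as the paper's proof: the same product construction $S' = S \times N_H(\pi_i(S))$ with independent uniformly random injections $\pi_i\colon \C_i \hookrightarrow A_i$, the same use of uniformity of $(\pi_1(S_1),\dotsc,\pi_k(S_k))$ for the probability-$p$ completeness, and the same case split in the soundness analysis on whether the chosen sets are panchromatic (giving $st$ via the completeness upper bound of $H$) or repeat a colour (giving $zw$ via the soundness of $H$ and the monochromatic number). Your explicit statement of the product factorization of the common intersection is a clean way of packaging what the paper argues element-by-element.
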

\begin{proof}
We define $\U':=\U\times B$. For every $r\in[k]$, let $\pi_r\colon\C_r\to A_r$ be a uniformly random one-to-one mapping. Moreover, for every $r\in[k]$, let $\zeta_r:\C_r\to 2^{\U'}$ be a function which maps a set in $\C_r$ to a subset of $\U'$ in $\C'$ in the following way: 
For every $S\in \C_r$, we include $\zeta_r(S)$ in $\C'$, where $(u,b)\in \U\times B$ is contained in $\zeta_r(S)$ if and only if   $u\in S$ and $(\pi_r(S),b)\in E(H)$.

Let us suppose that there exists  a $k$ tuple of sets $(S_{i_1},\ldots ,S_{i_k})$ in $\C_1\times \cdots \times \C_k$ such that \[\left|\underset{{r\in[k]}}{\bigcap}S_{i_r}\right|\ge c,\]  then consider the $k$-tuple of vertices $(\pi_1(S_{i_1}),\ldots ,\pi_k(S_{i_k}))$ in $A_1\times \cdots \times A_k$. Since $\pi_1,\ldots ,\pi_k$ were picked uniformly and independently at random, the aforementioned $k$-tuple of vertices in $A$ are $k$ uniform random vertices and thus from the completeness of the panchromatic graph, we have that with probability $p$ there exists a set of  $t$ vertices in $B$, denoted by $B'$, which are all common neighbors of $(\pi_1(S_{i_1}),\ldots ,\pi_k(S_{i_k}))$. Let $u\in \underset{{r\in[k]}}{\bigcap}S_{i_r}$ and $b\in B'$. It follows that $(u,b)\in \zeta_r(S_{i_r})$. In other words, we have:  
\[\left|\underset{{r\in[k]}}{\bigcap}\zeta_r(S_{i_r})\right|\ge c\cdot |B'|\ge ct.\]

On the other hand let us suppose that for every $k$ tuple of sets $(S_{i_1},\ldots ,S_{i_k})$ in $\C_1\times \cdots\times  \C_k$ we have \[\left|\underset{{r\in[k]}}{\bigcap}S_{i_r}\right|\le s.\] 
For the sake of contradiction, let there be $k$ sets $S_{i_1}',\ldots ,S_{i_k}'$ in $\C'$ such that  \[\left|\underset{{r\in[k]}}{\bigcap}S_{i_r}'\right|> \max(st,zw).\]
By construction of $\C'$, we have that for every $r\in[k]$, there exists $\ell_r\in[k]$  and $S_{i_r}\in \C_{\ell_r}$ such that  such that $\zeta_{\ell_r}(S_{i_r})=S_{i_r}'$.
Let $D:=\{\ell_r\mid r\in [k]\}$. Suppose that $|D|=k$, i.e., for every distinct $r_1,r_2\in[k]$ we have that $S_{i_{r_1}}$ and $S_{i_{r_2}}$ are both not in the same collection $\C_r$ (for some $r\in[k]$). Without loss of generality, we will assume $\ell_r=r$. Consider the $k$-tuple of vertices $(\pi_1(S_{i_1}),\ldots ,\pi_k(S_{i_k}))$ in $A_1\times \cdots \times A_k$. From the completeness of the panchromatic graph, we have that the set of common neighbors of $(\pi_1(S_{i_1}),\ldots ,\pi_k(S_{i_k}))$ in $B$,  denoted by $B'$,  is of size at most $t$. Thus, we have the following contradiction: 
		\[\left|\underset{{r\in[k]}}{\bigcap}S_{i_r}'\right|\le \left|\underset{{r\in[k]}}{\bigcap}S_{i_r}\right|\cdot |B'|\le st.\]
		
		Next, we suppose that $|D|<k$. Without loss of generality, we assume that $\ell_1=\ell_2$. 
Let $X:=\{\pi_{\ell_r}(S_{i_r})\mid r\in[k]\}\subseteq A$. By the soundness of the panchromatic graph, we have that the set of common neighbors of $X$ in $B$, denoted by $B'$ is at most size $w$. Thus, we have the following contradiction: 
		\[\left|\underset{{r\in[k]}}{\bigcap}S_{i_r}'\right|\le \left|\underset{{r\in[k]}}{\bigcap}S_{i_r}\right|\cdot |B'|\le zw,\]
		where $z$ is the monochromatic number of $\Gamma$. 
		Finally, from the construction of $\Gamma'$, the claim on the runtime follows immediately. 
\end{proof}

\subsection{\SETH-based Time Lower Bound}\label{sec:SETH}

In this subsection, we prove the following result.

\begin{theorem}\label{thm:SETH}
Let $F\colon \NN\to\NN$ be some computable increasing function.   Assuming randomized \SETH, for every $\varepsilon > 0$ and integer $k > 1$, no randomized $O(n^{k(1-\varepsilon )})$-time algorithm can decide  an instance $\Gamma(\C,c,c/F(k))$ of \intersection over universe $[n^{1+o(1)}]$, where $|\C|=n$.
\end{theorem}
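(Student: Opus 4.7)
The plan is to prove Theorem~\ref{thm:SETH} by composing a known \SETH-based hardness of panchromatic \intersection with a panchromatic graph from Theorem~\ref{thm:pan_cons}, using the Panchromatic Graph Composition of Theorem~\ref{thm:PGC}. The starting point is the \SETH lower bound of~\cite{KLM19} for {\sf Unique} $k$-\maxcover, which rules out randomized $n^{k(1-\varepsilon)}$-time algorithms achieving gap $F'(k)$ for any computable $F'$; via Proposition~\ref{prop:mctointer}, this lifts to a hard panchromatic \intersection instance $\Gamma(\C_1,\dots,\C_k,c,s)$ with $|\C_i|=N$, universe size $|\U|=N^{o(1)}$, monochromatic number $z=N^{o(1)}$, and gap $c/s\ge F'(k)$, with moreover $c\ge \Omega(\ell)$ so that $z/c=O(1)$.

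Given the target function $F$, I would first pick $F'(k)=F(k)$ (so that $sD^k\le cD^k/F(k)$) and an integer $\lambda=\lambda(F,k)$ with $\lambda\ge F(k)z/c$ (so that $zD^k/\lambda\le cD^k/F(k)$); both choices are possible since $z/c=O(1)$. Using Theorem~\ref{thm:pan_cons} with this $\lambda$, sample a random bipartite graph $H$ that is an $(N',N',k,D^k,D^k/\lambda,p_0)$-panchromatic graph with probability $p_0=\Omega_k(1)$, where $N'\in[N,2^kN]$; pad $\Gamma$ with empty sets to match the partition size $N'$. Applying Theorem~\ref{thm:PGC} to $(\Gamma,H)$ then produces a \intersection instance $\Gamma'$ with $|\C'|=kN'$ sets over a universe of size $|\U|\cdot N'=N^{1+o(1)}$, completeness $cD^k$, and soundness at most $\max(sD^k,zD^k/\lambda)\le cD^k/F(k)$. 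Setting $n=kN'$, this is precisely an instance of the form $\Gamma(\C,c',c'/F(k))$ over universe $[n^{1+o(1)}]$, as demanded.

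Two matters will require care. First, Theorem~\ref{thm:pan_cons} guarantees a valid $H$ only with constant-in-$k$ probability $p_0$, and the completeness of Theorem~\ref{thm:PGC} only fires for a $p_0$-fraction of random mappings $\pi_r$; I would run $O_k(1)$ independent trials of the sampling and mapping steps, using that the soundness-side bad events ($B_1\ne 0$ or $B_3\ne 0$ in the proof of Theorem~\ref{thm:pan_cons}) each occur with probability only $o(1)$, so that NO instances map to NO with probability $1-o(1)$ uniformly over all trials, while YES-to-YES succeeds with constant probability per trial and is amplified to $\ge 1/2$ by union over the trials. Second, and this is the main obstacle, the explicit reduction of Theorem~\ref{thm:PGC} writes out $kN'\cdot N^{1+o(1)}=n^{2+o(1)}$ symbols, which is too slow to conclude for $k(1-\varepsilon)<2$; to fix this, I would observe that $H$ is specified implicitly by $O_k(1)$-degree polynomials occupying $O_k(N)$ bits, so membership queries ``is $(u,b)\in\zeta_r(S)$?'' can be answered in $n^{o(1)}$ time after $O_k(n)$ preprocessing. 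Consequently, any hypothetical $O(n^{k(1-\varepsilon)})$-time randomized algorithm for \intersection\ with oracle access yields an $n^{k(1-\varepsilon)+o(1)}$-time randomized decision procedure for {\sf Unique} $k$-\maxcover\ (and hence for \SAT\ via~\cite{KLM19}), contradicting randomized \SETH.
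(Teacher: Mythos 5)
Your high-level pipeline is the same as the paper's: take the \SETH hardness of {\sf Unique} $k$-\maxcover from~\cite{KLM19}, pass to panchromatic \intersection via Proposition~\ref{prop:mctointer}, choose $\lambda$ against $F(k)\cdot z/c$, and compose with a panchromatic graph from Theorem~\ref{thm:pan_cons} via Theorem~\ref{thm:PGC}. Where you diverge is in how the efficiency of the reduction is handled, and this is where there is a genuine gap. The paper does \emph{not} apply the composition to the full-size instance; it sets $m=\sqrt{n}$, equipartitions each $\C_i$ into $m$ blocks of size $\sqrt{n}$, builds a panchromatic graph with colour classes of size roughly $\sqrt{n}$, and runs the hypothesised algorithm on all $n^{k/2}$ sub-instances, for total time $n^{k/2}\cdot n^{k(1-\varepsilon)/2}=n^{k(1-\varepsilon/2)}$. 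The point of this subdivision is not only the $\tilde O(|\C'|\cdot|\U'|)$ instance-writing cost that you identify (and which your implicit-representation/oracle-simulation idea would indeed handle, and which in any case is harmless for $k\ge 3$ since $n^{2+o(1)}=O(n^{k(1-\varepsilon')})$). Its second, essential purpose is that a graph with classes of size $\sqrt{n}$ can be brute-force \emph{verified} to be a panchromatic graph in time $w\cdot n_{i^*}^{k+1}=O_k(n^{k/2+1})$, which is affordable; the paper samples $\widetilde\Omega(4(D^k)!)$ graphs and certifies one before using it.

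Your proposal skips verification entirely, because for a graph with classes of size $N\approx n$ the brute-force check costs $n^{k+1}$, which you cannot afford. As you yourself note, the consequence is that NO instances map to NO only with probability $1-o(1)$ (conditioned on $B_1=B_3=0$), not always. That breaks the reduction within the paper's own framework: the hypotheses (Hypotheses~\ref{hyp:eth} and \SETH) and Theorem~\ref{thm:SETHmax} are stated for \emph{one-sided} Monte Carlo algorithms, and the preliminaries insist that the fine-grained RUR reduction send NO instances to NO instances \emph{always}, precisely so that a one-sided algorithm for \intersection yields a one-sided algorithm for \SAT. A reduction with $o(1)$ error on the NO side yields only a two-sided-error algorithm for $k$-\maxcover, which does not contradict Theorem~\ref{thm:SETHmax} as stated; two-sided error cannot be generically converted to one-sided error for \SAT-type problems. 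So either you must restate the hypotheses for two-sided error (changing the theorem), or you must certify the soundness-relevant properties of the sampled graph before composing — and the only way the paper makes that certification affordable is the $\sqrt{n}$ subdivision, which your oracle trick does not replicate. Amplification over $O_k(1)$ trials, which you invoke, fixes only the completeness probability, not this soundness issue.
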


Our proof builds on the following \SETH based lower bound for gap $k$-\maxcover proved in \cite{KLM19}.

\begin{theorem}[\cite{KLM19}]\label{thm:SETHmax}
	Let $F\colon \NN\to\NN$ be some computable increasing function.   Assuming \SETH, for every $\varepsilon > 0$ and integer $k > 1$, no randomized $O(n^{k(1-\varepsilon )})$-time algorithm can decide  an instance $\Gamma(G=(V\dot\cup W,E),1,1/F(k))$ of {\sf Unique} $k$-\maxcover. This holds even in the following setting:
\begin{itemize}
    \item $V:=V_1\dot\cup\cdots\dot\cup V_k$, where $\forall j\in[k]$, $|V_j|=n$.
    \item $W:=W_1\dot\cup\cdots\dot\cup W_\ell$, where $\ell=(\log n)^{O_{k}(1)}$ and $\forall i\in[k]$, $|W_i|=O_{k,\varepsilon}(1)$.
\end{itemize}
\end{theorem}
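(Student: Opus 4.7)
The plan is to reduce {\sf Unique} $k$-\maxcover to (gap) \intersection by first applying Proposition~\ref{prop:mctointer} and then composing the result with a random panchromatic graph via the PGC framework (Theorem~\ref{thm:PGC}). Fix the gap function $F$, integer $k > 1$, and $\varepsilon > 0$. I would start by invoking Theorem~\ref{thm:SETHmax} with the \emph{same} gap function $F$: assuming randomized \SETH, no $O(n^{k(1-\varepsilon)})$ randomized algorithm can decide the given {\sf Unique} $k$-\maxcover instance, which has $|V_j|=n$, $\ell=(\log n)^{O_k(1)}$, and $|W_i|=O_{k,\varepsilon}(1)$. Proposition~\ref{prop:mctointer} then reinterprets this instance as a panchromatic \intersection instance $\Gamma(\C_1,\ldots,\C_k,\ell,\ell/F(k))$ over a universe $\U$ of size $|\U|=|W|\le C_{k,\varepsilon}\ell$ for some constant $C_{k,\varepsilon}$, with monochromatic number $z\le|\U|$.

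Next, I would set $\lambda=2C_{k,\varepsilon} F(k)$ (a constant depending only on $k$, $\varepsilon$, and $F$), let $D=\lambda(k^2+2)$, and sample a graph $H$ from the distribution $\D_{k,\lambda,n_i}$ of Theorem~\ref{thm:pan_cons}, where $n_i$ is the smallest term of that sequence with $n_i\ge n$, so that $n\le n_i\le 2^k n$. With probability $p_0 \eqdef (4(D^k)!)^{-1}=\Omega_{k,F,\varepsilon}(1)$, the sampled $H$ is a valid $(n_i,n_i,k,D^k,D^k/\lambda,p_0)$-panchromatic graph. I would then invoke Theorem~\ref{thm:PGC} on $\Gamma$ and $H$ to obtain an \intersection instance $\Gamma'(\C',c',s')$ with $|\C'|=kn$, universe size $|\U'|=|\U|\cdot n_i = n^{1+o(1)}$, completeness $c'=\ell D^k$, and soundness $s'=\max(\ell D^k/F(k),\; zD^k/\lambda)$. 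The choice of $\lambda$ ensures $z/\lambda\le|\U|/\lambda\le\ell/(2F(k))$, so the second term in $s'$ is dominated, yielding $s'\le c'/F(k)$, which is exactly the gap required by Theorem~\ref{thm:SETH}.

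Two items remain: amplifying the success probability of the reduction, and bounding its running time. The reduction's completeness uses randomness twice (in sampling $H$ and in the bijections $\pi_r$ inside PGC) and succeeds with probability at least $p_0^2=\Omega_{k,F,\varepsilon}(1)$, while soundness is deterministic. I would run the reduction $O_{k,F,\varepsilon}(1)$ times in parallel and output YES if any copy produces a YES-instance, boosting the overall success probability above $1/2$ and yielding a one-sided randomized fine-grained reduction in the sense of~\cite{BBEGKLMM21}. For the running time, sparse representations let me implement the reduction in time $\tilde O(|\U|\cdot|E(H)|)$; a Schwartz--Zippel-type bound on the size of the zero set of each polynomial of degree $D$ in $\F_q^k$ gives $|E(H)|=O_k(n^{2-1/k})$, so the reduction runs in $\tilde O(n^{2-1/k})$ time.

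To conclude, suppose for contradiction that there is a randomized $O(N^{k(1-\varepsilon)})$-time algorithm for the gap \intersection problem at hand (with $N=|\C'|$). Since $N=\Theta_k(n)$, combining this algorithm with the reduction yields a randomized algorithm for {\sf Unique} $k$-\maxcover running in time $O(n^{k(1-\varepsilon)})+\tilde O(n^{2-1/k})$, which is $O(n^{k(1-\varepsilon')})$ for some $\varepsilon'=\varepsilon'(k,\varepsilon)>0$, because both $k(1-\varepsilon)$ and $2-1/k$ are strictly less than $k$ for every integer $k\ge 2$. This contradicts Theorem~\ref{thm:SETHmax}. The most delicate step in the plan is the selection of $\lambda$: it must be large enough that the monochromatic-number contribution $zw$ to the PGC soundness is absorbed by the target gap, and this is exactly where the $O_{k,\varepsilon}(1)$ bound on $|W_i|$ coming from Theorem~\ref{thm:SETHmax} is essential. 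The remaining steps are a mostly mechanical verification of sizes, gaps, probabilities, and running times.
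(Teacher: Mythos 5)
You have proved the wrong statement. The theorem in question is the \SETH-based hardness of {\sf Unique} $k$-\maxcover itself (the result imported from~\cite{KLM19}), yet your argument \emph{begins} by ``invoking Theorem~\ref{thm:SETHmax} with the same gap function $F$'' and ends by deriving a contradiction with that same theorem. As a proof of Theorem~\ref{thm:SETHmax} this is circular: you assume exactly the statement you were asked to establish. What you have actually written out is (a reasonable sketch of) the paper's proof of Theorem~\ref{thm:SETH}, the downstream inapproximability of \intersection, obtained by composing the hard {\sf Unique} $k$-\maxcover instance with a sampled panchromatic graph via Proposition~\ref{prop:mctointer}, Theorem~\ref{thm:pan_cons} and Theorem~\ref{thm:PGC}. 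None of that machinery is relevant to the statement at hand, which concerns \maxcover, not \intersection.

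The content that is genuinely needed is entirely absent from your proposal: one must reduce from $\ell$-\SAT (under \SETH) to gap {\sf Unique} $k$-\maxcover, and the paper does this by invoking the distributed PCP framework of~\cite{KLM19} --- specifically, applying their Proposition 5.1 to their Theorem 6.1 (with $z=\log_2 F(k)$) to obtain an $(m/\alpha,O_k(\log_2 m),O_{k,\varepsilon}(1),1/F(k))$-efficient SMP protocol for $k$-player set disjointness $\mathsf{Disj}_{m,k}$, then plugging these protocol parameters into their Corollary 5.3 to produce the \maxcover instances with $|V_j|=n$, $\ell=(\log n)^{O_k(1)}$ and $|W_i|=O_{k,\varepsilon}(1)$, and finally checking (via the remarks in their Appendix B) that the instances produced satisfy the uniqueness property. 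Your proposal contains no reduction from \SAT, no communication-protocol argument, and no verification of the uniqueness or right-super-node size bounds, so it cannot be repaired into a proof of Theorem~\ref{thm:SETHmax}; it would instead belong (essentially verbatim) as a proof attempt for Theorem~\ref{thm:SETH}.
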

\begin{proof}[Proof Sketch]
The proof of the  theorem statement is by contradiction. Suppose there is a randomized $O(n^{k(1-\varepsilon )})$-time algorithm that can decide  every instance $\Gamma(G=(V\dot\cup W,E),1,1/F(k))$ of $k$-\maxcover for some fixed constant $\varepsilon > 0$ and integer $k > 1$.
All the references here are using the labels in~\cite{KLM19}. First we apply Proposition 5.1 to Theorem 6.1 with $z=\log_2 (F(k))$ to obtain an $(m/\alpha,O_k(\log_2 m),O_{k,\varepsilon}(1),1/F(k))$-efficient protocol for $k$-player $\mathsf{Disj}_{m,k}$ in the SMP model. 
The proof of the theorem then follows by plugging in the parameters of the protocol  to Corollary 5.3. To note that the instance constructed is that of {\sf Unique} $k$-\maxcover, see the remarks in Appendix B. 
\end{proof}

We now return to the proof of Theorem~\ref{thm:SETH}.

\begin{proof}[Proof of Theorem~\ref{thm:SETH}]
Fix $F\colon \NN\to\NN$.
Suppose there is a randomized $O(n^{k(1-\varepsilon )})$-time algorithm that can decide  every instance $\Gamma(\C,c,c/F(k))$ of \intersection  over universe $[n^{1+o(1)}]$ (where $|\C|=n$)  for some fixed constant $\varepsilon > 0$ and integer\footnote{The case $k=2$ can be easily handled here by standard input subdividing tricks used previously in~\cite{R18,KM20}. At the same time the case $k=2$ was already proved in~\cite{KM20}. } $k > 2$. We claim that  the algorithm can be used to solve every hard instance $\Gamma'(G=(V\dot\cup W,E),1,1/F(k))$ of $k$-\maxcover, as given in Theorem~\ref{thm:SETHmax}, in time $O(n^{k(1-\varepsilon)})$ where 
\begin{itemize}
    \item $V:=V_1\dot\cup\cdots\dot\cup V_k$, where $\forall j\in[k]$, $|V_j|=n$.
    \item $W:=W_1\dot\cup\cdots\dot\cup W_\ell$, where $\ell=(\log n)^{O_{k}(1)}$ and $\forall i\in[k]$, $|W_i|=O_{k,\varepsilon}(1)$.
\end{itemize}
This would then contradict Theorem~\ref{thm:SETHmax}.

Fix $\Gamma'(G=(V\dot\cup W,E),1,1/F(k))$. By applying Proposition~\ref{prop:mctointer} to $\Gamma'$ we obtain an instance $\Gamma''(\C_1,\ldots ,\C_k,\ell,\ell/F(k))$  of panchromatic \intersection over universe of size $O_\varepsilon((\log n)^{O_k(1)})$ with monochromatic number also bounded above by $c_{k,\varepsilon} \cdot \ell$ for some constant $c_{k,\varepsilon}$ depending only on $k$ and $\varepsilon$. 

Let $m:=\sqrt{n}$.
In Theorem~\ref{thm:pan_cons}, let $i^*\in\NN$ be such that $m\le n_{i^*}\le 2^{k}\cdot m$.  We sample $w:=\widetilde\Omega(4(D^k)!)$ many graphs $G_1,\ldots ,G_w$ from $\mathcal{D}_{k,c_{k}\cdot F(k),n_{i^*}}$ in time $O_k(n)$. By Theorem~\ref{thm:pan_cons}, we know that one of these graphs is a $(n_{i^*},n_{i^*},k,D^k,D^k/(c_{k}\cdot F(k)),(4(D^k)!)^{-1})$\nobreakdash-panchromatic graph with high probability and we find it in time $w\cdot n_{i^*}^{k+1}=O_k(n^{\frac{k}{2}+1})$. Let $G^*$ be one of the sampled graphs which is a $(n_{i^*},n_{i^*},k,D^k,D^k/(c_{k}\cdot F(k)),(4(D^k)!)^{-1})$-panchromatic graph. We randomly delete $n_{i^*}-m$ many vertices  in each colour class of $G^*$ to  obtain a $(m,n_{i^*},k,D^k,D^k/(c_{k}\cdot F(k)),(4(D^k)!)^{-1})$\nobreakdash-panchromatic graph. 

For every $i\in[k]$, arbitrarily equipartition $\mathcal{C}_i$ into $\mathcal{C}_i^1,\ldots ,\mathcal{C}_i^{m}$. 
 Given  $\Gamma''(\C_1,\ldots ,\C_k,\ell,\ell/F(k))$ we show how to construct $n^{k/2}$ instances $$\{\Gamma_{(t_1,\ldots ,t_k)}(\C,c,c/F(k))\}_{(t_1,\ldots, t_k)\in [m]^k},$$ of \intersection over universe $[n^{\frac{1}{2}+o(1)}]$ (where $|\C|=mk$). For every ${(t_1,\ldots ,t_k)}\in [m]^k$, define an instance  $\Gamma''_{(t_1,\ldots ,t_k)}(\C_1^{t_1},\ldots ,\C_k^{t_k},\ell,\ell/F(k))$ of panchromatic \intersection over universe of size $O_\varepsilon((\log n)^{O_k(1)})$ with monochromatic number also bounded above by $c_{k,\varepsilon} \cdot \ell$.

Fix ${(t_1,\ldots ,t_k)}\in [m]^k$. We apply Theorem~\ref{thm:PGC} to $\Gamma''_{{(t_1,\ldots ,t_k)}}$ by using $G^*$. We thus obtain an instance $\Gamma_{(t_1,\ldots ,t_k)}(\C,c:=\ell\cdot D^k,\max((\ell/F(k))\cdot D^k,\ell\cdot D^k/F(k))$ of \intersection over universe $\U$ in time $\tilde{O}(n^{1+o(1)})$ where $|\U|=m\cdot (\log n)^{O_k(1)}$. Also note that $|\C|=mk$.  

Thus, if $\Gamma'$ was in the completeness case then there exists ${(t_1,\ldots ,t_k)}\in [m]^k$ such that $\Gamma''_{(t_1,\ldots ,t_k)}$ is also in the completeness case, and consequently, $\Gamma_{(t_1,\ldots ,t_k)}$ is  in the completeness case. On the other hand, if $\Gamma'$ was in the soundness case then for every ${(t_1,\ldots ,t_k)}\in [m]^k$ we have that $\Gamma''_{(t_1,\ldots ,t_k)}$ is also in the soundness case, and consequently, $\Gamma_{(t_1,\ldots ,t_k)}$ is  in the soundness case too. The total runtime of the algorithm would be $n^{k/2}\cdot \left(n^{k(1-\varepsilon)/2}+n^{1+o(1)}\right)+n^{\frac{k}{2}+1}=O(n^{k(1-\frac{\varepsilon}{2})})$.
\end{proof}

\subsection{\ETH-based Time Lower Bound}\label{sec:ETH} 

In this subsection, we prove the following result.

\begin{theorem}\label{thm:ETH}
Let $F\colon \NN\to\NN$ be some computable increasing function.   Assuming randomized \ETH, for  sufficiently large integer $k$, no randomized $n^{o(k)}$-time algorithm can decide  an instance $\Gamma(\C,c,c/F(k))$ of \intersection over universe $[n^{1+o(1)}]$, where $|\C|=n$.
\end{theorem}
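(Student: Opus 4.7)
The plan is to mimic the structure of the proof of Theorem~\ref{thm:SETH}, starting instead from an \ETH analog of Theorem~\ref{thm:SETHmax} that is implicit in~\cite{KLM19}: for any computable $F\colon\NN\to\NN$ and all sufficiently large $k$, under randomized \ETH no $n^{o(k)}$-time algorithm can decide an instance $\Gamma'(G=(V\dot\cup W,E),1,1/F(k))$ of Unique $k$-\maxcover in which $V=V_1\dot\cup\cdots\dot\cup V_k$ with $|V_j|=n$, $W=W_1\dot\cup\cdots\dot\cup W_\ell$ with $\ell=(\log n)^{O_k(1)}$, and $|W_i|=O_{k,F}(1)$ for each $i$. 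This follows by replaying the protocol-based proof of Theorem~\ref{thm:SETHmax} with the \ETH counterpart of the communication lower bound for $k$-party set disjointness.

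Given such a hard $\Gamma'$, the reduction will proceed without any splitting of the collections. First, I would apply Proposition~\ref{prop:mctointer} to obtain a panchromatic \intersection instance $\Gamma''(\C_1,\dotsc,\C_k,\ell,\ell/F(k))$ over a universe of size $(\log n)^{O_k(1)}$ with monochromatic number bounded above by $c_k\cdot\ell$ for some constant $c_k=c_{k,F}$. Next, invoking Theorem~\ref{thm:pan_cons} with $\lambda=c_k\cdot F(k)$ and $n_{i^*}$ chosen so that $n\le n_{i^*}\le 2^k n$, I would sample $T=\widetilde{\Theta}\bigl(((D^k)!)^2\bigr)$ independent graphs $G_1,\dotsc,G_T$ from the prescribed distribution, prune each so its colour classes have size exactly $n$, and for each $j\in[T]$ apply the Panchromatic Graph Composition (Theorem~\ref{thm:PGC}) with freshly drawn internal randomness to the pair $(\Gamma'',G_j)$, obtaining an \intersection instance $\Gamma_j(\C_j',\ell D^k,\ell D^k/F(k))$ with $|\C_j'|=kn$ and universe of size $n_{i^*}\cdot(\log n)^{O_k(1)}=n^{1+o(1)}$; the gap equals $F(k)$ because the choice of $\lambda$ forces $\max(st,zw)=\ell D^k/F(k)$. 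Finally, run the hypothesised $n^{o(k)}$-time \intersection algorithm on each $\Gamma_j$ and output \emph{yes} iff at least one run accepts.

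The key subtlety, and the point at which care is needed, is that we cannot afford to verify whether each sampled $G_j$ actually is panchromatic, since such verification costs $\Omega(n^{k+1})$, which is not $n^{o(k)}$. Inspecting the proof of Theorem~\ref{thm:pan_cons}, however, the upper-bound half of the \textbf{Completeness} condition and the full \textbf{Soundness} condition (the events ``$B_1=0$'' and ``$B_3=0$'' in the notation of that proof) each hold with probability $1-o(1)$ per sample, whereas the lower-bound half of \textbf{Completeness} (``$B_2\ge n^k/(4(D^k)!)$'') holds with probability only $(4(D^k)!)^{-1}$. A union bound over $j\in[T]$ therefore guarantees, with high probability, both that (a) every $G_j$ satisfies enough of the panchromatic definition to enforce the \textbf{Soundness} conclusion of Theorem~\ref{thm:PGC} — so a no-instance of \maxcover induces only no-instances of \intersection, on which the one-sided Monte Carlo algorithm always rejects — and that (b) some $G_j$, together with its internal randomness, witnesses the \textbf{Completeness} conclusion of Theorem~\ref{thm:PGC} — so a yes-instance induces at least one yes-instance of \intersection, on which the algorithm accepts with probability $\ge 1/2$.

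The total running time is $T\cdot\bigl(O_k(n^2)+(kn)^{o(k)}\bigr) = n^{o(k)}$, since the sampling cost $O_k(n_{i^*}^2)$ from Theorem~\ref{thm:pan_cons} and composition cost $\tilde{O}(|\C_j'|\cdot|\U_j'|)$ from Theorem~\ref{thm:PGC} are each $n^{O(1)}=n^{o(k)}$ for sufficiently large $k$, while the multiplicative $T$ and $k^{o(k)}$ overheads contribute only an $n^{o(1)}$ factor. This contradicts the \ETH-based lower bound for Unique $k$-\maxcover stated above, completing the proof. The main obstacle, as indicated, is legitimising the verification sidestep; fortunately, once one observes that the ``bad-for-soundness'' event occurs with vanishing probability per sample while only the ``bad-for-completeness'' event carries the $(D^k)!$ price tag, the OR-over-trials strategy resolves the issue cleanly.
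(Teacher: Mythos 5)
Your proposal follows the same overall route as the paper: start from the \ETH-hard Unique $k$-\maxcover instances of~\cite{KLM19} (this is exactly Theorem~\ref{thm:ETHmax}, so you need not re-derive it), pass to panchromatic \intersection via Proposition~\ref{prop:mctointer}, sample $O_k(1)$ many candidate panchromatic graphs, and compose each with the instance via Theorem~\ref{thm:PGC}. No splitting is needed for the \ETH bound, and your parameter and running-time accounting match the paper's.

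The one substantive deviation is precisely at the point you flag as the key subtlety, and your resolution introduces a gap relative to the paper's framework. You handle the unverifiability of the sampled graphs by observing that the soundness-relevant events ($B_1=B_3=0$) hold with probability $1-o(1)$ per sample and then taking an OR over the $T$ decision outputs. But this means that on a NO instance of \maxcover, with probability $o(1)$ some composed instance falls outside the promise of \intersection, and the hypothesised algorithm may then accept it; your reduction therefore maps NO instances to an accepting computation with positive (albeit vanishing) probability. The paper's entire lower-bound machinery is built on one-sided (RUR) reductions in which NO instances \emph{always} map to NO instances, because Hypothesis~\ref{hyp:eth} only rules out one-sided Monte Carlo algorithms for $3$-\SAT; a two-sided algorithm with $o(1)$ false-positive rate does not contradict it as stated. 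The paper closes this hole differently: it first converts the hypothesised decision algorithm into a search algorithm (via $nk$ calls), runs the search on each composed instance, and then \emph{verifies the returned witness against the original \maxcover instance $\Gamma'$}, discarding any witness that does not yield a valid labelling. This keeps the NO direction error-free regardless of whether the sampled graphs are actually panchromatic. Your argument is repairable by adopting the same search-and-verify step, but as written the OR-over-trials strategy does not suffice under the paper's formulation of randomized \ETH.
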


Our proof builds on the following \ETH based lower bound for gap $k$-\maxcover proved in \cite{KLM19}.

\begin{theorem}[\cite{KLM19}]\label{thm:ETHmax}
Let $F\colon \NN\to\NN$ be some computable increasing function.   Assuming \ETH, for sufficiently large integer $k$, no randomized $n^{o(k)}$-time algorithm can decide  an instance $\Gamma(G=(V\dot\cup W,E),1,1/F(k))$ of {\sf Unique} $k$-\maxcover. This holds even in the following setting:
\begin{itemize}
    \item $V:=V_1\dot\cup\cdots\dot\cup V_k$, where $\forall j\in[k]$, $|V_j|=n$.
    \item $W:=W_1\dot\cup\cdots\dot\cup W_\ell$, where $\ell=(\log n)^{O_{k}(1)}$ and $\forall i\in[k]$, $|W_i|=O_k(1)$.
\end{itemize}
\end{theorem}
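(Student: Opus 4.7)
The plan is to mirror the proof of Theorem~\ref{thm:SETH} but to skip the input-splitting trick (setting $m=\sqrt n$) used there, which was needed only to match the fine-grained $n^{k(1-\varepsilon)}$ bound. Since our target is the looser $n^{o(k)}$ bound, the argument simplifies and can be carried out at full scale. Suppose for contradiction that for some computable increasing $F$ and arbitrarily large $k$ there is a randomised $n^{o(k)}$-time one-sided algorithm $\mathcal{A}$ that decides $\Gamma(\C,c,c/F(k))$-instances of \intersection with $|\C|=n$ over a universe of size $n^{1+o(1)}$. I will use $\mathcal{A}$ to decide the hard instances from Theorem~\ref{thm:ETHmax} in time $n^{o(k)}$, yielding a contradiction.

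Starting from a hard {\sf Unique} $k$-\maxcover instance $\Gamma'(G=(V\dot\cup W,E),1,1/F(k))$ with $|V_j|=n$ for each $j\in[k]$ and $|W|=\ell=(\log n)^{O_k(1)}$, I would first invoke Proposition~\ref{prop:mctointer} to view $\Gamma'$ as a panchromatic \intersection instance $\Gamma''(\C_1,\dots,\C_k,\ell,\ell/F(k))$ with $|\C_i|=n$, universe of size $\ell$, and monochromatic number $z\le\ell$. I would then set $\lambda=F(k)$, $D=\lambda(k^2+2)$, pick $i^*$ with $n\le n_{i^*}\le 2^k n$, and sample $N=\widetilde{\Theta}((4(D^k)!)^2)$ independent graphs $G_1,\dots,G_N$ from $\mathcal{D}_{k,\lambda,n_{i^*}}$ in $O_k(N\cdot n^2)$ time; note that $N$ depends only on $k$ and $F$. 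For each $G_j$ I would randomly discard $n_{i^*}-n$ vertices from each colour class and then apply Theorem~\ref{thm:PGC} to $\Gamma''$ and $G_j$ to build an \intersection instance $\Gamma_j(\C^{(j)},\ell D^k,\ell D^k/F(k))$ with $|\C^{(j)}|=kn$ and universe size $\ell\cdot n_{i^*}=n^{1+o(1)}$. The soundness threshold is $\max(st,zw)=\max((\ell/F(k))D^k,\,\ell\cdot D^k/\lambda)=\ell D^k/F(k)$ because $z\le\ell$ and $\lambda=F(k)$, so the gap is exactly $F(k)$ as required.

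The final step is to run $\mathcal{A}$ on every $\Gamma_j$ and accept iff some run accepts. In the soundness case, every $\Gamma_j$ is a NO instance by Theorem~\ref{thm:PGC}, and the one-sided error of $\mathcal{A}$ forces all runs to reject. In the completeness case, each sample independently produces a YES instance with probability at least $(4(D^k)!)^{-2}$ (jointly, $G_j$ is panchromatic by Theorem~\ref{thm:pan_cons} and the randomised mappings $\pi_r$ in the PGC reduction hit good $k$-sets, which occurs with the completeness fraction of the panchromatic graph); taking $N$ as above amplifies the success probability over the $N$ independent trials to a constant, and composing with the $1/2$-completeness of $\mathcal{A}$ (boosted by a further constant number of repetitions) gives overall success probability $\ge 1/2$. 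The total running time is $N\cdot((kn)^{o(k)}+\widetilde{O}(n^{1+o(1)}))=n^{o(k)}$, contradicting Theorem~\ref{thm:ETHmax}.

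The principal subtlety, and the main departure from the SETH argument, is the inability to afford graph \emph{verification}: checking whether a sampled $G_j$ is actually panchromatic costs $\Theta(n^{k+1})$, which exceeds the $n^{o(k)}$ budget. The resolution is that verification is unnecessary here: by the one-sided error of both $\mathcal{A}$ and the PGC completeness guarantee, it suffices to run $\mathcal{A}$ on all $N$ candidates and OR the answers, and $N$ depends only on $k$ and $F$, not on $n$.
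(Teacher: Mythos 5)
Your proposal does not prove the statement in question; it assumes it. The statement you were asked to establish is Theorem~\ref{thm:ETHmax} itself: the \ETH-hardness of deciding $\Gamma(G,1,1/F(k))$ instances of {\sf Unique} $k$-\maxcover with $|V_j|=n$, $\ell=(\log n)^{O_k(1)}$ and $|W_i|=O_k(1)$. Your argument instead starts from ``the hard instances from Theorem~\ref{thm:ETHmax}'' and composes them, via Proposition~\ref{prop:mctointer}, Theorem~\ref{thm:pan_cons} and Theorem~\ref{thm:PGC}, into hard instances of \intersection. That is (a variant of) the proof of Theorem~\ref{thm:ETH}, the downstream application; as a proof of Theorem~\ref{thm:ETHmax} it is circular, since the very hardness of \maxcover you invoke is what needs to be shown from \ETH.

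The paper's proof of Theorem~\ref{thm:ETHmax} goes through the distributed PCP framework of \cite{KLM19} and has nothing to do with panchromatic graphs: one applies Proposition~5.1 of \cite{KLM19} to their Theorem~7.1 (with the parameter $z$ chosen as a function of $F(k)$) to obtain a $(0,O_k(\log_2 m),O_k(t),1/F(k))$-efficient protocol for the $k$-player $\mathsf{MultEq}_{m,k,t}$ problem in the SMP model, and then plugs the protocol parameters into Corollary~5.4 of \cite{KLM19}; the uniqueness property of the resulting \maxcover instances is noted in Appendix~B there. If you want to supply a proof of Theorem~\ref{thm:ETHmax}, you must argue from \ETH (e.g., via a reduction from $3$-\SAT or \clique through such a communication/PCP construction), not from a gap problem whose hardness is itself the conclusion. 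Separately, were your text repurposed as a proof of Theorem~\ref{thm:ETH}, note that the paper handles the unverifiable-panchromaticity issue by turning the decision algorithm into a search algorithm ($nk$ calls) and checking the returned witness against $\Gamma'$, whereas your OR-of-runs shortcut still needs care: a non-panchromatic $G_j$ could in principle make $\mathcal{A}$ accept in the soundness case, since Theorem~\ref{thm:PGC}'s soundness guarantee is only stated for genuine panchromatic graphs.
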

\begin{proof}[Proof Sketch]
Suppose there is a randomized $n^{o(k)}$-time algorithm that can decide  every instance $\Gamma(G=(V\dot\cup W,E),1,1/F(k))$ of $k$-\maxcover for every $k\in\NN$.
All the references here are using the labels in~\cite{KLM19}. First we apply Proposition 5.1 to Theorem 7.1 with $z=\left(\log_2\frac{-1}{1-\delta}\right)\log_{2} (F(k))$ to obtain a $(0,O_k(\log_2 m),O_{k}(t),1/F(k))$-efficient protocol for $k$-player $\mathsf{MultEq}_{m,k,t}$ in the SMP model. 
The proof of the theorem then follows by plugging in the parameters of the protocol  to Corollary 5.4. To note that the instance constructed is that of {\sf Unique} $k$-\maxcover, see the remarks in Appendix B. 
\end{proof}

We now return to the proof of 
Theorem~\ref{thm:ETH}.

\begin{proof}[Proof of Theorem~\ref{thm:ETH}]
Fix $F\colon \NN\to\NN$.
Suppose there is a randomized $n^{o(k)}$-time algorithm that can decide  every instance $\Gamma(\C,c,c/F(k))$ of \intersection over universe $[n^{1+o(1)}]$ (where $|\C|=n$)  for every $k\in\NN$. 
Notice that such an algorithm can also be used to device a search that finds a witness in the YES case by making $nk$ calls to the decision algorithm. 

We claim that then this search algorithm can be used to solve (with high probability) every instance $\Gamma'(G=(V\dot\cup W,E),1,1/F(k))$ of $k$-\maxcover in time $O(n^{o(k)})$ where 
\begin{itemize}
    \item $V:=V_1\dot\cup\cdots\dot\cup V_k$, where $\forall j\in[k]$, $|V_j|=n$.
    \item $W:=W_1\dot\cup\cdots\dot\cup W_\ell$, where $\ell=(\log n)^{O_{k}(1)}$ and $\forall i\in[k]$, $|W_i|=O_{k}(1)$.
\end{itemize}
This would then contradict Theorem~\ref{thm:ETHmax}. 


Fix $\Gamma'(G=(V\dot\cup W,E),1,1/F(k))$. By applying Proposition~\ref{prop:mctointer} to $\Gamma'$ we obtain an instance $\Gamma''(\C_1,\ldots ,\C_k,\ell,\ell/F(k))$  of panchromatic \intersection over universe of size $(\log n)^{O_k(1)}$ with monochromatic number also bounded above by $c_{k} \cdot \ell$, for some constant $c_k$ only depending on $k$. 

In Theorem~\ref{thm:pan_cons}, let $i^*\in\NN$ such that $n\le n_{i^*}\le 2^k\cdot n$.  We sample $w:=\widetilde\Omega(4(D^k)!)$ many graphs $G_1,\ldots ,G_w$ from $\mathcal{D}_{k,c_{k}\cdot F(k),n_{i^*}}$ in time $O_k(n^2)$. By Theorem~\ref{thm:pan_cons}, we know that one of these graphs is a $(n_{i^*},n_{i^*},k,D^k,D^k/(c_{k}\cdot F(k)),(4(D^k)!)^{-1})$\nobreakdash-panchromatic graph with high probability. Next, in each of these $w$ many graphs, we randomly delete $n_{i^*}-n$ vertices  in each colour class. Note that in every $(n_{i^*},n_{i^*},k,D^k,D^k/(c_{k}\cdot F(k)),(4(D^k)!)^{-1})$\nobreakdash-panchromatic graph if we randomly delete $n_{i^*}-n$ vertices  in each colour class then we obtain a $(n,n_{i^*},k,D^k,D^k/(c_{k}\cdot F(k)),(4(D^k)!)^{-1})$\nobreakdash-panchromatic graph. 

Let $i\in[w]$. For each $G_i$ we apply Theorem~\ref{thm:PGC} to $\Gamma''$ by using $G_i$. If $G_i$ is a  $(n,n_{i^*},k,D^k,D^k/(c_{k}\cdot F(k)),(4(D^k)!)^{-1})$-panchromatic graph then we obtain an instance $\Gamma(\C,c:=\ell\cdot D^k,\max((\ell/F(k))\cdot D^k,\ell\cdot D^k/F(k))$ of \intersection over universe $\U$ in time $O(n^{2+o(1)})$ where $|\U|=n\cdot (\log n)^{O_k(1)}$. Also note that $|\C|=nk$.  

On the other hand, if $G_i$ was not a  $(n,n_{i^*},k,D^k,D^k/(c_{k}\cdot F(k)),(4(D^k)!)^{-1})$-panchromatic graph then we still obtain some instance of \intersection and the search algorithm would then output a witness if we are in the YES case of \intersection, which would not yield any meaningful solution to $\Gamma'$, and so we can discard it. 
\end{proof}

\section{Open Problems}\label{sec:open}

In this section, we highlight a few open problems. 

\subsection*{Closest Pair} In~\cite{KM20}, the authors constructed two kinds of panchromatic graphs\footnote{See \cref{km20}.}. First they constructed  $(n,m:=\polylog(n),2,t:=m^{\Omega(1)},t/\log n,1/n^{o(1)})$-panchromatic graphs by using the density and distance properties of low degree univariate polynomials. They also constructed $(n,\Theta(\log n),2,t:=\Omega(\log n),t(1-\varepsilon),1/\sqrt{n})$-panchromatic graphs (for some small constant $\varepsilon>0$) by using the density and distance properties of algebraic-geometric codes. The latter was used to prove conditional hardness of approximation results for 
the closest pair problem, where we are a set of $n$ points in $\mathbb{R}^d$ and we would like the closest pair of points in the  $\ell_p$-metric. Using the latter panchromatic graph, the authors showed that assuming \SETH, no  algorithm running in $n^{1.5-\delta(\varepsilon)}$ time can approximate the  closest pair problem to $(1+\varepsilon)$-factor. If there existed a $(n,m:=n^{o(1)},2,t:=\Omega(m),t(1-\varepsilon),1/n^{o(1)})$-panchromatic graph then it could prove the  subquadratic time inapproximability result for the closest pair problem\footnote{Both the panchromatic graphs constructed in~\cite{KM20} have the additional important property that they are biregular which is needed for proving lower bounds for the closest pair problem.}.

\subsection*{Hardness of \mincover.}
In Theorem~\ref{thm:ETH} we obtain tight running time lower bound for \intersection under \ETH but our inapproximability factor is weaker than the one ruled out by Lin~\cite{Lin18}. In Appendix~\ref{sec:mincover} we show a gap creating reduction for \intersection which starts from an instance of \mincover and reduces it to gap \intersection matching the inapproximability factors of~\cite{Lin18}.
Also, a tight running time lower bound is known for exact panchromatic \mincover under \ETH~\cite{KN21}. Is it possible to tweak our \PGC technique and use our construction of panchromatic graphs or design new panchromatic graphs or both, in order to reduce panchromatic \mincover to \mincover? If yes, then we could obtain a tight running time lower bound for \intersection under \ETH with inapproximability factors matching~\cite{Lin18}.

\subsection*{Biclique} Using a more intricate composition technique  and weaker objects than our threshold graphs, Lin~\cite{Lin18} showed that $k$-{\sf Biclique} problem is \W[1]-hard; in the $k$-{\sf Biclique} problem, we are given as input a balanced bipartite graph on $n$ vertices and the goal is to determine if it contains a $K_{k,k}$. Lin further showed that under \ETH, no $n^{o(\sqrt{k})}$ time algorithm can decide $k$-{\sf Biclique}.
However, if $(n,n,k,t:=O(k)),t-1,1/n)$-threshold graphs exist then we could obtain the tight time lower bound for  $k$-{\sf Biclique} under \ETH. Do such threshold graphs exist?

\subsection*{Derandomization} In this paper, we provide distributions from which we can efficiently sample panchromatic and threshold graphs. A natural derandomization question is to ask for explicit panchromatic and threshold graphs. 

\subsection*{Other Applications of Our Threshold Graphs} 
Norm-graphs have various applications in theoretical computer science such as proving lower bounds for span-programs~\cite{BGKRSW96,G01}, rectifier networks~\cite{Jukna13},  circuit lower bounds~\cite{JuknaS13}, and so on. But in each of these cases our threshold graph match the lower bound obtained by using norm-graphs. Is there an application in TCS where the stronger completeness property of threshold graphs comes in handy?
Also, somewhat speculatively, can our construction of (adjacency) matrices yield (semi-explicit) rigid matrices? If yes, this would be an excellent followup to~\cite{GT18}.

\subsection*{Other Applications of Our Panchromatic  Graphs} 
Our Panchromatic Graph Composition technique might be relevant with appropriate modifications to resolve various important complexity theoretic questions, such as the dichotomy conjecture of 
\cite{Grohe07} whose coloured variant was shown in \cite{CGL17}. 

\section*{Acknowledgements}

Boris Bukh was supported in part by U.S.\ taxpayers through NSF CAREER grant DMS-1555149. Karthik C.\ S.\ was financially supported by Subhash Khot’s Simons Investigator Award and by a grant from the Simons Foundation, Grant Number 825876, Awardee Thu D. Nguyen. Bhargav Narayanan was supported by NSF grants CCF-1814409 and DMS-1800521.

\bibliographystyle{alpha}
\bibliography{rand_alg_hardness}

\appendix

\section{From exact k\textsf{-MinCoverage} to gap k\textsf{-SetIntersection} via \textsf{TGC} technique}
\label{sec:mincover}

In this section, we generalize a  gap creation technique first appearing in \cite{Lin18}. 

\begin{theorem}[Generalization of Lin's Gap Creation technique from~\cite{Lin18}]
  There is an algorithm that given as input
  \begin{enumerate}
      \item an instance $\Gamma(\C,c,s)$ of \mincover over universe $[n]$, and
      \item an (n,m,c,t,r,1)-threshold graph $H(A,B)$, with $|A|=n$ and $|B|\le m$,
  \end{enumerate}
  then outputs an instance $\Gamma'(\C',t,r)$ of \intersection over universe $\U$ such that the following hold:
  \begin{description}
  \item[Size] $|\C'|=|\C|$ and $|\U|=|B|$.
\item[Completeness] If there exists $k$ sets $S_{i_1},\ldots ,S_{i_k}$ in $\C$ such that \[\left|\underset{{r\in[k]}}{\bigcup}S_{i_r}\right|\le c,\] then there exists $k$ sets $S_{i_1}',\ldots ,S_{i_k}'$ in $\C'$ such that \[\left|\underset{{r\in[k]}}{\bigcap}S_{i_r}'\right|\ge t,\]
	\item[Soundness] If for every $k$ sets $S_{i_1},\ldots ,S_{i_k}$ in $\C$ we have \[\left|\underset{{r\in[k]}}{\bigcup}S_{i_r}\right|\ge s,\] then for every $k$ sets $S_{i_1}',\ldots ,S_{i_k}'$ in $\C'$ we have \[\left|\underset{{r\in[k]}}{\bigcap}S_{i_r}'\right|\le r,\]
	\item[Running Time] The reduction runs in $\tilde{O}(n^2 m)$ time.
  \end{description}
\end{theorem}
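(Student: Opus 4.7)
The plan is to define a simple, direct mapping of sets based on common neighbourhoods in the threshold graph, essentially dualising union into intersection. Set $\U = B$, identify $A$ with $[n]$, and for each $S \in \C$ define
\[
S' \;=\; \bigcap_{a \in S} N_H(a) \;\subseteq\; B,
\]
where $N_H(a)$ denotes the neighbourhood of $a$ in $H$; let $\C' = \{S' : S \in \C\}$. The key identity, which follows immediately from the definition, is that for any $k$-tuple $S_{i_1},\dots,S_{i_k} \in \C$,
\[
\bigcap_{r \in [k]} S'_{i_r} \;=\; \bigcap_{a \in \bigcup_{r \in [k]} S_{i_r}} N_H(a),
\]
so intersections in $\C'$ are precisely common neighbourhoods in $H$ of the union of the chosen sets.

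For completeness, suppose there exist $k$ sets in $\C$ with $|\bigcup_r S_{i_r}| \le c$. Since $|A| = n \ge c$ (otherwise the statement is vacuous), we may extend $\bigcup_r S_{i_r}$ to some $c$-set $T \subseteq A$. Because $H$ is a $(n,m,c,t,r,1)$-threshold graph, \emph{every} $c$-set has a common neighbourhood of size at least $t$, so $|\bigcap_{a \in T} N_H(a)| \ge t$. By monotonicity, shrinking from $T$ to $\bigcup_r S_{i_r}$ only enlarges the common neighbourhood, and the identity above gives $|\bigcap_r S'_{i_r}| \ge t$.

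For soundness, suppose that for every $k$-tuple $S_{i_1},\dots,S_{i_k}\in\C$ we have $|\bigcup_r S_{i_r}| \ge s$. Since $s > c$ means $s \ge c+1$, pick any $(c+1)$-subset $T \subseteq \bigcup_r S_{i_r}$. By the threshold graph's soundness property, $|\bigcap_{a \in T} N_H(a)| \le r$, and again the identity together with monotonicity (a larger set of constraints can only shrink the common neighbourhood) yields $|\bigcap_r S'_{i_r}| \le r$. The size bounds $|\C'| = |\C|$ and $|\U| = |B|$ are immediate from the construction. For the running time, we first read off $H$ in $O(nm)$ time; then for each of the $|\C| \le n$ sets $S$, computing $S' = \bigcap_{a \in S} N_H(a)$ is an intersection of at most $n$ subsets of a universe of size $m \le n$, which can be done in $\tilde O(nm)$ time per set, giving a total of $\tilde O(n^2 m)$. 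No step is really an obstacle; the only thing to be careful about is the direction of monotonicity in each case and the implicit requirement $s \ge c+1$, which is guaranteed by the assumption $c < s$ built into the definition of \mincover.
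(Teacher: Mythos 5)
Your proposal is correct and follows essentially the same route as the paper's proof: both define $S'$ as the common neighbourhood in $H$ of (the image of) $S$, and both reduce the completeness and soundness claims to the threshold graph's properties applied to the union of the chosen sets. If anything, you are slightly more careful than the paper in the completeness case, where you explicitly extend the union to a $c$-set and invoke monotonicity rather than applying the completeness property of $H$ directly to a set of size possibly less than $c$.
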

\begin{proof}
We need to first define the edge set $E$ of the output bipartite graph $G$. Let $\sigma\colon \C'\to \C$ and $\pi:[n]\to A$ be some canonical one-to-one mappings. We include  in $S'\in \C'$ the universe element $u\in \U=B$ if and only if for every element $i_j$ in $\sigma(S'):=\{i_{1},\ldots ,i_{d}\}\subset [n]$, there is  an edge between $\pi(i_j)$ and $u\in B$ in the graph graph $H$. 

We analyze the completeness case by assuming there exists $k$ sets $S_{i_1},\ldots ,S_{i_k}$ in $\C$ such that \[\left|\underset{{r\in[k]}}{\bigcup}S_{i_r}\right|\le c.\] We claim that the $k$ sets $\sigma^{-1}(S_{i_1}),\ldots, \sigma^{-1}(S_{i_k})$ in $\C'$ have at least intersection size of $t$. Let $S:=\underset{{r\in[k]}}{\cup}S_{i_r}$ (where $|S|\le c$). Let $\hat S:=\{\pi(i)\mid i\in S\}\subset A$. Let $T\subset B$ be the set of common neighbors of $\hat S$ in $H$. 

From the threshold graph property of $H$, we have that $|T|\ge t$. We claim that every element in $T$ is contained in every set in $\{\sigma^{-1}(S_{i_1}),\ldots, \sigma^{-1}(S_{i_k})\}$. To see this, fix $u\in T$ and $j\in [k]$. Since $u$ is a common neighbor of $\hat S$ in $H$, it is also a common neighbor of every subset of $\hat S$ in $H$. Thus, $u$ is contained in $  \{\pi(i)\mid i\in S_j\}$.

Next consider the soundness case by assuming that for every $k$ sets $S_{i_1},\ldots ,S_{i_k}$ in $\C$ we have \[\left|\underset{{r\in[k]}}{\bigcup}S_{i_r}\right|\ge s.\] Consider any $k$ sets $S_1',\ldots ,S_k'$ in $V$ and fix an arbitrary universe element $u\in \U$. 

We have that $u$ is contained in the all the sets in $\{S_1',\ldots ,S_k'\}$ if and only if $u$ is a common neighbor of $\sigma(S_j')$ (and then applying $\pi$ on each of elements of $\sigma(S_j')$) in $H$ for every $j\in[k]$. In other words, $u$ is a common neighbor of $\underset{j\in[k]}{\cup} \pi\circ\sigma(S_j')$ in $H$. But we know from the soundness case assumption that \[\left|\underset{j\in[k]}{\bigcup} \pi\circ\sigma(S_j')\right|\ge s\ge c+1.\] From the threshold graph soundness property of $H$ we then have that $\underset{j\in[k]}{\cup} \pi\circ\sigma(S_j')$ can have at most $r$ common neighbors in $H$. Thus, $\{S_1',\ldots ,S_k'\}$ have at most intersection size of $r$.
\end{proof}

Finally, we note that an instance $\Gamma(\C,k,k+1)$ of \mincover over universe $[n]$ is \W[1]-hard to decide (follows from a straightforward reduction from the \clique problem).

\end{document}